\definecolor{Greenish}{RGB}{34, 139 , 34}
\definecolor{Blueish}{RGB}{39,64, 139}
\newtheorem{theorem}{Theorem}[section]
\newtheorem{lemma}[theorem]{Lemma}
\newtheorem{corollary}[theorem]{Corollary}
\theoremstyle{definition}
\newtheorem{definition}{Definition}[section]
\newtheorem{remark}[definition]{Remark}
\numberwithin{equation}{section}
\def\b1{\mathds{1}}
\def\1{\mathds{1}}
\def\Re{\mathrm{Re}}
\def\a0{\mathfrak{a}}
\newcommand{\vertiii}[1]{{\left\vert\kern-0.25ex\left\vert\kern-0.25ex\left\vert #1 
    \right\vert\kern-0.25ex\right\vert\kern-0.25ex\right\vert}}
\begin{document}
\title[Generating function for quantum depletion of Bose-Einstein condensates]{Generating function for quantum depletion of Bose-Einstein condensates}

\author[S. Rademacher]{Simone Rademacher}
\address{Department of Mathematics, LMU Munich, Theresienstrasse 39, 80333 Munich, Germany} 
\email{simone.rademacher@math.lmu.de}

\date{\today}

\begin{abstract} We consider a Bose gas on the unit torus at zero temperature in the Gross-Pitaevskii regime, known to perform Bose-Einstein condensation: a macroscopic fraction of the bosons occupy the same quantum state, called condensate. We study the Bose gas' quantum depletion, that is the number of bosons outside the condensate, and derive an explicit asymptotic formula of its generating function. Moreover, we prove an upper bound for the tails of the quantum depletion. 
\end{abstract}

\maketitle
\section{Introduction and Results}

\subsection{Introduction}
 
As predicted by Bose \cite{Bose-24} and Einstein \cite{Einstein-24}, and later also observed experimentally \cite{WC-95,K-95}, trapped Bose gases show a peculiar phase transition at extremely low temperatures: a macroscopic fraction of the particle condense into the same one-particle quantum state, called Bose-Einstein condensate. This paper is dedicated to the mathematical description of Bose-Einstein condensates. 

For this, we consider $N$ bosons on the unit torus $\pi = \mathbb{T}^3$ in three dimensions described on $L_s^2( \pi^N)$, the symmetric subspace of $L^2( \pi^N)$ through the  Hamiltonian 
\begin{align}
\label{def:ham}
H_N = \sum_{i=1}^N ( - \Delta_i ) + \frac{1}{N} \sum_{i=1}^N V_N (x_i - x_j)   \; . 
\end{align}
where in the following, we assume that the two-body interaction $V$ is an element of $L^3( \pi)$, non-negative, compactly supported and spherically symmetric. The two-body interaction $V_N (x) = N^3 V (N x)$ depends on the distance of the $i$-th and $j$-th particle on the torus $\mathbb{T}^3$ only, and scales with the total number of particles $N$ and models approximate delta-interaction in the large particle limit and corresponds to the Gross-Pitaevskii regime. At zero temperature, the Bose gas relaxes to the ground state. In fact under the conditions on $V$ formulated above, the Hamiltonian $H_N$ is on the domain of smooth functions well defined, bounded by below and, by Friedrich's method, extendable to a self-adjoint operator (that, in abuse of notation, we call $H_N$ in the following, too). The unique (up to complex phase) ground state $\psi_N$ of $H_N$ is well known to exhibit Bose-Einstein condensation: a macroscopic fraction of the particles condense into the same (one-particle) quantum state, the so-called Bose-Einstein condensate, in our setting given by $\varphi = 1_\pi \in L^2( \pi)$. Mathematically, the property of Bose-Einstein condensation is formulated in terms of the quantum depletion, that is the operator
\begin{align}
\mathcal{N}_+ = \sum_{i=1}^N Q_i , \quad \text{with} \quad Q =  1- \vert \varphi \rangle \langle \varphi \vert  \label{def:Q}
\end{align}
and where $Q_i$ is the operator that acts as the projection $Q$ on the $i$-th particle. Thus, by definition \eqref{def:Q}, the quantum depletion counts the number of particles outside of the condensate. Then, the ground state $\psi_N$ of \eqref{def:ham} is said to satisfy Bose-Einstein condensation, if 
\begin{align}
\frac{\mathbb{E}_{\psi_N} \big[ \mathcal{N}_+  \big]}{N} =& \frac{\langle \psi_N, \mathcal{N}_+ \psi_N \rangle}{N} \rightarrow 0 \quad \text{as} \quad N \rightarrow \infty \; . \label{eq:BEC}
\end{align}
The property of Bose-Einstein condensation \label{eq:BEC} was first proven by \cite{LS-02}. Later, the rate of convergence of \eqref{eq:BEC} was studied in great detail since then \cite{BCCS_cond,BCCS,BCCS_optimal,H,HST}  and showed  to be $O(1/N)$. In the past decades, the property of Bose-Einstein condensation and excitations beyond the condensate have been studied in various settings: we refer to \cite{LS,NRS,NT,BSS_bogo, BSS_optimal,NNRT} for generalizations for Bose gases on $\mathbb{R}^3$ trapped through an external potential. 

The analysis of the number of excitations beyond the condensate, counted by the operator $\mathcal{N}_+$, is based on a mathematical verification of Bogoliubov's theory \cite{Bogoliubov-47} on the Bose gas' excitation spectrum given in \cite{BCCS}. To be more precise \cite{BCCS} proves an asymptotic formula ofthe expectation of $\mathcal{N}_+$ in the large particle limit given in terms of the interaction potential $V$'s scattering length, defined through the solution $f$ of the potential's scattering equation by 
\begin{align}
\a0 = \int V(x) f(x) dx \quad \text{where} \quad \bigg( - \Delta +  \frac{1}{2} V \bigg) f =0, \quad \text{and} \quad f(x) \xrightarrow{ \vert x \vert \rightarrow \infty} 1  \; . 
\end{align}
To be more precise, \cite{BCCS} shows that the expectation of the quantum depletion is asymptotically 
\begin{align}
\label{def:mu}
\mu := \lim_{N \rightarrow \infty} \mathbb{E}_{\psi_N} \big[ \mathcal{N}_+ \big] = \sum_{p \in   \pi_+^*} \sinh^2 ( \nu_p) 
\end{align}
with 
\begin{align}
\label{def:nu}
\nu_p := \frac{1}{4} \log \bigg( \frac{p^2}{p^2 + 16 \pi \a0} \bigg) , \quad \text{and}\quad  \pi^*_+ = (2 \pi \mathbb{Z})^3 \setminus \lbrace 0 \rbrace 
\end{align}
and thus, in particular, in the large particle limit $\mathcal{O}(1)$. Note that the mean $\lambda$ in \eqref{def:mu} is defined in momentum space $\pi^*$, as in our setting it turns out to be more convenient to work in momentum instead of position space.

\subsection{Results} We further improve the characterization of the Bose gas' quantum depletion and compute an asymptotic formula for the generating function. 

\begin{theorem}
\label{thm:mgf}
Let $V \in L^3( \pi)$ be non-negative, compactly supported and spherically symmetric and $\psi_N$ denote the ground state of $H_N$ defined in \eqref{def:ham}. Then there exists $\lambda_0 >0$ (depending on $V$ only), for all $ \vert \lambda \vert \leq \lambda_0 $, we have 
\begin{align}
\label{eq:thm-mgf}
\lim_{N \rightarrow \infty} \mathbb{E}_{\psi_N} \big[ e^{\lambda \mathcal{N}_+ } \big] =  e^{\Lambda ( \lambda ) } \; 
\end{align}
where $\Lambda : [-\lambda_0, \lambda_0] \rightarrow \mathbb{R}$ is a convex function given by 
\begin{align}
\label{def:Lambda}
\Lambda ( \lambda ) =&  \int_0^\lambda \sum_{p \in \pi_+^* } \cosh^2(\nu_p) \sinh^2( \nu_p) \frac{  2 \cosh^2( \nu_p) (\cosh( 2\kappa) - 1)  - e^{-2\kappa} + 1}{  1 -2 \cosh^2(\nu_p) \sinh^2( \nu_p) (\cosh(2\kappa) - 1)  }  d\kappa  \notag \\
&+ \lambda \sum_{p \in \pi_+^*} \sinh^2 (\nu_p)  \; . 
\end{align}
Moreover, for $k \in \mathbb{N}$, we have 
\begin{align}
\lim_{N \rightarrow \infty}\mathbb{E}_{\psi_N} \big[ \mathcal{N}_+^k \big] = \frac{d^k}{d\lambda^k} \; e^{  \Lambda ( \lambda) } \bigg\vert_{\lambda =0} \; . \label{eq:Nhochk}
\end{align}
\end{theorem}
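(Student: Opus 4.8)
The plan is to transfer the computation to the bosonic Fock space of excitations. Following \cite{BCCS}, one conjugates by the excitation map $U_N$ and then by the renormalizing transformations of that paper --- a generalized Bogoliubov transformation $e^{B}$ built from the scattering solution, together with a cubic transformation $e^{A}$ --- so that the ground state becomes a vector $\xi_N$ that converges to the Fock vacuum $\Omega$ with uniform control of the moments $\langle \xi_N, \mathcal{N}_+^k \xi_N\rangle$, while the number operator transforms as $\mathcal{N}_+ \mapsto T^* \mathcal{N}_+ T + \mathcal{E}_N$, where $T$ is the Bogoliubov transformation with angles $\nu_p$, $T^* a_p T = \cosh(\nu_p) a_p + \sinh(\nu_p) a_{-p}^*$, and $\mathcal{E}_N$ collects the remaining (cubic and lower-order) contributions, negligible on states with $O(1)$ excitations. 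Since $U_N$ commutes with $\mathcal{N}_+$, the tilt $e^{\lambda \mathcal{N}_+}$ passes through $U_N$, and conjugating by $e^{B} e^{A}$ turns it into $e^{\lambda(T^*\mathcal{N}_+ T + \mathcal{E}_N)}$. The identity behind \eqref{def:Lambda}, to be checked separately, is $\Lambda(\lambda) = \log \langle \Omega, e^{\lambda\, T^*\mathcal{N}_+ T}\,\Omega\rangle$.

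Rather than handling an exponential of $T^*\mathcal{N}_+ T + \mathcal{E}_N$ head on, I would argue via a flow. With $F_N(\lambda) = \mathbb{E}_{\psi_N}[e^{\lambda \mathcal{N}_+}]$ and $\psi_N^{(\kappa)} = e^{\kappa \mathcal{N}_+/2}\psi_N / \| e^{\kappa \mathcal{N}_+/2}\psi_N\|$ one has $\tfrac{d}{d\kappa}\log F_N(\kappa) = \langle \psi_N^{(\kappa)}, \mathcal{N}_+ \psi_N^{(\kappa)}\rangle$, hence $\log F_N(\lambda) = \int_0^\lambda \langle \psi_N^{(\kappa)}, \mathcal{N}_+ \psi_N^{(\kappa)}\rangle\, d\kappa$, and it suffices to prove $\langle \psi_N^{(\kappa)}, \mathcal{N}_+ \psi_N^{(\kappa)}\rangle \to \Lambda'(\kappa)$ uniformly for $|\kappa|\le\lambda_0$ and then exchange limit and integral. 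In the (excitation, dressed) picture $\psi_N^{(\kappa)}$ corresponds, up to normalization, to $e^{\kappa(T^*\mathcal{N}_+ T + \mathcal{E}_N)/2}\xi_N$ and $\mathcal{N}_+$ to $T^*\mathcal{N}_+ T + \mathcal{E}_N$, so that $\langle \psi_N^{(\kappa)}, \mathcal{N}_+ \psi_N^{(\kappa)}\rangle$ equals
\begin{align*}
\frac{\langle \xi_N,\, e^{\kappa(T^*\mathcal{N}_+ T + \mathcal{E}_N)/2}\,(T^*\mathcal{N}_+ T + \mathcal{E}_N)\, e^{\kappa(T^*\mathcal{N}_+ T + \mathcal{E}_N)/2}\, \xi_N\rangle}{\langle \xi_N,\, e^{\kappa(T^*\mathcal{N}_+ T + \mathcal{E}_N)}\, \xi_N\rangle},
\end{align*}
and replacing $\xi_N$ by $\Omega$ and dropping $\mathcal{E}_N$ gives the limit $\tfrac{d}{d\kappa}\log\langle \Omega, e^{\kappa T^*\mathcal{N}_+ T}\Omega\rangle = \Lambda'(\kappa)$.

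It remains to evaluate $\langle \Omega, e^{\kappa T^*\mathcal{N}_+ T}\Omega\rangle$, which factorizes over the momentum pairs $\{p,-p\}$: a short computation gives, on each such pair, $T^*\mathcal{N}_+ T = \cosh(2\nu_p)(a_p^* a_p + a_{-p}^* a_{-p}) + \sinh(2\nu_p)(a_p^* a_{-p}^* + a_p a_{-p}) + 2\sinh^2(\nu_p)$, and $T\Omega$ restricted to that pair is the two-mode squeezed vacuum $(\cosh\nu_p)^{-1}\sum_{n\ge0}(-\tanh\nu_p)^n |n,n\rangle$, so that a geometric summation yields
\begin{align*}
\langle \Omega, e^{\kappa T^*\mathcal{N}_+ T}\Omega\rangle = \prod_{p \in \pi_+^*} \big( \cosh^2(\nu_p) - \sinh^2(\nu_p)\, e^{2\kappa}\big)^{-1/2},
\end{align*}
convergent precisely for $|\kappa|\le\lambda_0$ with $e^{2\lambda_0} < \inf_p \coth^2(\nu_p)$, using also $\sum_p \sinh^2(\nu_p) = \mu < \infty$. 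Differentiating the logarithm gives $\Lambda'(\kappa) = \sum_p \sinh^2(\nu_p) + \sum_p \cosh^2(\nu_p)\sinh^2(\nu_p) \frac{e^{2\kappa}-1}{\cosh^2(\nu_p) - \sinh^2(\nu_p) e^{2\kappa}}$, and the elementary identity
\begin{align*}
\frac{e^{2\kappa}-1}{\cosh^2(\nu_p) - \sinh^2(\nu_p) e^{2\kappa}} = \frac{2\cosh^2(\nu_p)(\cosh(2\kappa)-1) - e^{-2\kappa} + 1}{1 - 2\cosh^2(\nu_p)\sinh^2(\nu_p)(\cosh(2\kappa)-1)},
\end{align*}
obtained by multiplying numerator and denominator of the left side by $\cosh^2(\nu_p) - \sinh^2(\nu_p) e^{-2\kappa}$ and using $\cosh^2 - \sinh^2 = 1$, turns $\Lambda'$ into the integrand of \eqref{def:Lambda}; integrating from $0$ recovers \eqref{def:Lambda}. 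Convexity of $\Lambda$ is then immediate, since $\Lambda(\lambda) = \log\langle \Omega, e^{\lambda X}\Omega\rangle$ with $X = T^*\mathcal{N}_+ T \ge 0$ self-adjoint, so $\Lambda''$ is a variance. Finally, for \eqref{eq:Nhochk}: the a priori bound $\sup_N \mathbb{E}_{\psi_N}[e^{\lambda_0 \mathcal{N}_+}] < \infty$ --- the exponential tail estimate, established by the same transformation technique together with the moment bounds for the dressed ground state --- shows that $F_N$ is entire and uniformly bounded on the disc $\{|\lambda| < \lambda_0\} \subset \mathbb{C}$, so by Vitali's theorem the pointwise convergence $F_N \to e^{\Lambda}$ upgrades to locally uniform convergence, whence $\mathbb{E}_{\psi_N}[\mathcal{N}_+^k] = F_N^{(k)}(0) \to \tfrac{d^k}{d\lambda^k} e^{\Lambda(\lambda)}\big|_{\lambda=0}$.

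The main obstacle is the replacement of $\xi_N$ by $\Omega$ and the removal of $\mathcal{E}_N$ inside the exponentials: a priori the tilt $e^{\kappa \mathcal{N}_+/2}$ could inflate the number of excitations by an amount growing with $N$, which would destroy the negligibility of the cubic and remainder terms. I expect this to require a Gr\"onwall-type estimate controlling $\kappa \mapsto \langle \psi_N^{(\kappa)}, \mathcal{N}_+^j \psi_N^{(\kappa)}\rangle$ uniformly in $N$ for each $j$ --- its $\kappa$-derivative produces commutators of $\mathcal{N}_+$ with the conjugated Hamiltonian, again bounded by powers of $\mathcal{N}_+$ --- so that the tilted state remains a uniformly small perturbation of the quasi-free vector $e^{\kappa T^*\mathcal{N}_+ T/2}\Omega / \| e^{\kappa T^*\mathcal{N}_+ T/2}\Omega\|$; $\lambda_0$ may need to be shrunk accordingly. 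The uniform exponential tail bound is the other substantial ingredient and is most naturally established first.
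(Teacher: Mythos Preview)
Your proposal is correct and shares the paper's overall architecture --- reduce to the quasi-free expectation $\langle \Omega, e^{-K_\nu} e^{\lambda \mathcal{N}_+} e^{K_\nu}\Omega\rangle$ via the BCCS norm approximation together with the a priori exponential bound from \cite{NR}, then evaluate that expectation explicitly --- but differs in two places. First, your evaluation of the quasi-free generating function via the two-mode squeezed vacuum and a geometric series, yielding the closed form $\prod_{p}(\cosh^2\nu_p - \sinh^2\nu_p\, e^{2\kappa})^{-1/2}$ and then the algebraic reduction to the integrand of \eqref{def:Lambda}, is more direct than the paper's route: the paper instead sets up and solves a first-order ODE for $G(\lambda)=\langle \Omega, e^{-K_\nu} e^{\lambda \mathcal{N}_+} e^{K_\nu}\Omega\rangle$ by introducing auxiliary functions $F_p(\lambda)=\langle \Omega, a_{-p}a_p\, e^{-K_\nu} e^{\lambda \mathcal{N}_+} e^{K_\nu}\Omega\rangle$ and deriving a linear relation between $F_p$ and $G$. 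The ODE method is less elementary here but is what the paper actually needs for the general-observable Theorem~\ref{thm:lde-O}, where no product structure is available. Second, your Vitali argument for \eqref{eq:Nhochk} is cleaner than the paper's, which instead proves a version of the approximation lemma with an extra factor $\mathcal{N}_+^k$ inserted and reads off each moment separately.

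For the approximation step you organize things as a flow in $\kappa$ and propose a Gr\"onwall bound on $\langle \psi_N^{(\kappa)}, \mathcal{N}_+^j \psi_N^{(\kappa)}\rangle$; the paper does not do this. It works directly with the norm approximation $\|\psi_N - e^{i\omega} e^{-B_\eta} e^{-A_\eta} e^{-B_\tau}\Omega\| \le CN^{-1/4}$ and the bound $\langle\psi_N, e^{\lambda\mathcal{N}_+}\psi_N\rangle\le C$, and then shows (by separate Gr\"onwall arguments on the Fock space, your instinct is right there) that each of the three transformations $e^{B_\eta}$, $e^{A_\eta}$, $e^{B_\tau}$ and the limiting $e^{K_\nu}$ preserve $e^{\kappa\mathcal{N}_+}$ up to constants. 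These exponential stability lemmas are exactly the ``substantial ingredient'' you flag at the end, and once in hand they let one strip off the cubic transformation and replace the modified Bogoliubov transforms by the standard one $e^{K_\nu}$, with explicit $O(N^{-1/4})$ errors. Your flow route would reach the same destination but threads the estimates differently.
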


\begin{remark}
We collect some remarks on Theorem \ref{thm:mgf}. 
\begin{enumerate}
\item[(i)] First we note that the r.h.s. of \eqref{def:Lambda} is finite for sufficiently small $\lambda_0>0$, since  $ \sinh ( \nu_p ) \in \ell^2( \pi_+^*)$ by definition of $\nu_p$ in \eqref{def:nu}. 
\item[(ii)] In the proof of Theorem \ref{thm:mgf} we establish a rate of convergence of \eqref{eq:thm-mgf} that is $\mathcal{O} ( N^{-1/4})$ (see \eqref{eq:mfg-approx}).
\item[(iii)] In Theorem \ref{thm:mgf}, more precisely \eqref{eq:Nhochk}, we prove that the asymptotics of $\mathbb{E}_{\psi_N} \big[ e^{\lambda \mathcal{N}_+ }\big]$ can indeed be understood as a generating function of the asymptotic number of excitations. Through \eqref{eq:Nhochk} computations of any moment of the number of excitations $\mathcal{N}_+$ reduces to taking derivatives of the function $\Lambda$ defined in \eqref{def:Lambda}. 
\item[(iv)] In the proof, we embed the problem in the bosonic Fock space (see Section \ref{sec:prel} and Section \ref{sec:proofs} for more details) where we can write the operator $\mathcal{N}_+$ in terms of the creation and annihilation operators as  $\mathcal{N}_+ = \sum_{p \in \pi_+^*} a_p^*a_p$ with $\pi_+^* = 2 \pi \mathbb{Z}^3 \setminus \lbrace 0 \rbrace$ . It turns out that Theorem \ref{thm:mgf} then easily generalizes to operators of the form $ \sum_{p \in \pi_+^*} \tau_p a_p^*a_p$ for sequences $\tau_p \in \ell^2( \pi_+^*)$ such that $\cosh( 2 \lambda \tau_p) - 1 < 1/(2 \sinh^2( \nu_p) \cosh^2( \nu_p))$ for all $p \in \pi_+^*$ and all $\vert \lambda \vert \leq \lambda_0$. Then, the corresponding moment generating function satisfies 
\begin{align}
	\lim_{N \rightarrow \infty} \mathbb{E}_{\psi_N} \bigg[ e^{\lambda \sum_{p \in \pi_+^*} \tau_p a_p^*a_p } \bigg] =  e^{\widetilde{\Lambda} ( \lambda ) }
\end{align}
where 
\begin{align}
\label{def:Lambda}
\widetilde{\Lambda} ( \lambda ) =&  \int_0^\lambda \sum_{p \in \pi_+^* } \tau_p \cosh^2(\nu_p) \sinh^2( \nu_p) \frac{  2 \cosh^2( \nu_p) (\cosh( 2\kappa \tau_p) - 1)  - e^{-2\kappa \tau_p} + 1}{  1 -2 \cosh^2(\nu_p) \sinh^2( \nu_p) (\cosh(2\kappa \tau_p) - 1)  }  d\kappa  \notag \\
&+ \lambda \sum_{p \in \pi_+^*} \tau_p \sinh^2 (\nu_p)  \; . 
\end{align}
\end{enumerate}
\end{remark}

\subsubsection*{Exponential and Moment bounds} Theorem \ref{thm:mgf} provides a detailed description of the number of excitations through an asymptotic formula of the generating function. As an immediate consequence, we recover back \eqref{def:mu} and, moreover, any moment of the number of excitation is bounded by a constant independent in the total number of particles $N$, i.e. 
\begin{align}
\mathbb{E} \big[ \mathcal{N}_+^k  \big] \leq C_k \label{eq:moments}
\end{align}
for $C_k >0$ and, moreover 
\begin{align}
\mathbb{E} \big[ e^{\lambda \mathcal{N}_+} \big] \leq C_1 e^{C_2 \lambda} \; .\label{eq:expbounds}
\end{align}
for small $\lambda >0$ and $C_1,C_2 >0$. With \eqref{eq:moments} and \eqref{eq:expbounds} we recover back earlier results (see \cite{BCCS} for \eqref{eq:moments} resp. \cite{NR} for \eqref{eq:expbounds}). Note, however, that the proof of Theorem \ref{thm:mgf} relies on the apriori bound exponential of the form \eqref{eq:exp-bound} from \cite{NR}.

\subsubsection*{Limiting distribution}  The existence of an asymptotic formula for the characteristic function  $\mathbb{E} \big[ e^{{\rm i} \lambda\mathcal{N}_+} \big]$, where $\lambda \in \mathbb{R}$ and ${\rm i}$ denotes the imaginary unit, implies, by Levy's continuity theorem, that the random variable $\mathcal{N}_+$ has in distribution an asymptotic limit. We note that Theorem \ref{thm:mgf} establishes an asymptotic formula for the moment generating function $\mathbb{E} \big[ e^{\lambda \mathcal{N}_+} \big]$ only for real values of $\lambda$, since its proof relies on the solution of an ordinary differential equation defined over the real numbers. Nonetheless, we conjecture that the asymptotic behavior also holds for purely imaginary values of $\lambda$, and thus that the random variable $\mathcal{N}_+$ has in distribution an asymptotic limit for $N \rightarrow \infty$. 



\subsubsection*{Characterization of tails} As a consequence of Theorem \ref{thm:mgf} (more specifically \eqref{eq:Nhochk}), the deviation of $\mathcal{N}_+$ from its asymptotic expectation value $\mu$, defined in \eqref{def:mu}, is in the large particle limit 
\begin{align}
\label{def:sigma}
 \sqrt{\mathbb{E}_{\psi_N} \big[ \big( \mathcal{N}_+ - \mu \big)^2 \big]} \xrightarrow[N \rightarrow \infty]{} \sigma, \; \quad \text{with} \quad \sigma^2 := 2  \sum_{p \in \pi_+^*} \sinh^2( \nu_p) \cosh^2 ( \nu_p)  
\end{align}
(see Section \ref{sec:lln-lde} for the proof of \eqref{def:sigma}). In particular, the deviation does not vanish in the limit $N \rightarrow \infty$. Related to that we show in the following Corollary that $\mathcal{N}_+$ does not converge in probability to its asymptotic expectation value $\mu = \lim_{N \rightarrow \infty} \mathbb{E} \big[ \mathcal{N}_+ \big]$. This statement is complemented by an asymptotic upper bound for the tails of the number of excitations.

\begin{corollary}
\label{cor:lde}
Under the same assumptions as in Theorem \ref{thm:mgf}, assume that $\a0 \not= 0$. Then, there exists $n, \varepsilon_n >0$ such that 
\begin{align}
\label{eq:lln}
\liminf_{N \rightarrow \infty }\mathbb{P}_{\psi_N} \big[  \big\vert \mathcal{N}_+ - \mu  \big\vert > n \big]  > \varepsilon_n \; ,  
\end{align}
where $\mu$ is defined by \eqref{def:mu}, and 
\begin{align}
\limsup_{N \rightarrow \infty}  \mathbb{P}_{\psi_N} \big[ \mathcal{N}_+ - \mu \geq n \big] \leq e^{\inf_{\lambda \in (0, \lambda_0]} \big[ - (n - \mu) \lambda  + \Lambda( \lambda ) \big]} \;  . \label{eq:lde}
\end{align}

\end{corollary}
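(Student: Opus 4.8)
The plan is to derive both statements directly from Theorem \ref{thm:mgf}, in particular from the convergence \eqref{eq:thm-mgf} of the moment generating function and the moment convergence \eqref{eq:Nhochk}. For the upper tail bound \eqref{eq:lde}, I would run the standard Cramér/Chernoff argument: for $\lambda \in (0,\lambda_0]$ and any fixed $n$, Markov's inequality applied to $e^{\lambda \mathcal{N}_+}$ gives $\mathbb{P}_{\psi_N}[\mathcal{N}_+ - \mu \geq n] \leq e^{-(n-\mu)\lambda} \, \mathbb{E}_{\psi_N}[e^{\lambda \mathcal{N}_+}]$. Taking $N \to \infty$ and using \eqref{eq:thm-mgf} yields $\limsup_N \mathbb{P}_{\psi_N}[\mathcal{N}_+ - \mu \geq n] \leq e^{-(n-\mu)\lambda + \Lambda(\lambda)}$; optimizing over $\lambda \in (0,\lambda_0]$ gives \eqref{eq:lde}. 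This part is routine and carries no real obstacle beyond the already-established Theorem \ref{thm:mgf}.

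For the anti-concentration statement \eqref{eq:lln}, the idea is that if $\mathcal{N}_+$ concentrated at $\mu$ then its limiting variance would be zero, contradicting $\sigma^2 > 0$. Concretely, from \eqref{eq:Nhochk} with $k=1,2$ one gets $\mathbb{E}_{\psi_N}[\mathcal{N}_+] \to \Lambda'(0) = \mu$ and $\mathbb{E}_{\psi_N}[\mathcal{N}_+^2] \to \Lambda''(0) + \Lambda'(0)^2$, hence $\mathbb{E}_{\psi_N}[(\mathcal{N}_+ - \mu)^2] \to \Lambda''(0) = \sigma^2$ with $\sigma^2 = 2\sum_{p} \sinh^2(\nu_p)\cosh^2(\nu_p)$ as in \eqref{def:sigma}; the assumption $\a0 \neq 0$ ensures $\nu_p \neq 0$ for at least one (in fact all) $p \in \pi_+^*$, so $\sigma^2 > 0$. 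Now suppose, for contradiction, that for every $n \in \mathbb{N}$ one has $\liminf_N \mathbb{P}_{\psi_N}[|\mathcal{N}_+ - \mu| > n] = 0$, i.e. along a subsequence $\mathcal{N}_+ \to \mu$ in probability. Combined with the uniform integrability of $(\mathcal{N}_+ - \mu)^2$ — which follows from the uniform moment bound $\mathbb{E}_{\psi_N}[\mathcal{N}_+^4] \leq C_4$ coming from \eqref{eq:moments}/\eqref{eq:Nhochk}, giving uniform $L^2$-boundedness of $(\mathcal{N}_+-\mu)^2$ hence uniform integrability — convergence in probability upgrades to $L^2$-convergence, forcing $\mathbb{E}_{\psi_N}[(\mathcal{N}_+-\mu)^2] \to 0$ along that subsequence, contradicting $\sigma^2 > 0$. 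Therefore there exists $n$ with $\liminf_N \mathbb{P}_{\psi_N}[|\mathcal{N}_+ - \mu| > n] =: 3\varepsilon_n' > 0$, and one sets $\varepsilon_n = \varepsilon_n'$.

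Alternatively, and perhaps cleaner, I would avoid subsequences by arguing quantitatively: a second-moment/Paley–Zygmund-type inequality shows that if $X \geq 0$ has $\mathbb{E}[(X-\mathbb{E}X)^2]$ bounded below and $\mathbb{E}[(X-\mathbb{E}X)^4]$ bounded above, then $\mathbb{P}[|X - \mathbb{E}X| > n]$ is bounded below for suitable $n$. Applying this with $X = \mathcal{N}_+$, using $\mathbb{E}_{\psi_N}[(\mathcal{N}_+ - \mathbb{E}_{\psi_N}\mathcal{N}_+)^2] \to \sigma^2 > 0$ and the uniform fourth-moment bound, produces explicit $n$ and $\varepsilon_n$ uniformly in $N$ large, and then replacing $\mathbb{E}_{\psi_N}\mathcal{N}_+$ by its limit $\mu$ (valid since $\mathbb{E}_{\psi_N}\mathcal{N}_+ \to \mu$) gives \eqref{eq:lln}.

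The main obstacle is the anti-concentration bound \eqref{eq:lln}: Theorem \ref{thm:mgf} gives the limit of the \emph{real} moment generating function, not the characteristic function, so one cannot simply invoke Lévy's continuity theorem to get a nondegenerate limiting law. The argument must therefore extract non-degeneracy purely from the second moment, which is why the uniform higher-moment bound (needed for uniform integrability / the Paley–Zygmund step) and the hypothesis $\a0 \neq 0$ (needed for $\sigma^2 > 0$) are both essential. The upper tail bound \eqref{eq:lde}, by contrast, is an immediate Chernoff-type consequence of \eqref{eq:thm-mgf}.
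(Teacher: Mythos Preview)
Your proposal is correct. For \eqref{eq:lde} the paper does exactly the Chernoff/Markov argument you describe. For \eqref{eq:lln} the paper follows your second, quantitative route: it uses the explicit asymptotic second and fourth moments obtained from \eqref{eq:Nhochk} (these are \eqref{eq:Nhochzwei} and \eqref{eq:Nhochvier} in the paper) and runs a Paley--Zygmund-type lower bound by passing to the tilted measure $\widetilde{\mathbb P}[\,\cdot\,] \propto \mathbb E_{\psi_N}[(\mathcal N_+-\mu)^2 \mathds 1_{\{\cdot\}}]$, applying Markov's inequality twice, and choosing $n$ and an upper cutoff $m$ explicitly in terms of the second and fourth moments; this yields the concrete lower bound $\sigma^4/(96\sigma^4)$.

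Your first, contradiction-based argument would also work but needs a small repair: the negation ``for every $n$, $\liminf_N \mathbb P_{\psi_N}[|\mathcal N_+-\mu|>n]=0$'' does \emph{not} directly give a single subsequence along which $\mathcal N_+\to\mu$ in probability, since different $n$'s may produce different subsequences. One either diagonalizes over a countable set of $n$'s, or---more simply---fixes a single $n<\sigma$, extracts a subsequence along which $\mathbb P_{\psi_N}[|\mathcal N_+-\mu|>n]\to 0$, and uses Cauchy--Schwarz with the uniform fourth-moment bound to force $\mathbb E_{\psi_N}[(\mathcal N_+-\mu)^2]\le n^2 + o(1)<\sigma^2$ along that subsequence, contradicting $\mathbb E_{\psi_N}[(\mathcal N_+-\mu)^2]\to\sigma^2$.
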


We remark that Corollary \ref{cor:lde} provides through \eqref{eq:lde} an upper bound for the upper tails of $\mathcal{N}_+$. Related to \eqref{eq:lln}, we can, however, not prove a matching lower bound for the upper tails. 

Corollary \ref{cor:lde} illustrates correlations among excitations of the BEC. To be more precise, Corollary \ref{cor:lde} characterizes the behavior of the sum  $ \mathcal{N}_+ =  \sum_{i=1}^N Q_i$, where, in abuse of notation, we call $Q_{i}$ the random variables with law 
\begin{align}
\mathbb{P}\big[ Q_i \in B \big] = \langle \psi_N, \mathds{1}_{B} (Q_i ) \psi_N \rangle \quad \text{for any} \quad B \subset \mathbb{R} \; . 
\end{align}
The random variables are identically distributed (as $\psi_N \in L_s^2( \pi)$) and, due to quantum correlation, dependent.

\subsubsection*{Quadratic exponential bounds} Expanding the r.h.s. of \eqref{eq:lde} for small $\lambda >0$, we find that 
\begin{align}
\limsup_{N \rightarrow \infty} \mathbb{P}_{\psi_N} \big[ \mathcal{N}_+ - \mu \geq n \big]  \leq e^{\inf_{\lambda \in (0,\lambda_0]} \big[ -n \lambda + \frac{\lambda^2 }{2} \sigma^2 + \mathcal{O}( \lambda^3) \big]} \label{eq:quad}
\end{align}
with $\sigma^2>0$ given by \eqref{def:sigma}, and we recover back quadratic bounds for the upper tails of  $\mathcal{N}_+$ proven earlier in \cite{NR} through an expansion of the generating function $\mathbb{E}_{\psi_N} \big[ e^{\lambda \mathcal{N}_+} \big] $ for small $\lambda >0$ together with the exponential bounds \eqref{eq:expbounds}.

\subsubsection*{General one-particle observables} We further generalize Theorem \ref{thm:mgf} resp. Corollary \ref{cor:lde} to self-adjoint one-particle operators\footnote{Here we introduced the notation ${O}$ as the inverse Fourier transform of the operator $O$ defined in $\ell^2( \pi^*) \times \ell^2( \pi^*)$ since in our setting it is more convenient to formulate the following statement in momentum space $\pi^*$. If $O$ has kernel $O_{p,q} \in \ell^2( \pi^*) \times \ell^2(\pi^*)$, then the operator $\check{O}$ has kernel $\check{O}(x,y) = \sum_{p,q \in \pi^*} O_{p,q} e^{ip\cdot x} e^{-iq \cdot y}$. } $O$ on $\ell^2( \pi^*)$ satisfying $\check{O}= Q \check{O}Q$ for $Q = 1- \vert \varphi \rangle \langle \varphi \vert$, i.e. to one-particle operators $O$ that are orthogonal to the condensate.   We then define the $N$-particle operator 
\begin{align}
O_i = \mathds{1} \otimes \cdots \otimes \mathds{1} \otimes O \otimes \mathds{1} \otimes \cdots \otimes \mathds{1} 
\end{align}
acting as identity on all but the $i$-th particle on which it acts as the operator $O$. With this notation we define, in abuse of notation, the random variables $O_i$ through its law 
\begin{align}
\label{def:law}
\mathbb{P}\big[ O_i \in B \big] = \langle \psi_N, \mathds{1}_{B} (O_i ) \psi_N \rangle, \quad \text{for any} \quad B \subset \mathbb{R} \; . 
\end{align}
The next theorem proves an asymptotic upper bound for an exponential decay rate of the sum of the random variables $O_i$. To state our result we introduce the short-hand notation 
\begin{align}
s_p := \sinh( \nu_p), \quad c_p = \cosh( \nu_p) \;,\quad \text{and} \quad \tau_p = \tanh( \nu_p) 
\end{align}
that we use to formulate the asymptotic expectation value of $O_i$ 
\begin{align}
\label{def:muO}
\mu_O := \sum_{p \in \pi_+^*} O_{p,p} s_p^2 \; . 
\end{align}

\begin{theorem}\label{thm:lde-O} Let $V \in L^3( \pi)$ be non-negative, compactly supported and spherically symmetric and let $\psi_N$ denote the ground state of $H_N$ defined in \eqref{def:ham}. 

Furthermore, let $O$ be a self-adjoint operator on $\ell^2( \pi^*)$ with kernel $O_{p,q} \in \ell^2( \pi^*_+) \times \ell^2( \pi^*_+)$ satisfying $\check{O} = Q\check{O}Q$ where $Q$ is defined in \eqref{def:Q}. Then, there exists $\lambda_0 > 0$ such that the random variables $O_i$ for $i \in \lbrace 1, \dots, \rbrace$ defined by  \eqref{def:law} satisfy 
\begin{align}
\lim_{N \rightarrow \infty} \mathbb{E}_{\psi_N}\big[ e^{\lambda  \sum_{i=1}^N O_i } \big]  =e^{ \Lambda_O ( \lambda) }
\end{align}
for all $\vert \lambda \vert \leq \lambda_0 $ and, moreover, 
\begin{align}
\lim_{N \rightarrow \infty} \mathbb{P}_{\psi_N} \bigg[ \sum_{i=1}^N O_i - \mu_O  \geq n \bigg] \leq  e^{\inf_{0 < \lambda < \lambda_0} \big[ (n- \mu_0) \lambda - \Lambda_O ( \lambda ) \big] } \label{eq:lde-O}
\end{align}
for all $n \in (0,N ]$ where $\mu_O$ is defined in \eqref{def:muO} and $\Lambda : [0, \lambda_0] \rightarrow \mathbb{R}$ denotes a convex function given by 
\begin{align}
\Lambda ( \lambda ) =  \int_0^\lambda  \sum_{p,q,k,\ell \in \pi_+^*} s_p c_q O_{p,q} \sum_{j=0}^\infty \big( D_{p,q,k,\ell} ( \kappa) \big)^j A_{k, \ell} ( \kappa ) d \kappa + \lambda \sum_{p \in \pi_+^*} O_{p,p} s_p^2  \label{def:Lambda-O}
\end{align}
where the $j$-th power of the kernel $D_{p,q,k,\ell}$ is recursively defined by 
\begin{align}
D_{p,q,k,\ell}^0( \kappa) = \delta_{p,k} \delta_{q,\ell}, \quad \text{and} \quad D_{p,q,k,\ell}^j ( \kappa) =\sum_{k',\ell' \in \pi_+^*} D_{p,q,k',\ell'}^{j-1} ( \kappa) D_{k',\ell',k,\ell} ( \kappa) 
\end{align}
and 
\begin{align}
A_{p,q}( \kappa) =& - c_p s_q \delta_{p,q} + c_pc_q  \langle c e^{\kappa O_{-p}}, s  e^{\kappa \overline{O}^-_{q}} \rangle_{\ell^2 ( \pi_+^*)} +  s_p s_q \langle s e^{-\kappa \overline{O}_{-p}^-}, c e^{-\kappa O_q} \rangle_{\ell^2( \pi_+^*)} \notag \\
&- s_pc_q \overline{\langle s e^{-\kappa O_{-p}}, s e^{\kappa O_{-q}} \rangle_{\ell^2( \pi_+^*)}} - c_ps_q \overline{\langle s e^{-\kappa O_{q}}, s e^{\kappa O_{p}} \rangle_{\ell^2( \pi_+^*)}}  \label{def:Apq}
\end{align}
with the notation $O_p := O_{\cdot, p}, O^{-}_q = O_{- \cdot, q}$ and 
\begin{align}
\label{def:D}
 & D_{p,q, k,\ell} ( \kappa) \notag \\
 & \quad =c_p c_k  c_q  s_\ell \tau_\ell \bigg(  e^{\kappa \overline{O}_{-k,-p}}    e^{\kappa \overline{O}_{\ell,q}} + e^{-\kappa O_{k,-p}} e^{-\kappa O_{-\ell,q}} - e^{-\kappa O_{k,-p}}  e^{\kappa \overline{O}_{\ell, -q}}  - e^{-\kappa O_{k,p}} e^{\kappa \overline{O}_{\ell, q}}  \bigg)  \; . 
\end{align}
\end{theorem}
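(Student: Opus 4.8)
The plan is to follow the same strategy that (by the excerpt's own signposting) underlies Theorem \ref{thm:mgf}, but carried out at the level of a general one-particle observable $O$ rather than the scalar weights $\tau_p$. First I would pass to Fock space and conjugate the ground state by the Bogoliubov transformation (and the relevant cubic/quartic transformations) that diagonalize the excitation Hamiltonian, so that $\psi_N$ is mapped, up to errors controlled by the a priori exponential bound \eqref{eq:exp-bound} from \cite{NR}, to a state close to the Fock vacuum $\Omega$. Under this conjugation the operator $\sum_{i=1}^N O_i = \sum_{p,q} O_{p,q} a_p^* a_q$ (here using $\check O = Q\check O Q$ so that only momenta in $\pi_+^*$ appear) is transformed into a quadratic expression in the new creation/annihilation operators whose coefficients involve $c_p = \cosh\nu_p$, $s_p = \sinh\nu_p$; this is where the kernels $A_{p,q}$ and $D_{p,q,k,\ell}$ in \eqref{def:Apq}, \eqref{def:D} originate. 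The target is then to compute $\lim_{N\to\infty}\langle\Omega, e^{\lambda\, \mathcal{B}(O)}\Omega\rangle$ where $\mathcal{B}(O)$ is this transformed quadratic operator.

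The analytic core is a differential-equation argument in the parameter $\lambda$ (equivalently $\kappa$). Writing $G(\kappa) = \langle\Omega, e^{\kappa\,\mathcal{B}(O)}\Omega\rangle$ (suitably normalized), I would differentiate in $\kappa$ and commute $\mathcal{B}(O)$ through the exponential, using the canonical commutation relations to express $\tfrac{d}{d\kappa}\log G(\kappa)$ in terms of expectations of quadratic operators in the Heisenberg-type evolved variables $e^{\kappa\,\mathrm{ad}}$; these evolved kernels are exactly the $e^{\pm\kappa O}$ factors appearing in \eqref{def:Apq} and \eqref{def:D}. The resulting closed equation for the relevant two-point functions is linear with kernel $D(\kappa)$ and inhomogeneity built from $A(\kappa)$, and solving it by Neumann series $\sum_{j\ge0} D^j A$ — convergent precisely under the smallness/summability hypothesis on $O$ and $\lambda$ that makes $\|D(\kappa)\|<1$ in the appropriate $\ell^2$ operator norm — yields $\tfrac{d}{d\kappa}\log G(\kappa) = \sum_{p,q} s_p c_q O_{p,q}\sum_{j\ge0}(D_{p,q,\cdot,\cdot})^j A_{\cdot,\cdot}(\kappa)$. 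Integrating from $0$ to $\lambda$ and adding the linear term $\lambda\sum_p O_{p,p}s_p^2$ (the contribution of the vacuum shift, matching $\mu_O$) gives $\Lambda_O(\lambda)$ as in \eqref{def:Lambda-O}; convexity follows since $\Lambda_O$ is a limit of logarithmic moment generating functions. The tail bound \eqref{eq:lde-O} is then the standard exponential Markov (Chernoff) inequality: $\mathbb{P}_{\psi_N}[\sum_i O_i - \mu_O \ge n] \le e^{-\lambda(n-\mu_O)}\,\mathbb{E}_{\psi_N}[e^{\lambda(\sum_i O_i - \mu_O)}]$, taking $N\to\infty$ and then the infimum over $\lambda\in(0,\lambda_0)$.

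The main obstacle I anticipate is controlling the error terms produced by the Bogoliubov (and cubic) conjugations uniformly in $\lambda$: after conjugation one does not get exactly a quadratic operator acting on the vacuum, but a quadratic operator plus remainders that are higher order in $\mathcal{N}_+/N$, and one must show these remainders contribute $o(1)$ to $\mathbb{E}_{\psi_N}[e^{\lambda\sum_i O_i}]$ — which is nontrivial because $e^{\lambda\sum_i O_i}$ is an unbounded multiplicative perturbation. This is exactly the step where the a priori bound of the form \eqref{eq:exp-bound} from \cite{NR} is essential: it provides $\mathbb{E}_{\psi_N}[e^{c\mathcal{N}_+}]\le C$, allowing a Cauchy–Schwarz / Grönwall splitting of $G(\kappa)$ into the main ODE-governed term plus a remainder that is $\mathcal{O}(N^{-1/4})$ uniformly on $[0,\lambda_0]$. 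A secondary technical point is justifying the termwise manipulation of the Neumann series and the interchange of the $N\to\infty$ limit with the sum over $\pi_+^*$ and with the $\kappa$-integration, which rests on the $\ell^2$-summability of $s_p, c_p - 1$ and the operator-norm smallness of $D(\kappa)$ for $\lambda_0$ small enough; once $\|D(\kappa)\| < 1$ uniformly, dominated convergence closes the argument.
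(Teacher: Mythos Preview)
Your proposal is correct and follows essentially the same route as the paper: reduce to the quasi-free expectation $\langle\Omega,e^{-K_\nu}e^{\lambda d\Gamma(O)}e^{K_\nu}\Omega\rangle$ via the norm approximation \eqref{eq:norm} together with the exponential bound \eqref{eq:exp-bound} (the paper packages this as Lemma~\ref{lemma:approx}, handling the cubic transform and the replacement of modified by standard Bogoliubov maps separately), then derive an ODE for $G(\lambda)$ by expressing $G'(\lambda)$ through the two-point functions $F_{p,q}(\lambda)$, obtain the linear fixed-point equation $F=A\,G+D\,F$ and solve it by a Neumann series after showing $D$ is a contraction on $\ell^2\times\ell^2$ for small $|\lambda|$, and finish the tail bound with Chernoff. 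The only ingredients you did not name explicitly but will need are the Baker--Campbell--Hausdorff identity $e^{d\Gamma(O)}a_p^* e^{-d\Gamma(O)}=a^*(e^{O_p})$ (Lemma~\ref{lemma:BCH}), which produces the $e^{\pm\kappa O}$ factors, and the symmetry relation $c_q s_p F_{p,q}=c_p s_q\overline{F_{q,p}}$ (see \eqref{eq:F-overlineF}), which the paper uses to close the fixed-point equation in $F$ rather than in the pair $(F,\overline F)$.
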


\begin{remark}
We collect some remarks on Theorem \ref{thm:lde-O}.
\begin{enumerate}
\item[(i)] In the proof of the Theorem in Section \ref{sec:proofs} we establish the rate of convergence of \eqref{eq:mgf-O} that is $\mathcal{O}( N^{-1/4})$ (see \eqref{eq:rate-O}). 
\item[(ii)] Furthermore, in the proof we show that the asymptotic generating function $\Lambda_O$ is well defined, i.e. that the r.h.s. of \eqref{def:Lambda-O} is finite (see the discussion before \eqref{eq:final-G-O}). 
\end{enumerate}
\end{remark}

\subsubsection*{Normal vs. anomalous random variables} Corollary \ref{cor:lde} and Theorem \ref{thm:lde-O} characterize tails for the sum of correlated random variables $O^i$ resp. $Q^i$ with non-vanishing deviation in the large particle limit; they show anomalous scaling properties. These results embed in a series of recent results \cite{BKS,RS,Rsing,KRS,RSe,R,PRV,COS} on a new probabilistic interpretation of Bose-Einstein condensates studying random variables   $A_i$ defined similarly to \eqref{def:law} with $A\not= QAQ$. These random variables $A_i$, thus, measure the correlation between the condensate and the excitations. Similarly to the random variables $Q_i$, the random variables $A_i$ are identically distributed and dependent. In contrast to \eqref{def:sigma}, the random variables cluster around their asymptotic mean
\begin{align}
\mathbb{E}_{\psi_N} \bigg[ \frac{1}{N} \sum_{i=1}^N A_i \bigg] \xrightarrow[N \rightarrow \infty]{} \langle \varphi, A \varphi \rangle 
\end{align}
with a deviation 
\begin{align}
\sqrt{\mathbb{E}_{\psi_N} \bigg[ \bigg(\frac{1}{N}\sum_{i=1}^N A_i - \langle \varphi, A \varphi  \rangle \bigg)^2 \bigg]} \xrightarrow[N \rightarrow \infty]{}  0   \label{eq:sd-A}
\end{align}
that  vanishes in the large particle limit. Consequently, Bose-Einstein condensation is associated with a weak law of large numbers of the random variables $A_i$ (see \cite{BKS}). In fact, the deviation \eqref{eq:sd-A} of the random variable $A_i$ from its mean turns out to be $\mathcal{O}(N^{-1})$, and thus agrees with the scaling of normal distributed random variables. In past years, the random variables were indeed proven to satisfy central limit theorems \cite{RS}, also recently in the dynamical setting \cite{COS}. In the mean-field regime (corresponding to the case $V_N := V$ in \eqref{def:ham}) the large deviation regime for $A_i$ is characterized through quadratic upper and (matching) lower bounds both for the ground state \cite{R} and the dynamics \cite{KRS,RSe}. Note that, for the random variables $O_i$ resp. $Q_i$, we can prove an upper bounds in Corollary \ref{cor:lde}, \ref{thm:lde-O} on their tails only.  

\subsection{Idea of the proof} The novelty this paper is the derivation of an explicit asymptotic formula for the generating function for the number of excitations $\mathcal{N}_+$ in Theorem \ref{thm:mgf} resp. the sum of random variables $\sum_{i=1}^N O_i$ (as defined in \ref{def:law}) in Theorem \ref{thm:lde-O}. The idea for establishing explicit formulas for 
\begin{align}
\label{eq:mgf-1}
\mathbb{E}_{\psi_N} \big[ e^{\lambda \mathcal{N}_+} \big] , \quad \text{resp.} \quad \mathbb{E}_{\psi_N} \big[ e^{\lambda \sum_{i=1}^N O_i} \big]
\end{align}
in the large particle limit $N \rightarrow \infty$, is to combine an approximation of the ground state $\psi_N$ in $L^2 (\pi^N)$-norm (proven in \cite{BCCS}) with exponential bounds on $\mathcal{N}_+$ recently proven in \cite{NR}. 
To be more precise, we show that thanks to \cite{BCCS} and \cite{NR},  the computation of \eqref{eq:mgf-1} reduces in the large particle limit $N \rightarrow \infty$ to the computation of the generating function of $\mathcal{N}_+$ in a quasi-free state, namely 
\begin{align}
\label{eq:mgf-4}
\mathbb{E}_{\psi_N} \big[ e^{\lambda \mathcal{N}_+} \big] \approx \langle e^{K_\nu} \Omega ,  e^{\lambda \mathcal{N}_+ }   e^{K_\nu} \Omega \rangle \; 
\end{align}
where $\Omega$ denote the vacuum vector of the bosonic Fock space, equipped with standard creation and annihilation operators $a^*_p, a_p$, and $e^{K_\nu}$ denotes a Bogoliubov transformation given by 
\begin{align}
\label{eq:bogo-null}
e^{K_\nu}, \quad \text{with} \quad K_\nu = \sum_{p \in \pi_+^*} \nu_p \big( a^*_{-p} a_p - a_p a_{-p} \big) 
\end{align}
with $\nu_p$ defined in \eqref{def:nu}. A peculiar property of a Bogoliubov transformation of the form \eqref{eq:bogo-null} is that one has explicit formulas for its action on creation and annihilation operators (for more details see Section \ref{sec:prel}) that, in particular lead to the validity of Wick's rule for quasi-free states (see for example \cite{Simon} and  \cite[Theorem 10.2]{SolovejLN}). Based on these, we explicitly compute the right-hand side of \eqref{eq:mgf-4} by solving an ordinary differential equation. To the best of the author's knowledge, this yields a novel formula for the moment generating function of $\mathcal{N}_+$ not previously reported in the literature. The computation of the asymptotic formula for the generating function of $\mathbb{E}_{\psi_N} \big[ e^{\lambda \sum_{i=1}^N O_i} \big]$ works with similar ideas. Both proofs are given in Section \ref{sec:proofs}.

\subsection{Structure of the paper} The rest of this paper is structured as follows: In Section \ref{sec:lln-lde} we use Theorem \ref{thm:mgf} to prove Corollary \ref{cor:lde} on the tails of the distribution of $\mathcal{N}_+$ and formulas \eqref{def:sigma}, \eqref{eq:sd-A} on the deviations from their mean. In Section \ref{sec:prel} we collect preliminary results, in particular on the approximation of the ground state and prove additional properties of its approximating state that we need to prove Theorems \ref{thm:mgf} and \ref{thm:lde-O} in Section \ref{sec:proofs}.

\section{Characterization of deviations and tails}

\label{sec:lln-lde}
In this Section we derive Corollary \ref{cor:lde} on the characterization of the tails (Section \ref{sec:ldp}) for $\mathcal{N}_+$ from the asymptotic generating function in Theorem \ref{thm:mgf}. We furthermore show that Theorem \ref{thm:mgf} allows to draw conclusion on the deviations of $\mathcal{N}_+$ from its mean $\mu$ (Section \ref{sec:corr}). 

These proofs are based on explicit asymptotic formulas of moments of the number of excitations. In fact from Theorem \ref{thm:mgf} (more specifically \eqref{eq:Nhochk}) we find 
\begin{align}
\label{eq:Nhochzwei}
\lim_{N \rightarrow \infty} \mathbb{E}_{\psi_N} \big[ (\mathcal{N}_+ - \mu)^2 \big] = \sigma^2
\end{align}
recovering back \eqref{def:sigma}. We recall definition 
\begin{align}
\sigma^2 = 2 \sum_{p \in \pi_+^*} \cosh^2( \nu_p)\sinh^2(\nu_p)
\end{align}
and the sequence $\nu_p$ is given by \eqref{def:nu}. A straight forward computation, based on \ref{eq:Nhochk} of Theorem \ref{thm:mgf}, shows furthermore 
\begin{align}
\label{eq:Nhochvier}
\lim_{N \rightarrow \infty} \mathbb{E}_{\psi_N} \big[ (\mathcal{N}_+ - \mu)^4 \big] = 12 \sigma^4 + 8 \sigma^2 + 48  \sum_{p \in \pi_+^*} \cosh( \nu_p )^4 \sinh( \nu_p)^4  \; . 
\end{align}

\subsection{Characterization of tails}
\label{sec:ldp}
 In this Section we prove Corollary \ref{cor:lde}.  

\begin{proof}[Proof of Theorem \ref{cor:lde}] The proof of both statements of Corollary \ref{cor:lde}, namely \eqref{eq:lde} and \eqref{eq:lln}, are both based on Theorem \ref{thm:mgf}. We start with the proof of \eqref{eq:lde} that is an immediate consequence of Markov's inequality. 

\subsubsection*{\textbf{Proof of \eqref{eq:lde}}:} For all $\lambda >0$ we have 
\begin{align}
\mathbb{P}_{\psi_N}\big[  \mathcal{N}_+ - \mu > n  \big] =\mathbb{P}\big[ e^{\lambda \mathcal{N}_+  - \lambda \mu} > e^{\lambda n}  \big] 
\end{align}
and, thus we find with Markov's inequality 
\begin{align}
\mathbb{P}_{\psi_N}\big[ \mathcal{N}_+ - \mu > n \big] \leq e^{-\lambda(n + \mu)} \mathbb{E} \big[ e^{\lambda \mathcal{N}_+} \big]  
\end{align}
Since this holds true for any $\lambda>0$, we conclude with Theorem \ref{thm:mgf} that 
\begin{align}
\limsup_{N \rightarrow \infty} \log \mathbb{P}_{\psi_N} \big[ \mathcal{N}_+ - \mu > n \big] \leq \inf_{\lambda \in (0,\lambda_0]} \big[ - \lambda (n + \mu)  + \Lambda (\lambda )\big]  \; .  
\end{align}

\subsubsection*{\textbf{Proof of \eqref{eq:lln}:}} The proof of \eqref{eq:lln} uses that the asymptotic generating function $\Lambda$ defined in \eqref{def:Lambda} allows to explicitly compute any moment of $\mathcal{N}_+$ asymptotically.

First we observe that for arbitrary $n,m>0$, that we will choose later, 
\begin{align}
\mathbb{P}_{\psi_N} \big[ \big\vert  \mathcal{N}_+ - \mu \big\vert > n \big] \geq \mathbb{P}_{\psi_N} \big[\vert \mathcal{N}_+ - \mu \vert \in (n,n+m)\big] \; . 
\end{align}
We proceed with estimating the r.h.s. from above. To this end, we introduce the notation $\widetilde{\mathcal{N}}_+ = \vert \mathcal{N}_+ - \mu \vert $  
\begin{align}
\mathbb{P}_{\psi_N} \big[ \vert \mathcal{N}_+ - \mu \vert \in (n,n+m)\big] =& \mathbb{E}_{\psi_N} \big[ \mathds{1}_{(n,n+m)} \big( \widetilde{\mathcal{N}}_+ \big) \big] \notag \\
=&  \mathbb{E}_{\psi_N} \big[  \mathds{1}_{(n,n+m)} \big( \widetilde{\mathcal{N}}_+ \big) \widetilde{\mathcal{N}}_+^2 \widetilde{\mathcal{N}}_+^{-2} \big]  \; . 
\end{align}
Since $\widetilde{\mathcal{N}}_+^{-2} \geq (n+m)^{-2}$ on the support of $\mathds{1}_{(n,n+m)} \big( \widetilde{\mathcal{N}}_+ \big)$, we find 
\begin{align}
\mathbb{P}_{\psi_N} \bigg[\vert \mathcal{N}_+ - \mu \vert \in (n,n+m)\bigg] \geq& \frac{1}{(n+m)^2} \mathbb{E}_{\psi_N} \big[ \mathds{1}_{(n,n+m)} \big(\widetilde{\mathcal{N}}_+ \big) \widetilde{\mathcal{N}}_+^{2} \big] \; . \label{eq:lb1}
\end{align}
We define the new probability distribution 
\begin{align}
\label{def:Ptilde}
\widetilde{\mathbb{P}}_{\psi_N} \big[ \widetilde{\mathcal{N}}_+ \in A \big] =\frac{\mathbb{E}_{\psi_N} \big[ \widetilde{\mathcal{N}}_+^{2} \chi_{A} ( \widetilde{\mathcal{N}}_+) \big]  }{\mathbb{E}_{\psi_N} \big[ \widetilde{\mathcal{N}}_+^{2} \big] } 
\end{align}
that is for sufficiently large $N$ well defined as $\mathbb{E}_{\psi_N} \big[ \widetilde{\mathcal{N}}_+^{2} \big] >0$ from \eqref{eq:Nhochzwei}. Thus with \eqref{def:Ptilde} we can write the r.h.s. of \eqref{eq:lb1} as 
\begin{align}
\mathbb{P}_{\psi_N} \big[ \vert \mathcal{N}_+ - \mu \vert \in (n,n+m)\big] \geq& \frac{\mathbb{E}_{\psi_N} \big[ \widetilde{\mathcal{N}}_+^{2} \big] }{(n+m)^2 } \widetilde{\mathbb{P}} \big[ \widetilde{\mathcal{N}}_+ \in (n,n+m)\big]  \notag \\
=& \frac{\mathbb{E}_{\psi_N} \big[ \widetilde{\mathcal{N}}_+^{2} \big] }{(n+m)^2 } \bigg( 1 - \widetilde{\mathbb{P}} \big[ \widetilde{\mathcal{N}}_+ \in (0,n] \big] - \widetilde{\mathbb{P}} \big[ \widetilde{\mathcal{N}}_+ \in [n+m, \infty) \big] \bigg)   \; . 
\end{align}
where we used that by definition \ref{def:Ptilde}, it follows $\widetilde{\mathbb{P}}_{\psi_N} \big[ \vert \mathcal{N}_+ - \mu \vert =0 \big] =0$. Next, we apply once more Markov's inequality for the probability measure $\widetilde{\mathbb{P}}$ and arrive at 
\begin{align}
\mathbb{P}_{\psi_N} \big[ \vert \mathcal{N}_+ - \mu \vert \in (n,n+m)\big] \geq& \frac{\mathbb{E} \big[ \widetilde{\mathcal{N}}_+^{2} \big] }{(n+m)^2 } \widetilde{\mathbb{P}} \big[ \widetilde{\mathcal{N}}_+ \in (n,n+m)\big]  \notag \\
=& \frac{\mathbb{E} \big[ \widetilde{\mathcal{N}}_+^{2} \big] }{(n+m)^2  } \bigg( 1 - \frac{1}{\mathbb{E} \big[ \widetilde{\mathcal{N}}_+^{2} \big]} \bigg( n^2 + \frac{1}{(n+m)^2} \mathbb{E} \big[ \widetilde{\mathcal{N}}^4 \big]  \bigg)  \bigg)  \; . 
\end{align}
Now we choose $n,m$ such that 
\begin{align}
n^2= \frac{1}{4} \mathbb{E} \big[ \widetilde{\mathcal{N}}_+^{2} \big], \quad \frac{1}{(n+m)^2} =\frac{1}{4} \frac{\mathbb{E} \big[ \widetilde{\mathcal{N}}_+^{2} \big]}{\mathbb{E} \big[ \widetilde{\mathcal{N}}_+^{4} \big]} 
\end{align}
yielding 
\begin{align}
\mathbb{P} \big[\vert  \mathcal{N}_+ - \mu \vert \in (n,n+m)\big] \geq& \frac{\mathbb{E} \big[ \widetilde{\mathcal{N}}_+^{2} \big] }{2 (n+m)^2 } = \frac{\mathbb{E} \big[ \widetilde{\mathcal{N}}_+^{2} \big]^2  }{ 8\mathbb{E} \big[ \widetilde{\mathcal{N}}_+^{4} \big] } \; . 
\end{align}
From \eqref{eq:Nhochvier} we find using the assumption $\a0 >0$, that in particular implies $\sigma >0$, that $\mathbb{E} \big[ \widetilde{\mathcal{N}}_+^{4} \big] \geq 12 \sigma^4$ so that we finally arrive with \eqref{eq:Nhochzwei} at 
\begin{align}
\mathbb{P} \big[ \vert \mathcal{N}_+ - \mu \vert \in (n,n+m)\big] \geq& \frac{\sigma^4}{96  \sigma^4} > 0  
\end{align}
that is the desired statement. 
\end{proof}

\subsection{Characterization of deviations from the mean} \label{sec:corr} In this section we show that the deviation of $\mathcal{N}_+$ from its mean $\mu$ does not vanish in the large particle limit (i.e. formula \eqref{def:sigma}) while, contrarily, the deviation of the random variables $A_i$ with $A \not= QAQ$ as defined in \eqref{def:law} does vanish as $N \rightarrow \infty$ (see formula  \eqref{eq:sd-A}). 

\subsubsection*{Proof of \eqref{def:sigma}} From \eqref{eq:Nhochzwei}, we immediately find 
\begin{align}
\lim_{N \rightarrow \infty} \sqrt{\mathbb{E}_{\psi_N} \big[ \big( \mathcal{N}_+ - \mu \big)^2 \big]} = \sigma \; . 
\end{align}

\subsubsection*{Proof of \eqref{eq:sd-A}} We compute 
\begin{align}
\label{eq:sd-A-1}
 \mathbb{E}_{\psi_N}&  \bigg[ \bigg( \frac{1}{N}\sum_{i=1}^N A_i - \langle \varphi, A \varphi \rangle \bigg)^2 \bigg] \notag \\
 &=  \frac{1}{N^2}  \mathbb{E}_{\psi_N} \bigg[ \sum_{i\not= j}^N A_i A_j \bigg] + \frac{1}{N^2}  \mathbb{E}_{\psi_N} \bigg[ \sum_{i=1}^N A_i \bigg]- \frac{2}{N} \langle \varphi, A \varphi \rangle \mathbb{E}_{\psi_N} \bigg[ \sum_{i=1}^N A_i \bigg] +  \langle \varphi, A \varphi \rangle^2
\end{align}
and use the $k$-particle reduced density associated to $\psi_N$
\begin{align}
\gamma_{\psi_N}^{(k)} = \Tr_{k+1, \dots, N} \vert \psi_N \rangle \langle \psi_N \vert 
\end{align}
defined as the partial trace starting from the $k+1$ to the $N$-th particle of the orthogonal projection onto $\psi_N$ to express the r.h.s. of \eqref{eq:sd-A-1} as 
\begin{align}
 \mathbb{E}_{\psi_N} &\bigg[ \bigg( \frac{1}{N}\sum_{i=1}^N A_i - \langle \varphi, A \varphi \rangle \bigg)^2 \bigg] \notag \\
  &=  \frac{N(N-1)}{N^2} \Tr \gamma_{\psi_N}^{(2)} ( A \otimes A ) + \frac{1}{N} \Tr \gamma_{\psi_N}^{(1)} A - 2\langle \varphi, A \varphi \rangle \; \Tr \gamma_{\psi_N}^{(1)} A + \langle \varphi, A \varphi \rangle^2 . 
\end{align}
Since, as proven in for example in \cite{LS-02}, for fixed $k \in \mathbb{N}$, we have 
\begin{align}
\lim_{N \rightarrow \infty} \Tr \big\vert \gamma_{\psi_N}^{(k)} - \vert \varphi \rangle \langle \varphi\vert^{\otimes k} \big\vert =0 
\end{align}
we get 
\begin{align}
\lim_{N \rightarrow \infty} \mathbb{E}_{\psi_N} \bigg[ \bigg( \frac{1}{N}\sum_{i=1}^N A_i - \langle \varphi, A \varphi \rangle \bigg)^2 \bigg]  =0 \; . 
\end{align}

\section{Preliminaries}
\label{sec:prel}

In this section, we recall basic notation and results from the literature on the many-body problem that we use for the proof of Theorem \ref{thm:mgf} in Section \ref{sec:proofs} later. For more details on Bogoliubov theory in the context of Bose gases, see for example the introductory lecture notes \cite{SolovejLN} or recent review articles \cite{Nap18, N21, S22}. 

\subsection{Approximation of the ground state}

For our analysis it is convenient to embed the $N$-body Hamiltonian \eqref{def:ham} into the bosonic Fock space given by
\begin{align}
\mathcal{F} = \bigoplus_{k=1}^\infty L_s^2 ( \pi^k ) \; .
\end{align}
The bosonic Fock space $\mathcal{F}$ is equipped with standard creation and annihilation operators $a^*(f)$ resp. $a(g)$ with $f,g \in L^2( \pi)$ that satisfy the canonical commutation relations 
\begin{align}
\label{eq:CCR}
\big[ a(g), a^*(f) \big] = \langle g,f \rangle , \quad \big[ a^*(f), a^*(g) \big] = \big[ a(f), a(g) \big] = 0 \; . 
\end{align}
For our analysis in the translation invariant setting on $\pi$, it is convenient to work in momentum space $\pi^* = ( 2 \pi \mathbb{Z})^3$ where we define 
\begin{align}
a_p^* = a^*( \varphi_p ) , \quad a_p = a( \varphi_p ), \quad \varphi_p (x) = e^{i p \cdot x} \in L^2( \pi ) \; . 
\end{align}
In momentum space, the number of particles operator can then be written as
\begin{align}
\mathcal{N} = \sum_{p \in \pi^*}a_p^*a_p 
\end{align}
and decomposed, recalling that the condensate wave function $\varphi$ corresponds to the zero mode $p=0$, as 
\begin{align}
\mathcal{N} = \mathcal{N}_+ + a_0^*a_0, \quad \text{with} \quad \mathcal{N}_+ = \sum_{p \in \pi_+^*}a_p^*a_p 
\end{align}
the number of excitations and the notation $\pi_+^* = \pi^* \setminus \lbrace 0 \rbrace$. We recall some bounds of creation and annihilation operators in terms of the number of particles operator, that we use in the proof later. For $\psi \in \mathcal{F}$ and $f \in L^2( \pi)$ we have 
\begin{align}
\label{eq:a-bounds1}
\| a(f) \psi \| \leq \| f \|_{L^2} \|  \mathcal{N}^{1/2} \psi \| , \quad \| a^*(f) \psi \leq \| f \|_{L^2} \| ( \mathcal{N} + 1)^{1/2} \psi \|,
\end{align}
and for any operator $H$ on $\ell^2( \pi^*)$ with kernel $H_{p,q}$ and $\xi_1, \xi_2 \in \mathcal{F}$, we have 
\begin{align}
\label{eq:a-bounds2}
\big\vert \langle \xi_1, \sum_{p,q \in \pi^*} H_{p,q} a_p^*a_{q} \xi_2 \rangle \vert \leq \| H \|_{\rm op} \| \mathcal{N}^{1/2} \xi_1 \| \; \| \mathcal{N}^{1/2} \xi_2 \|, 
\end{align}
and furthermore, 
\begin{align}
\label{eq:a-bounds3}
\big\vert \langle \xi_1, \sum_{p,q \in \pi^*} H_{p,q} a_p^*a_{-q} \xi_2 \rangle \vert \leq \| H \|_{\ell^2( \pi \times \pi)} \| \mathcal{N}^{1/2} \xi_1 \| \; \| (\mathcal{N} + 1)^{1/2} \xi_2 \| \; . 
\end{align}
With this notation, the second quantization of the Hamiltonian \eqref{def:ham} in momentum space reads 
\begin{align}
H_N = \sum_{p \in \Lambda^*} + \frac{1}{N} \sum_{r,p,q \in \Lambda^*} \widehat{V} (r/N) a_{p+r}^*a_q^*a_pa_{q+r} 
\end{align}
where we used the convention $\widehat{V} (p) = \int_{\mathbb{R}^3}dx \;  e^{-ip \cdot x} V(x) $ for $p \in \mathbb{R}^3$.

To study the quantum fluctuations around the condensate's behavior, we factor out any contribution of the condensate $\varphi = 1_{\pi}$ through the unitary 
\begin{align}
\mathcal{U}_N : L^2_s( \pi^N) \rightarrow \mathcal{F}_{\perp \varphi}^{\leq N} :=  \bigoplus_{n=0}^N L^2_{\perp \varphi} ( \pi)^{\otimes_s^N} 
\end{align}
that maps $\psi_N$, which can be uniquely decomposed as 
\begin{align}
\psi_N := \sum_{n=0}^N \varphi^{\otimes (N-n)} \otimes_s \xi^{(n)}, \quad \text{with} \quad \xi^{(n)}\in L_{\perp \varphi}^2( \pi)^{\otimes_s n} \; , 
\end{align}
onto the excitation vector $\xi := \lbrace \xi^{(0)}, \dots, \xi^{(n)} \rbrace$, i.e. $\mathcal{U}_N \psi_N = \xi $. The unitary $\mathcal{U}_N$, first introduced in \cite{LNSS}, acts for $p,q \in \pi_+^*$ on products of creation and annihilation operators as 
\begin{align}
\mathcal{U}_N a_p^* a_q \mathcal{U}_N^* = a_p^*a_q, \quad \mathcal{U}_N a_0^*a_0 \mathcal{U}_N^* = N - \mathcal{N}_+ ,
\end{align} 
resp. 
\begin{align}
\mathcal{U}_N a_p^*a_0 \mathcal{U}_N^* = N^{1/2} b_p^*, \quad  \mathcal{U}_N a_0^*a_q \mathcal{U}_N^* = N^{1/2} b_q
\end{align}
where we introduced modified creation and annihilation operators $b_p^*,b_q$ given for $p,q\in \pi^*_+$ by 
\begin{align}
\label{def:b}
b_p^* := a_p^* \sqrt{1-\mathcal{N}_+/N} , \quad b_p:= \sqrt{1-\mathcal{N}_+ /N} a_q \; . 
\end{align}
Contrarily to standard creation and annihilation operators, the modified ones, first introduced in \cite{BS}, leave the truncated Fock space $\mathcal{F}_{\perp \varphi}^{\leq N}$ invariant, however, that comes with the price of modified commutation relations 
\begin{align}
\label{eq:modifiedCCR}
[b_p, b_q ] = \bigg( 1- \frac{\mathcal{N}_+}{N} \bigg) \delta_{p,q} - \frac{1}{N} a_q^*a_p, \quad [b_p,b_q] = [b_p^*b_q^*] =0 
\end{align}
having, compared to \eqref{eq:CCR}, a correction term of order $N^{-1}$. Similarly to be standard creation and annihilation operators in \eqref{eq:a-bounds1}-\eqref{eq:a-bounds3}, the modified ones are bounded w.r.t. powers of the number of excitations. We have for $\psi \in \mathcal{F}_{\perp \varphi}^{\leq N}$ and $f \in L^2( \pi)$ we have 
\begin{align}
\label{eq:b-bounds1}
\| b(f) \psi \| \leq C \| f \|_{L^2} \|  \mathcal{N}_+^{1/2} \psi \| , \quad \| b^*(f) \psi \leq C \| f \|_{L^2} \| ( \mathcal{N}_+ + 1)^{1/2} \psi \|,
\end{align}
for $C>0$ and for any operator $H$ on $\ell^2( \pi^*)$ with kernel $H_{p,q}$ and $\xi_1, \xi_2 \in \mathcal{F}_{\perp \varphi}^{\leq N} $, we have 
\begin{align}
\label{eq:b-bounds2}
\big\vert \langle \xi_1, \sum_{p,q \in \pi^*} H_{p,q} b_p^*b_{q} \xi_2 \rangle \vert \leq  C \| H \|_{\rm op} \| \mathcal{N}_+^{1/2} \xi_1 \| \; \| \mathcal{N}_+^{1/2} \xi_2 \|, 
\end{align}
and furthermore, 
\begin{align}
\label{eq:b-bounds3}
\big\vert \langle \xi_1, \sum_{p,q \in \pi^*} H_{p,q} b_p^*b_{-q}^* \xi_2 \rangle \vert \leq  C \| H \|_{\ell^2( \pi^*_+ \times \pi_+^*)} \| \mathcal{N}_+^{1/2} \xi_1 \| \; \| (\mathcal{N}_+ + 1)^{1/2} \xi_2 \| \; . 
\end{align}
We remark that on the truncated Fock space, the number of excitations equals the number of particles operator, i.e. $\mathcal{N}_+ \psi = \mathcal{N} \psi$ for any $\psi \in \mathcal{F}_{\perp \varphi}^{\leq N}$. 

To study the quantum fluctuations, the so-called excitation Hamiltonian $\mathcal{L}_N = \mathcal{U}_N H \mathcal{U}_N^*$ needs to be regularized, through a modified Bogoliubov transformation that is a unitary map on $\mathcal{F}_{\perp \varphi}^{\leq N}$ given by 
\begin{align}
\label{def:bogo-mod1}
e^{B_\eta} = \exp \bigg[ \frac{1}{2} \sum_{p \in \Lambda_+^*} \eta_p \big( b_p^*b_{-p}^* - b_p b_{-p} \big) \bigg] \; . 
\end{align}
The choice of the sequence $\eta \in \ell^2( \pi_+^*)$ is such that it implements the particle's correlation structure and formulated in terms of the solution $f$ to the scattering equation with Neumann boundary conditions. To be more precise, for fixed, sufficiently large $\ell >0$, let $f_\ell$ denote the ground state solution to the Neumann problem 
\begin{align}
\bigg[ - \Delta + \frac{1}{2} V \bigg] f_\ell = \mu_\ell f_\ell 
\end{align}
on the ball $\vert x \vert \leq N \ell$, where the normalization is such that $f_\ell (x) =1$ of $\vert x \vert  = N \ell$ and $f_\ell (x) =1$ for all $\vert x \vert > N \ell$. Then, we set 
\begin{align}
\label{def:eta}
\check{\eta} : \pi \rightarrow L^2( \pi), \quad \check{\eta} (x) = - N \omega_\ell ( Nx)  
\end{align}
for the definition of the regularizing modified Bogoliubov transformation \eqref{def:bogo-modgen}. We remark that the sequence $\eta$ by definition \eqref{def:eta} satisfies (see for example \cite[Lemma 3.1]{BCCS}) 
\begin{align}
\label{eq:bounds-eta}
\vert \eta_p \vert \leq C \vert p \vert^{-2}  \; . 
\end{align}
and thus, in particular $\eta$ is an element of $\ell^2( \pi^*_+ )$. 

To prove the $L_s^2( \pi^N)$-norm approximation of the ground state in \cite{BCCS}, the Hamiltonian is further regularized using a unitary map on $\mathcal{F}_{\perp \varphi}^{\leq N}$ that is exponentially cubic in modified creation and annihilation operators and given by 
\begin{align}
\label{def:cubic}
e^{A_\eta}, \quad A_\eta = \exp \bigg[ N^{-1/2} \sum_{r \in P_H, v \in P_L} \eta_r \big( \sinh (\eta_v) b^*_{r+v} b_{-r}^* b_{-v}^* + \cosh_{\eta_v} b^*_{r+v} b^*_{-r}b_v - {\rm h.c.} \big) \bigg] 
\end{align}
with $\eta$ defined through \eqref{def:eta} and 
\begin{align}
P_L = \lbrace p \in \pi_+^* : \; \vert p \vert \leq N^{1/2} \rbrace \quad \text{and} \quad P_H = \pi_+^* \setminus P_L 
\end{align}
and a second modified Bogoliubov transform on $\mathcal{F}_{\perp \varphi}^{\leq N}$ given by 
\begin{align}
\label{def:bogo-mod2}
e^{B_\tau} = \exp \bigg[ \frac{1}{2} \sum_{p \in \pi_+^*} \tau_p \big( b_p^*b_{-p}^* - b_p b_{-p} \big) \bigg] 
\end{align}
where the sequence $\tau$ is given by 
\begin{align}
\label{def:tau}
\tau_p = - \frac{1}{4} \log \big[ 1 + 2 \vert p \vert^{-2} \widehat{V f_\ell} (p/N) \big] - \eta_p \; , 
\end{align}
and thus, in particular (see for example \cite[Lemma 5.1]{BCCS}) $\vert \tau_p \vert \leq C \vert p \vert^{-4}$ for all $p \in \pi_+^*$. It was proven in \cite[ Eq. (6.7)]{BCCS} that the ground state $\psi_N$ of $H_N$ then satisfies 
\begin{align}
\label{eq:norm}
\| \psi_N - e^{i \omega} e^{-B_\eta}e^{-A} e^{-B_\tau} \Omega \| \leq CN^{-1/4} 
\end{align}
for an appropriate choice $\omega \in [0,2\pi]$. While it is well known that the number of excitations is bounded in the ground state, recently it was proven \cite[Theorem 1]{NR} that also the exponential of the number of particles is bounded, i.e. that there exists $C>0$ such that 
\begin{align}
\label{eq:exp-bound}
\langle \psi_N, e^{\lambda \mathcal{N}_+} \psi_N \rangle \leq C 
\end{align}
for sufficiently small $\vert \lambda \vert >0$ (depending on $V$ only). 

\subsection{Modified Bogoliubov transformation}

In this section we study properties of general Bogoliubov transformations defined for any $\alpha \in \ell^2( \pi^*_+)$ by 
\begin{align}
\label{def:bogo-modgen}
e^{B_\alpha}, \quad \text{with} \quad B_\alpha = \frac{1}{2} \sum_{p \in \pi_+^*} \alpha_p \big( b_p^*b_{-p}^* - b_{p}b_{-p} \big) \;. 
\end{align}
In particular, the results then apply for $\eta$ defined in \eqref{def:eta} and $\tau$ defined in \eqref{def:tau} and thus for the two modified Bogoliubov transformation $e^{B_\tau}, e^{B_\eta}$ from \eqref{def:bogo-mod1},\eqref{def:bogo-mod2} that are relevant the for norm approximation of the ground state $\psi_N$ of \eqref{def:ham}. 

We remark that the action of the modified Bogoliubov transform on modified creation and annihilation operators is approximately known in the large particle limit. As proven in \cite[Lemma 2.3]{BCCS}, we have 
\begin{align}
\label{eq:bogo-mod-action}
e^{-B_\alpha} b_p e^{B_\alpha} =& \cosh ( \alpha_p) b_p + \sinh( \alpha_p) b_{-p}^* + d_p ,\notag \\
e^{-B_\alpha} b_p^* e^{B_\alpha} =& \cosh ( \alpha_p) b_p^* + \sinh( \alpha_p) b_{-p}+ d_p^* ,
\end{align}
where the error $d_p, d_p^*$ satisfy for any $\psi \in \mathcal{F}_{\perp \varphi}^{\leq N}$ and $n \in \mathbb{Z}$
\begin{align}
\label{eq:bounds-d}
\| ( \mathcal{N}_+ + 1)^{n/2} d_p^* \psi \| &\leq \frac{C_n}{N} \| ( \mathcal{N}_+ + 1)^{(n+3)/2} \psi \| \; , \notag \\
\| ( \mathcal{N}_+ + 1)^{n/2}  d_p \psi \| &\leq \frac{C_n}{N} \bigg( \vert \eta_p \vert \; \| ( \mathcal{N}_+ + 1)^{(n+3)/2} \psi \| + \| b_p ( \mathcal{N}_+ + 1)^{(n+2)/2} \psi \| 
\end{align}
for a constant $C_n >0$. 

Moreover, modified Bogoliubov transformations as in \eqref{def:bogo-modgen} are proven to approximately preserve powers of the number of excitations (see \cite[Lemma 2.1]{BCCS}), i.e. 
\begin{align}
\label{eq:bogo-mod-moment}
e^{-B_\alpha} ( \mathcal{N}_+ + 1)^k e^{B_\alpha} \leq C_k ( \mathcal{N}_+ + 1)^k \; . 
\end{align}
We improve that result and show that the modified Bogoliubov transform also approximately preserves the exponential of the number of excitations:  

\begin{lemma}
\label{lemma:bogo-mod}
For $\alpha \in \ell^2( \pi_+^*)$ the modified Bogoliubov transformation $e^{B_\alpha}$ defined in \eqref{def:bogo-mod1} satisfies for sufficiently small $\kappa>0$
\begin{align}
e^{-B_\alpha}  e^{\kappa \mathcal{N}_+} e^{B_\alpha} \leq e^{C \kappa (\mathcal{N}_+ + 1) }
\end{align}
for a constant $C>0$ as an operator inequality on $\mathcal{F}_{\perp \varphi}^{\leq N}$. 
\end{lemma}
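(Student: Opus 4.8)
The plan is to differentiate $F(\kappa) := e^{-B_\alpha} e^{\kappa \mathcal{N}_+} e^{B_\alpha}$ with respect to $\kappa$ and control the resulting commutator by a Gr\"onwall-type argument on expectation values. First I would fix a normalized $\psi \in \mathcal{F}_{\perp\varphi}^{\leq N}$ and set $g(\kappa) := \langle \psi, e^{-B_\alpha} e^{\kappa \mathcal{N}_+} e^{B_\alpha} \psi \rangle$. Writing $e^{\kappa \mathcal{N}_+} = \sum_k \kappa^k \mathcal{N}_+^k / k!$ and using the conjugation formula $e^{-B_\alpha} \mathcal{N}_+ e^{B_\alpha} = \mathcal{N}_+ + \mathcal{R}_\alpha$, where $\mathcal{R}_\alpha$ collects the quadratic-in-$b$ terms produced by \eqref{eq:bogo-mod-action}, I would show $\frac{d}{d\kappa} g(\kappa) = \langle \psi, e^{-B_\alpha} \mathcal{N}_+ e^{\kappa \mathcal{N}_+} e^{B_\alpha} \psi \rangle = \langle \phi_\kappa, (\mathcal{N}_+ + \mathcal{R}_\alpha)\, e^{-B_\alpha} e^{\kappa \mathcal{N}_+/2} \cdots \rangle$; it is cleaner to write $\frac{d}{d\kappa} g(\kappa) = \langle e^{\kappa \mathcal{N}_+/2} e^{B_\alpha} \psi, \; (\mathcal{N}_+ + e^{-B_\alpha}\mathcal{N}_+ e^{B_\alpha} - \mathcal{N}_+)\cdot(\text{nothing})\rangle$ and instead directly estimate $\frac{d}{d\kappa} g(\kappa) = \langle \psi, e^{-B_\alpha} \mathcal{N}_+ e^{B_\alpha}\, e^{-B_\alpha} e^{\kappa \mathcal{N}_+} e^{B_\alpha} \psi\rangle$.

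Concretely, inserting $e^{B_\alpha} e^{-B_\alpha} = 1$ gives $\frac{d}{d\kappa} g(\kappa) = \langle \psi, (e^{-B_\alpha} \mathcal{N}_+ e^{B_\alpha})\, F(\kappa) \psi \rangle$. Using \eqref{eq:bogo-mod-action} and \eqref{eq:bounds-d}, one has $e^{-B_\alpha} \mathcal{N}_+ e^{B_\alpha} = e^{-B_\alpha} \sum_p b_p^* b_p e^{B_\alpha} = \sum_p (\cosh\alpha_p b_p^* + \sinh\alpha_p b_{-p} + d_p^*)(\cosh\alpha_p b_p + \sinh\alpha_p b_{-p}^* + d_p)$, so that on $\mathcal{F}_{\perp\varphi}^{\leq N}$ this operator is bounded above, in the form sense, by $C_1(\mathcal{N}_+ + 1)$ for a constant $C_1 = C_1(\alpha)$ depending only on $\|\alpha\|_{\ell^2}$ (this is essentially \eqref{eq:bogo-mod-moment} with $k=1$, together with the $\ell^2$-summability of $\sinh\alpha_p$ and the error bounds \eqref{eq:bounds-d} which lose at most powers of $(\mathcal{N}_+ + 1)$ but gain $N^{-1}$). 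Since $e^{-B_\alpha} \mathcal{N}_+ e^{B_\alpha}$ and $F(\kappa)$ are both positive and $F(\kappa) = (e^{\kappa\mathcal{N}_+/2} e^{B_\alpha})^* \cdots$, I would write $\frac{d}{d\kappa} g(\kappa) \leq C_1 \langle \psi, (\mathcal{N}_+ + 1)^{1/2} F(\kappa) (\mathcal{N}_+ + 1)^{1/2} \psi \rangle$; the real point is to replace this by $C \langle \psi, e^{-B_\alpha}(\mathcal{N}_+ + 1) e^{\kappa \mathcal{N}_+} e^{B_\alpha}\psi\rangle = C g'(\kappa) + Cg(\kappa)$ up to commutators — i.e. commute $(\mathcal{N}_+ + 1)$ past $e^{B_\alpha}$ paying \eqref{eq:bogo-mod-moment}. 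This yields a differential inequality $g'(\kappa) \leq C_2\, g(\kappa)$ for $\kappa$ small enough that the $C_1$-term can be absorbed, from which $g(\kappa) \leq e^{C_2 \kappa} g(0) = e^{C_2\kappa} \|\psi\|^2$, and the operator inequality follows (with $C(\mathcal{N}_+ + 1)$ rather than $C\mathcal{N}_+$ to accommodate the $+1$ shifts).

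The main obstacle I anticipate is making the heuristic "$e^{-B_\alpha}\mathcal{N}_+ e^{B_\alpha} \lesssim \mathcal{N}_+ + 1$ hence $e^{-B_\alpha} e^{\kappa\mathcal{N}_+} e^{B_\alpha} \lesssim e^{C\kappa(\mathcal{N}_+ +1)}$" rigorous, because the exponential does not commute with $B_\alpha$ and one cannot naively conjugate inside the series term by term with uniform control — the error operators $d_p, d_p^*$ raise the power of $\mathcal{N}_+$ by $3/2$ each time \eqref{eq:bounds-d}, so iterating $k$ times to handle $\mathcal{N}_+^k$ would a priori produce $\mathcal{N}_+^{O(k)}$, which does not resum. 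The fix is precisely the differential-inequality / Gr\"onwall route above, which only ever uses the conjugation identity once (for the single factor $\mathcal{N}_+$ coming from $\frac{d}{d\kappa}$) rather than $k$ times; the $N^{-1}$ smallness in \eqref{eq:bounds-d} then makes the error terms harmless uniformly in $N$ for $\kappa$ below an $\alpha$-dependent threshold. A secondary technical point is justifying the differentiation under the inner product and the manipulation of unbounded operators on the truncated Fock space, which is routine since $\mathcal{F}_{\perp\varphi}^{\leq N}$ consists of states with at most $N$ particles so all operators involved are bounded on it, with norms possibly growing in $N$ — but the final estimate is $N$-independent.
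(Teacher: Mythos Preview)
Your Gr\"onwall route has a genuine gap: differentiating in $\kappa$ cannot close. Writing $\phi_\kappa = e^{\kappa\mathcal{N}_+/2} e^{B_\alpha}\psi$, you have $g(\kappa)=\|\phi_\kappa\|^2$ and $g'(\kappa)=\langle\phi_\kappa,\mathcal{N}_+\phi_\kappa\rangle$, so the inequality $g'(\kappa)\le C_2\, g(\kappa)$ would mean $\mathcal{N}_+$ is bounded on $\phi_\kappa$ uniformly in $N$, which is false. Your attempted workaround --- use $e^{-B_\alpha}\mathcal{N}_+ e^{B_\alpha}\le C_1(\mathcal{N}_++1)$, then commute $(\mathcal{N}_++1)$ back through $e^{B_\alpha}$ paying another constant $C_1'$ --- yields $g'(\kappa)\le C_1 C_1'\big(g'(\kappa)+g(\kappa)\big)$. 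But the constants in \eqref{eq:bogo-mod-moment} satisfy $C_1,C_1'\ge 1$ (indeed $C_1>1$ whenever $\alpha\neq 0$, as one sees already on the vacuum), so $(1-C_1C_1')g'(\kappa)\le C_1C_1' g(\kappa)$ is vacuous. The operator-inequality step you wrote, $\tfrac{d}{d\kappa}g(\kappa)\le C_1\langle\psi,(\mathcal{N}_++1)^{1/2}F(\kappa)(\mathcal{N}_++1)^{1/2}\psi\rangle$, is also not justified: $A\le B$ with $A,F\ge 0$ does not imply $AF\le B^{1/2}FB^{1/2}$ unless $A,B,F$ all commute.

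The paper avoids this by differentiating in the \emph{Bogoliubov-flow parameter} $s\in[0,1]$ rather than in $\kappa$, with a time-dependent prefactor: set $\xi_s=e^{c_s\kappa\mathcal{N}_+/2}e^{sB_\alpha}\psi$ for a decreasing $c_s$ with $c_1=1$. Then $\partial_s\|\xi_s\|^2$ brings down $B_\alpha$ (quadratic in $b,b^*$), not $\mathcal{N}_+$. Conjugating $B_\alpha$ by $e^{c_s\kappa\mathcal{N}_+}$ via \eqref{eq:intertwining} produces $\sinh(2c_s\kappa)\sum_p\alpha_p(b_p^*b_{-p}^*+b_pb_{-p})$, whose contribution is $O(\kappa)\,\|\alpha\|_{\ell^2}\langle\xi_s,(\mathcal{N}_++1)\xi_s\rangle$; the extra factor of $\kappa$ is exactly what your approach is missing. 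Choosing $c_s=e^{C\|\alpha\|_{\ell^2}(1-s)}$ makes the $\dot c_s\kappa\mathcal{N}_+$ term negative enough to absorb the positive $\mathcal{N}_+$-piece, leaving $\partial_s\|\xi_s\|^2\le C\kappa\|\xi_s\|^2$, and Gr\"onwall in $s$ finishes. The moral: differentiate in the parameter whose generator ($B_\alpha$) is relatively bounded by $\mathcal{N}_+$, not in the one whose generator \emph{is} $\mathcal{N}_+$.
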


\begin{proof}
The proof is based on a Gronwall argument. To this end we define for $s \in [0,1]$ and arbitrary $\psi \in \mathcal{F}_{\perp \varphi}^{\leq N}$ 
\begin{align}
\xi_s  =   e^{ c_s \kappa \mathcal{N}_+/2} e^{sB_\alpha} \psi 
\end{align}
for a monotone decreasing function $c_s :[0,1] \rightarrow \mathbb{R}_+$ with $c_1 = 1$ that we choose later. Then 
\begin{align}
\| \xi_1 \|^2 = \langle \psi, e^{-B_\alpha}  e^{\kappa \mathcal{N}_+} e^{B_\alpha} \psi \rangle, \quad \text{and} \quad \| \xi_0 \|^2  = \langle \psi,  e^{ c_0 \kappa \mathcal{N}_+}  \psi \rangle  \; .  
\end{align}
Thus in order to prove Lemma \ref{lemma:bogo-mod}, it is enough to control the derivative
\begin{align}
\partial_s \| \xi_s \|^2 =& 2 \Re \langle \xi_s, \partial_s \xi_s \rangle \notag \\
=& 2 \Re \langle \xi_s, \; \big( \dot{c}_s \kappa \mathcal{N}_+ + e^{c_s \kappa \mathcal{N}_+} B_\alpha e^{-c_s \kappa \mathcal{N}_+} \big) \xi_s \rangle 
\end{align}
and the idea is to control the second term of the r.h.s. through the first, choosing $c_s$ appropriately. Since 
\begin{align}
\label{eq:intertwining}
\mathcal{N}_+ b_p = b_p ( \mathcal{N}_+ - 1)  \quad \text{and} \quad \mathcal{N}_+ b_p^* = b_p^* ( \mathcal{N}_+ + 1)
\end{align}
we have 
\begin{align}
\Re \langle \xi_s ,   & \; e^{c_s \kappa \mathcal{N}_+} B_\alpha e^{-c_s \kappa  \mathcal{N}_+} \xi_s \rangle \notag \\
=& \Re  \sum_{p \in \pi_+^*} \alpha_p \; \langle \xi_s, \;  \big( e^{-2 \kappa c_s } b_p^*b^*_{-p} -  e^{2 \kappa c_s} b_pb_{-p} \big) \xi_s \rangle  \notag \\
=& \frac{1}{2} \Re \sum_{p \in \pi_+^*} \alpha_p \; \langle \xi_s, \; \big( \big( e^{-2 \kappa c_s } -e^{2\kappa c_s } \big)  b_p^*b^*_{-p} - \big(  e^{2 \kappa c_s} - e^{-2\kappa c_s }\big)  b_pb_{-p} \big) \xi_s \rangle \notag \\
=&  \sinh( 2 \kappa c_s) \sum_{p \in \pi_+^*} \alpha_p \langle \xi_s, \big( b_p^* b_{-p}^* + b_p b_{-p} \big) \xi_s \rangle \; . 
\end{align}
Now, using that $\alpha \in \ell^2( \pi_+^*)$, it thus follows for sufficiently small $c_s \kappa>0$ that $\sinh( 2\kappa c_s ) \leq \tilde{C} c_s \kappa $ for some $\tilde{C} >0$ and therefore 
\begin{align}
\label{eq:bound1}
\partial_s \| \xi_s \|^2 \leq  \kappa \langle \xi_s, \big( \dot{c}_s  + 2 \| \alpha \|_{\ell^2} \widetilde{C} c_s   \big) \mathcal{N}_+ \xi_s \rangle  + 2  \tilde{C} c_s \kappa \| \alpha \|_{\ell^2} \| \xi_s \|^2 \; .
\end{align}
Now we choose 
\begin{align}
c_s = e^{C \| \alpha\|_2 (1-s) } 
\end{align}
with $C>2\tilde{C}$. Then $c_1 = 1$ and $c_0 = e^{C \| \alpha \|_{\ell^2}}$ and $\dot{c}_s = - C \| \alpha \|_{\ell^2}  c_s $, and in particular $\dot{c}_s  + 2 \tilde{C} \| \alpha \|_{\ell^2}c_s < 0$ leading with \eqref{eq:bound1} to 
\begin{align}
\partial_s \| \xi_s \|^2  \leq 2 c_s \kappa \| \alpha \|_{\ell^2} \| \xi_s \|^2  \; 
\end{align}
that by Gronwall's inequality yields the desired bound of Lemma \ref{lemma:bogo-mod}. 
\end{proof}

\subsection{Properties of the cubic unitary $e^{A_\eta}$}

Here we discuss properties of the cubic unitary $e^{A_\eta}$ defined in \eqref{def:cubic}. As proven in \cite[Lemma 5.1]{BCCS}, the cubic unitary approximately preserves powers of the number of excitations, i.e. 
\begin{align}
\label{eq:cubic-moment}
e^{-A_\eta} ( \mathcal{N}_+ + 1)^k e^{A_\eta} \leq C_k ( \mathcal{N}_+ + 1)^k
\end{align}
for fixed $k \in \mathbb{N}$. Similarly as in the previous section we show that $e^{A_\eta}$ approximately preserves the exponential of the number of particles. 

\begin{lemma}
\label{lemma:cubic}
For $\eta \in \ell^2( \pi_+^*)$, the cubic transformation $e^{A_\eta}$ defined in \eqref{def:cubic} satisfies for sufficiently small $\kappa>0$
\begin{align}
e^{-A_\eta} e^{\kappa \mathcal{N}_+} e^{A_\eta} \leq e^{C \kappa ( \mathcal{N}_+ + 1)} 
\end{align}
for a constant $C>0$ as an operator inequality on $\mathcal{F}_{\perp \varphi}^{\leq N}$. 
\end{lemma}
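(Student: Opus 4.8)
Following the Gronwall scheme in the proof of Lemma \ref{lemma:bogo-mod}, I would fix $\psi \in \mathcal{F}_{\perp \varphi}^{\leq N}$, a monotone decreasing function $c_s : [0,1] \to \mathbb{R}_+$ with $c_1 = 1$ to be chosen, and set $\xi_s = e^{c_s \kappa \mathcal{N}_+ / 2} e^{s A_\eta} \psi$, so that $\|\xi_1\|^2 = \langle \psi, e^{-A_\eta} e^{\kappa \mathcal{N}_+} e^{A_\eta} \psi \rangle$ and $\|\xi_0\|^2 = \langle \psi, e^{c_0 \kappa \mathcal{N}_+} \psi \rangle$ with $c_0$ a constant. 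Differentiating gives
\[
\partial_s \| \xi_s \|^2 = \dot c_s \kappa \, \langle \xi_s, \mathcal{N}_+ \xi_s \rangle + 2 \, \Re \, \langle \xi_s, \, e^{c_s \kappa \mathcal{N}_+ / 2} A_\eta \, e^{- c_s \kappa \mathcal{N}_+ / 2} \xi_s \rangle ,
\]
and, exactly as in Lemma \ref{lemma:bogo-mod}, it suffices to bound the second term by $\widetilde{C} c_s \kappa \, \langle \xi_s, (\mathcal{N}_+ + 1) \xi_s \rangle$: choosing $c_s = e^{C(1-s)}$ with $C > \widetilde{C}$ makes the $\mathcal{N}_+$-part of $\partial_s \|\xi_s\|^2$ non-positive, and Gronwall yields $\|\xi_1\|^2 \leq e^{C' \kappa} \langle \psi, e^{c_0 \kappa \mathcal{N}_+} \psi \rangle \leq \langle \psi, e^{C'' \kappa ( \mathcal{N}_+ + 1)} \psi \rangle$, which is the claimed operator inequality since $\psi$ is arbitrary.

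To estimate the commutator term I would commute the weight $e^{c_s \kappa \mathcal{N}_+/2}$ through $A_\eta$ via the intertwining relations \eqref{eq:intertwining}. Each monomial in $A_\eta$ is homogeneous in $\mathcal{N}_+$ — the monomials $b^*_{r+v} b^*_{-r} b^*_{-v}$ and $b^*_{r+v} b^*_{-r} b_v$ raise it by $3$ and $1$, their adjoints lower it by $3$ and $1$ — so conjugation by $e^{c_s \kappa \mathcal{N}_+/2}$ merely multiplies them by the scalars $e^{\pm 3 c_s \kappa/2}$, $e^{\pm c_s \kappa/2}$. Pairing each monomial with its hermitian conjugate and taking real parts, the anti-self-adjoint combinations (carrying factors $\cosh(\tfrac{3}{2}c_s\kappa)$, $\cosh(\tfrac12 c_s\kappa)$) drop out, and one is left with the self-adjoint parts weighted by $\sinh(\tfrac{3}{2}c_s\kappa)$, $\sinh(\tfrac12 c_s\kappa)$, both $\leq \widetilde C c_s \kappa$ for $c_s \kappa$ small. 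Hence
\[
\big| \Re \, \langle \xi_s, \, e^{c_s \kappa \mathcal{N}_+ / 2} A_\eta \, e^{- c_s \kappa \mathcal{N}_+ / 2} \xi_s \rangle \big| \leq \widetilde C c_s \kappa \Big( \big| \langle \xi_s, A^{(3)}_\eta \xi_s \rangle \big| + \big| \langle \xi_s, A^{(1)}_\eta \xi_s \rangle \big| \Big),
\]
where $A^{(3)}_\eta = N^{-1/2} \sum_{r \in P_H, v \in P_L} \eta_r \sinh(\eta_v) \, b^*_{r+v} b^*_{-r} b^*_{-v}$ and $A^{(1)}_\eta = N^{-1/2} \sum_{r \in P_H, v \in P_L} \eta_r \cosh(\eta_v) \, b^*_{r+v} b^*_{-r} b_v$.

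The main obstacle — and the only point really new compared with Lemma \ref{lemma:bogo-mod} — is the estimate $\big| \langle \xi_s, A^{(3)}_\eta \xi_s \rangle \big| + \big| \langle \xi_s, A^{(1)}_\eta \xi_s \rangle \big| \leq C \langle \xi_s, (\mathcal{N}_+ + 1) \xi_s \rangle$ on $\mathcal{F}_{\perp \varphi}^{\leq N}$, with $C$ independent of $N$. This is of the same type as the cubic estimates used to prove \eqref{eq:cubic-moment} in \cite[Lemma 5.1]{BCCS}: writing, for instance, $\langle \xi_s, A^{(3)}_\eta \xi_s \rangle = N^{-1/2} \sum_{r,v} \eta_r \sinh(\eta_v) \langle b_{r+v} \xi_s, b^*_{-r} b^*_{-v} \xi_s \rangle$ and applying Cauchy--Schwarz in $(r,v)$ so that the $\ell^2$-sequences $\sinh(\eta_v)$ (in $v$) and $\eta_r$ (in $r$) sit on opposite factors, one bounds the two resulting factors by \eqref{eq:b-bounds1}, \eqref{eq:b-bounds3} and then uses $|P_L| \leq C N^{3/2}$, $\sum_{r \in P_H} |\eta_r|^2 \leq C N^{-1/2}$ and $\mathcal{N}_+ \leq N$ on $\mathcal{F}_{\perp \varphi}^{\leq N}$ — all available from \eqref{eq:bounds-eta} — so that the powers of $N$ generated cancel against the prefactor $N^{-1/2}$ and leave a bound by $C \langle \xi_s, (\mathcal{N}_+ + 1) \xi_s \rangle$; for $A^{(1)}_\eta$ one argues in the same way, now exploiting the pointwise bound $\sup_{r \in P_H} |\eta_r| \leq C N^{-1}$ to absorb the summation over $P_L$. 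Inserting this into the two displays above closes the Gronwall estimate described in the first paragraph and finishes the proof.
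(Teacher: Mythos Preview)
Your proof is correct and follows essentially the same approach as the paper: the Gronwall scheme with $\xi_s = e^{c_s\kappa\mathcal{N}_+/2}e^{sA_\eta}\psi$, the intertwining \eqref{eq:intertwining} producing $\sinh$-factors, the cubic estimates bounding $A^{(3)}_\eta,A^{(1)}_\eta$ by $C\langle\xi_s,(\mathcal{N}_++1)\xi_s\rangle$, and the choice $c_s=e^{C(1-s)}$ are all identical to the paper's argument. The only difference is cosmetic: for $A^{(3)}_\eta$ the paper keeps \emph{both} $\eta_r$ and $\sinh(\eta_v)$ on the same Cauchy--Schwarz factor (together with $b^*_{-v}$), which gives directly $CN^{-1/2}\|\mathcal{N}_+\xi_s\|\,\|(\mathcal{N}_++1)^{1/2}\xi_s\|$ and then uses only $\mathcal{N}_+\leq N$ --- this avoids the extra counting with $|P_L|$ and $\sum_{r\in P_H}|\eta_r|^2\leq CN^{-1/2}$ that your splitting requires, but both routes close.
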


\begin{proof}
Similarly as in the proof of Lemma \ref{lemma:bogo-mod}, we use a Gronwall type argument for the Fock space vector 
\begin{align}
\xi_s = e^{c_s \kappa \mathcal{N}_+ /2} e^{ s A_\eta} \psi, 
\end{align}
where we choose the monotone decreasing function $c_s:[0,1] \rightarrow \mathbb{R}_+$ with $c_1 = 1$ later. Since 
\begin{align}
\| \xi_1 \|^2 = \langle \psi, e^{-A_\eta} e^{\kappa \mathcal{N}_+} e^{A_\eta} \psi \rangle, \quad \text{and} \quad \| \xi_0 \|^2 = \langle \xi, e^{c_0 \kappa \mathcal{N}_+} \psi \rangle  \; . 
\end{align}
we need to control 
\begin{align}
\label{eq:deriv-xi}
\partial_s \| \xi_s \|^2 = 2 \Re \langle \xi_s, \big( \dot{c}_s \kappa
\mathcal{N}_+ + e^{c_s \kappa \mathcal{N}_+} A_\alpha e^{-c_s \kappa \mathcal{N}_+} \big) \xi_s \rangle \;  . 
\end{align}
With \eqref{eq:intertwining} we find 
\begin{align}
\Re & \; e^{c_s \kappa  \mathcal{N}_+} A_\alpha e^{-c_s \kappa \mathcal{N}_+} \notag \\
&= 2 N^{-1/2} \sum_{r \in P_H, v \in P_L} \eta_r \big(\sinh( 3c_s \kappa )  \sinh (\eta_v) b^*_{r+v} b_{-r}^* b_{-v}^* + \sinh( \kappa c_s)  \cosh (\eta_v ) b^*_{r+v} b^*_{-r}b_v  + {\rm h.c.} \big) \; . 
\end{align}
On the one hand we have for any $\psi \in \mathcal{F}_{\perp \varphi}^{\leq N}$
\begin{align}
\vert \langle \psi,  & \; \sum_{r \in P_H, v \in P_L} \eta_r  \sinh (\eta_v) b^*_{r+v} b_{-r}^* b_{-v}^* \psi \rangle \vert \notag \\
\leq& \bigg( \sum_{r \in P_H, v \in P_L} \| b_{r+v} b_{-r} \psi \|^2 \bigg)^{1/2} \bigg( \sum_{r \in P_H, v \in P_L} \vert \eta_r \vert^2 \vert \sinh( \eta_v)\vert^2  \| b_{-v} \psi \|^2 \bigg)^{1/2}
\end{align}
Since $\vert \eta_r \vert \leq C r^{-2}$ and $\vert \sinh( \eta_v) \vert^2 \leq \vert \eta_v \vert \leq C v^{-2}$ from \eqref{eq:bounds-eta}, it follows 
\begin{align}
\vert \langle \psi,  & \; \sum_{r \in P_H, v \in P_L} \eta_r  \sinh (\eta_v) b^*_{r+v} b_{-r}^* b_{-v}^* \psi \rangle \vert \leq C_1 \| \mathcal{N}_+ \psi \| \; \| ( \mathcal{N}_+ + 1)^{1/2} \psi \| \; .\label{eq:cubic-bound1}
\end{align}
for a constant $C_1>0$. On the other hand, we have 
\begin{align}
\vert \langle \psi, & \; \sum_{r \in P_H, v \in P_L} \eta_r  \cosh (\eta_v ) b^*_{r+v} b^*_{-r}b_v \psi \rangle \vert \notag \\
\leq& \bigg( \sum_{r \in P_H, v \in P_L} \| b_{r+v} b_{-r} \psi \|^2 \bigg)^{1/2} \bigg( \sum_{r \in P_H, v \in P_L} \vert \eta_r \vert^2 \vert \cosh( \eta_v)\vert^2 \| b_v \psi \|^2 \bigg)^{1/2}
\end{align}
and since $\vert \cosh(\eta_v) \vert  \leq C$ with \eqref{eq:bounds-eta} , it follows 
\begin{align}
\vert \langle \psi, & \; \sum_{r \in P_H, v \in P_L} \eta_r  \cosh (\eta_v ) b^*_{r+v} b^*_{-r}b_v \psi \rangle \vert \leq C_2 \| \mathcal{N}_+ \psi \| \; \| ( \mathcal{N}_+ + 1)^{1/2} \psi \| \; . \label{eq:cubic-bound2}
\end{align}
for some constant $C_2>0$. With \eqref{eq:cubic-bound1} and \eqref{eq:cubic-bound2}, we thus find for sufficiently small $\kappa c_s >0$ that 
\begin{align}
\vert \Re \langle \psi, e^{c_s\mathcal{N}_+} A_\alpha e^{-c_s \mathcal{N}_+}\psi \rangle \vert \leq C_3 c_s \kappa \langle \psi, ( \mathcal{N}_+ + 1) \psi \rangle 
\end{align}
for some $C_3>0$ that yields with the choice $c_s = e^{C_3 (1-s)}$ from \eqref{eq:deriv-xi} in 
\begin{align}
\partial_s \| \xi (s) \|^2 \leq \kappa \langle \xi(s), \big( - C_3 + C_3 ) \mathcal{N}_+ \xi(s) \rangle + C_3\kappa  \| \xi(s) \|^2 \leq C_3 \kappa \| \xi(s) \|^2 \; . 
\end{align}
Then Lemma \ref{lemma:cubic} follows with Gronwalls argument. 
\end{proof}

\subsection{Properties of the asymptotic Bogoliubov transformation}

In the proof of Theorem, we will show that the action of the three unitaries $e^{B_\eta} e^{A_\eta} e^{B_\tau}$ is asymptotically given through a (standard) Bogoliubov transformation w.r.t. standard creation and annihilation operators that is a unitary map on the full bosonic Fock space $\mathcal{F}$ and given for $\alpha \in \ell^2( \pi_+^*)$ by 
\begin{align}
\label{def:bogo}
e^{K_\alpha}, \quad \text{with} \quad K_\alpha = \sum_{p \in \pi_+^*} \alpha_p \big( a_{-p}^*a_p - a_p a_{-p} \big) , \quad \text{for} \quad \alpha \in \ell^2( \pi_+^* ) \; . 
\end{align}
In contrast to the modified Bogoliubov transform (see \eqref{eq:bogo-mod-action}), the action of the standard Bogoliubov transform on creation and annihilation operators is exactly known to be 
\begin{align}
\label{eq:bogo-action}
e^{-K_\alpha} a_p e^{K_\alpha} =& \cosh ( \alpha_p) a_p + \sinh( \alpha_p) a_{-p}^* \notag \\
e^{-K_\alpha} a_p^* e^{K_\alpha} =& \cosh ( \alpha_p) a_p^* + \sinh( \alpha_p) a_{-p} \; . 
\end{align}
We remark that as an immediate consequence of \eqref{eq:bogo-action} we have for any $h \in \ell^2( \pi^*)$ from 
\begin{align}
\label{eq:bogo-action-h}
e^{-K_\alpha} a(h) e^{K_\alpha} =& a \big( \overline{\cosh ( \alpha)} h\big)  + a^*\big( \sinh( \alpha) \overline{h^{-}} \big) \notag \\
e^{-K_\alpha} a_p^* e^{K_\alpha} =& a^* \big( \overline{\cosh ( \alpha)} h\big)  + a\big( \sinh( \alpha) \overline{h^{-}} \big)  \; 
\end{align}
where we used for any $h \in \ell^2( \pi_+^*)$ 
\begin{align}
\label{eq:not-minus}
h^{-}_p = h_{-p} \; . 
\end{align}

It is well known that the standard Bogoliubov transform approximately preserves powers of the number of excitations, i.e. 
\begin{align}
\label{eq:bogo-mom}
e^{-K_\alpha} ( \mathcal{N}_+ + 1)^k e^{K_\alpha} \leq C_k ( \mathcal{N}_+ + 1)^k \; . 
\end{align}
We prove that it approximately preserves the exponential of the number of excitations, similarly as the modified Bogoliubov transform (see Lemma \ref{lemma:bogo-mod}).

\begin{lemma}
\label{lemma:bogo-standard}
For $\alpha \in \ell^2( \pi^*_+)$, the standard Bogoliubov transform $e^{K_\alpha}$ defined in \eqref{def:bogo} satisfies for $\kappa>0$ 
\begin{align}
e^{-K_\alpha} e^{\kappa \mathcal{N}_+} e^{K_\alpha} \leq e^{ C \kappa( \mathcal{N}_+ + 1)}
\end{align}
for a constant $C>0$ as an operator inequality on $\mathcal{F}$. 
\end{lemma}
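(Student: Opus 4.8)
The plan is to mimic the Gronwall argument used in Lemma~\ref{lemma:bogo-mod}, but now exploiting that the action of the \emph{standard} Bogoliubov transformation on $a_p, a_p^*$ is given \emph{exactly} by \eqref{eq:bogo-action}, with no error terms $d_p$. For $s \in [0,1]$ and arbitrary $\psi \in \mathcal{F}$ I would set
\begin{align}
\xi_s = e^{c_s \kappa \mathcal{N}_+/2} e^{s K_\alpha} \psi,
\end{align}
where $c_s : [0,1] \rightarrow \mathbb{R}_+$ is a monotone decreasing function with $c_1 = 1$ to be chosen. Then $\| \xi_1 \|^2 = \langle \psi, e^{-K_\alpha} e^{\kappa \mathcal{N}_+} e^{K_\alpha} \psi \rangle$ and $\| \xi_0 \|^2 = \langle \psi, e^{c_0 \kappa \mathcal{N}_+} \psi \rangle$, so it suffices to control
\begin{align}
\partial_s \| \xi_s \|^2 = 2 \Re \langle \xi_s, \big( \dot{c}_s \kappa \mathcal{N}_+ + e^{c_s \kappa \mathcal{N}_+} K_\alpha e^{-c_s \kappa \mathcal{N}_+} \big) \xi_s \rangle.
\end{align}

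Next I would compute the conjugated generator. Using the intertwining relations $\mathcal{N}_+ a_p = a_p(\mathcal{N}_+ - 1)$ and $\mathcal{N}_+ a_p^* = a_p^*(\mathcal{N}_+ + 1)$ (the analogue of \eqref{eq:intertwining} for standard operators), one gets $e^{c_s \kappa \mathcal{N}_+} a_{-p}^* a_p e^{-c_s \kappa \mathcal{N}_+} = a_{-p}^* a_p$ (the number-preserving part is untouched) while $e^{c_s\kappa\mathcal{N}_+} a_p a_{-p} e^{-c_s\kappa\mathcal{N}_+} = e^{-2\kappa c_s} a_p a_{-p}$ and its adjoint picks up $e^{2\kappa c_s}$. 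Hence, after symmetrizing,
\begin{align}
\Re \, e^{c_s \kappa \mathcal{N}_+} K_\alpha e^{-c_s \kappa \mathcal{N}_+} = \sinh(2\kappa c_s) \sum_{p \in \pi_+^*} \alpha_p \big( a_p^* a_{-p}^* + a_p a_{-p} \big) + \big(\text{the number-preserving remainder}\big),
\end{align}
where the remainder $\sum_p \alpha_p(a_{-p}^* a_p - a_p a_{-p})$-type piece that is \emph{not} multiplied by a small coefficient actually cancels: only the off-diagonal $a^*a^*$ and $aa$ terms survive conjugation with a nontrivial factor, and the genuinely number-preserving part of $K_\alpha$ vanishes because $K_\alpha$ is purely of pair-creation/annihilation type. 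The quantity $\langle \xi_s, \sum_p \alpha_p (a_p^* a_{-p}^* + a_p a_{-p}) \xi_s \rangle$ is then bounded by $C \| \alpha \|_{\ell^2} \langle \xi_s, (\mathcal{N}_+ + 1) \xi_s \rangle$ via \eqref{eq:a-bounds3} (splitting into the $\| (\mathcal{N}_+ +1)^{1/2}\cdot\|^2$ norm on both sides and using Cauchy--Schwarz). Since $\sinh(2\kappa c_s) \leq \tilde C \kappa c_s$ for $\kappa c_s$ small, this gives
\begin{align}
\partial_s \| \xi_s \|^2 \leq \kappa \langle \xi_s, \big( \dot{c}_s + 2 \tilde C \| \alpha \|_{\ell^2} c_s \big) \mathcal{N}_+ \xi_s \rangle + 2 \tilde C \kappa c_s \| \alpha \|_{\ell^2} \| \xi_s \|^2.
\end{align}

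Finally, I would close the argument exactly as in Lemma~\ref{lemma:bogo-mod}: choosing $c_s = e^{C \| \alpha \|_{\ell^2}(1-s)}$ with $C > 2\tilde C$ makes $\dot{c}_s + 2\tilde C \|\alpha\|_{\ell^2} c_s = (-C + 2\tilde C)\|\alpha\|_{\ell^2} c_s < 0$, killing the $\mathcal{N}_+$ term, so $\partial_s \| \xi_s \|^2 \leq 2\tilde C \kappa c_s \| \alpha \|_{\ell^2} \| \xi_s \|^2$; Gronwall then yields $\| \xi_1 \|^2 \leq e^{c_0 \kappa \|\alpha\|_{\ell^2} \cdot \mathrm{const}} \| \xi_0 \|^2$, i.e. $e^{-K_\alpha} e^{\kappa \mathcal{N}_+} e^{K_\alpha} \leq e^{C\kappa(\mathcal{N}_+ + 1)}$ after absorbing $c_0 = e^{C\|\alpha\|_{\ell^2}}$ into the constant. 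The main point — and the only place requiring care — is verifying that conjugation of $K_\alpha$ by $e^{c_s \kappa \mathcal{N}_+}$ produces a bounded operator controlled by $\mathcal{N}_+ + 1$ with a coefficient vanishing as $\kappa \to 0$; this is in fact \emph{easier} than the modified case because there are no $d_p$ error terms and no truncation, so one expects no genuine obstacle, merely a cleaner repetition of the earlier computation (and indeed the statement holds for all $\kappa > 0$, not just small $\kappa$, since the $c_s$ trick handles the coefficient growth).
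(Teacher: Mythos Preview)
Your proposal is correct and follows exactly the approach the paper indicates: it reproduces the Gronwall argument of Lemma~\ref{lemma:bogo-mod} verbatim, replacing the modified operators $b_p,b_p^*$ and the bounds \eqref{eq:b-bounds1}--\eqref{eq:b-bounds3} by the standard $a_p,a_p^*$ and \eqref{eq:a-bounds1}--\eqref{eq:a-bounds3}, which is precisely what the paper's one-line proof sketch prescribes. One small caveat: your final parenthetical remark that the bound holds ``for all $\kappa>0$'' is not actually justified by the argument you give, since you invoke $\sinh(2\kappa c_s)\le \tilde C\kappa c_s$, which requires $\kappa c_s$ small; the paper's own Lemma~\ref{lemma:bogo-mod} carries the same smallness hypothesis, and that is all that is needed downstream.
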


\begin{proof}
The proof of Lemma \ref{lemma:bogo-standard} follows with similar arguments as in the proof of Lemma \ref{lemma:bogo-mod}, using instead of the estimates \eqref{eq:b-bounds1} for the modified creation and annihilation operators, the similar estimates \eqref{eq:a-bounds1} for the standard ones. 
\end{proof}

\subsection{Baker-Campbell-Hausdorff formulas}

In this section we prove specific Baker-Campbell-Hausdorff formulas for non-commuting operators, that allow to explicitly compute the conjugation $e^{d \Gamma (O)} a_p e^{- d \Gamma (O)}$ for $p \in \pi^*$. For this we introduce for an operator $O$ on $\ell^2( \pi_+^*)$ the notation 
\begin{align}
\label{eq:not-O-p}
O_p := O_{\cdot,p}
\end{align}
in particular leading to $e^{O_p} = e^{O_{\cdot, p}}$.

\begin{lemma}
\label{lemma:BCH} 
Let $h \in \ell^2( \pi^*)$ and $O$ be a bounded operator on $\ell^2( \pi^*)$. Then, we have with the notation \eqref{eq:not-O-p} 
\begin{align}
e^{ d \Gamma (O)} a^*_p e^{- d \Gamma (O)} = a^*( e^{O_p} ), \quad \text{and} \quad e^{d \Gamma (O)} a_p e^{-d \Gamma (O)} = a( e^{-O_p} ) \;  
\end{align}
\end{lemma}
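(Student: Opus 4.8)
The plan is to prove both identities by the standard technique of differentiating a one-parameter family and recognizing that it solves a linear first-order ODE (in operator form) with the same initial data as the claimed answer. Concretely, for the first identity I would fix $p \in \pi^*$ and define, for $t \in \mathbb{R}$, the operator-valued (more precisely, form-valued) function
\begin{align}
F(t) = e^{t\, d\Gamma(O)} a^*_p\, e^{-t\, d\Gamma(O)} \; .
\end{align}
Then $F(0) = a^*_p = a^*(\varphi_p)$, and differentiating gives $F'(t) = e^{t\,d\Gamma(O)}\,[d\Gamma(O), a^*_p]\, e^{-t\,d\Gamma(O)}$. The key algebraic input is the commutation relation $[d\Gamma(O), a^*(h)] = a^*(Oh)$ for $h \in \ell^2(\pi^*)$, which follows directly from the canonical commutation relations \eqref{eq:CCR} and the definition of $d\Gamma$; applied with $h = \varphi_p$ this reads $[d\Gamma(O), a^*_p] = a^*(O\varphi_p) = a^*(O_{\cdot,p})$ using the notation \eqref{eq:not-O-p}. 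Hence $F'(t) = e^{t\,d\Gamma(O)} a^*(O_p)\, e^{-t\,d\Gamma(O)}$, and more generally, by linearity of $h \mapsto a^*(h)$ and iterating, one sees that $G(t) := e^{t\,d\Gamma(O)} a^*( e^{tO_p} )\, e^{-t\,d\Gamma(O)}$ is a natural candidate.

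The cleanest way to finish: set $H(t) = e^{t\,d\Gamma(O)} a^*(h(t))\, e^{-t\,d\Gamma(O)}$ with $h(t) = e^{tO}\varphi_p$, so that $h(0) = \varphi_p$ and $h'(t) = O h(t)$. Differentiating,
\begin{align}
H'(t) = e^{t\,d\Gamma(O)}\big( [d\Gamma(O), a^*(h(t))] + a^*(h'(t)) \big) e^{-t\,d\Gamma(O)} = e^{t\,d\Gamma(O)}\big( a^*(Oh(t)) - a^*(Oh(t)) \big) e^{-t\,d\Gamma(O)} = 0 \; ,
\end{align}
where the crucial cancellation uses $[d\Gamma(O), a^*(h)] = a^*(Oh)$ together with $h'(t) = O h(t)$ (note the sign: conjugation by $e^{-t\,d\Gamma(O)}$ on the right produces $-a^*(h'(t))$... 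I must be careful here). So $H(t) \equiv H(0) = a^*(\varphi_p)$, which rearranges to $e^{t\,d\Gamma(O)} a^*(\varphi_p) e^{-t\,d\Gamma(O)} = a^*(e^{-tO}\varphi_p)$; then evaluating at $t = -1$... again a sign to track. Once the signs are pinned down, setting $t=1$ yields $e^{d\Gamma(O)} a^*_p e^{-d\Gamma(O)} = a^*(e^{O_p})$, and the annihilation operator identity follows either by taking adjoints (using that $d\Gamma(O)$ need not be self-adjoint, so one repeats the argument) or directly by the same ODE with $[d\Gamma(O), a(h)] = -a(\overline{O}^{*}h)$-type relation, landing on $a(e^{-O_p})$.

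I expect the main obstacle to be bookkeeping rather than conceptual: getting the signs and the placement of $O$ versus its adjoint/transpose correct, since $d\Gamma(O)$ for a general (non-self-adjoint) bounded $O$ satisfies $[d\Gamma(O), a^*(h)] = a^*(Oh)$ but $[d\Gamma(O), a(h)] = -a(O^{*}h)$, and the kernel notation $O_p = O_{\cdot,p}$ must be matched against how $O$ acts on $\varphi_p$ in momentum space. A secondary technical point is domain/boundedness: since $O$ is assumed bounded on $\ell^2(\pi^*)$, $e^{O}$ is bounded, $e^{O_p} = e^{O}\varphi_p \in \ell^2(\pi^*)$, and all manipulations can be justified on a dense domain of finite-particle vectors where $d\Gamma(O)$ and the creation/annihilation operators act without subtlety, then extended by density; I would note this briefly but not belabor it. The differentiation-of-a-conjugation argument is entirely standard (it is the same mechanism underlying \eqref{eq:bogo-action}), so modulo the sign tracking the proof is short.
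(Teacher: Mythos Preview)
Your approach is correct and takes a genuinely different route from the paper. The paper expands $e^{d\Gamma(O)} a^*_p e^{-d\Gamma(O)}$ as the Baker--Campbell--Hausdorff series $\sum_{j\ge 0} \frac{1}{j!}\,\mathrm{ad}_{d\Gamma(O)}^{(j)}(a^*_p)$, proves by induction that $\mathrm{ad}_{d\Gamma(O)}^{(j)}(a^*_p) = a^*(O^j_{\cdot,p})$, and then sums; the annihilation identity follows by taking adjoints and replacing $O$ with $-O$. Your ODE method is equally standard and arguably slightly cleaner, since it sidesteps the series altogether (and hence any discussion of its convergence, though for bounded $O$ this is harmless anyway). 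The paper's approach is in turn more explicit about the underlying combinatorics. Both rest on the same algebraic input $[d\Gamma(O), a^*(h)] = a^*(Oh)$.

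On the sign you flagged: differentiating $H(t) = e^{t\,d\Gamma(O)} a^*(h(t)) e^{-t\,d\Gamma(O)}$ gives
\begin{align}
H'(t) = e^{t\,d\Gamma(O)}\big(a^*(Oh(t)) + a^*(h'(t))\big) e^{-t\,d\Gamma(O)} ,
\end{align}
so the cancellation requires $h'(t) = -Oh(t)$, i.e.\ $h(t) = e^{-tO}\varphi_p$, not $e^{tO}\varphi_p$. Then $H(t)\equiv a^*_p$ reads $e^{t\,d\Gamma(O)} a^*(e^{-tO}\varphi_p) e^{-t\,d\Gamma(O)} = a^*_p$; evaluating at $t=-1$ (or replacing $O$ by $-O$) and using $e^{O}\varphi_p = (e^O)_{\cdot,p} = e^{O_p}$ gives the first identity. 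The second follows, as in the paper, by taking the hermitian conjugate and substituting $-O$ for $O$.
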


\begin{proof} The proof is based on the identity 
\begin{align}
\label{eq:series}
e^{ d \Gamma (O)} a^*_p e^{-d \Gamma (O)} = \sum_{j=0}^\infty \frac{1}{j!} {\rm ad}_{d \Gamma (O)}^{(j)} \big( a^*_p \big) 
\end{align}
where we introduced the notation of the $j$-th nested commutator, defined recursively by
\begin{align}
{\rm ad}_{d\Gamma (O)}^{(0)} ( a^*_p) = a^*_p, \quad {\rm ad}_{d\Gamma (O)}^{(j+1)} ( a^*_p) = \big[d\Gamma (O) , {\rm ad}_{d\Gamma (O)}^{(j)} ( a^*_p) \big] \; . 
\end{align}
We claim that the $j$-th nested commutator is given by 
\begin{align}
\label{eq:ind-hyp}
{\rm ad}_{d \Gamma (O)}^{(j)} \big( a^*_p \big)  = a^*( O_p^j)  
\end{align}
with the notation 
\begin{align}
O_p := O_{\cdot,p}, \quad \text{and} \quad O_p^{j} = O^{j}_{\cdot, p} \; . 
\end{align}
In fact \eqref{eq:ind-hyp} follows by induction: Assuming \eqref{eq:ind-hyp} to hold true for fixed $j \in \mathbb{N}$, we find 
\begin{align}
{\rm ad}_{d \Gamma (O)}^{(j+1)} \big( a^*(h) \big) =& \big[ d \Gamma (O), {\rm ad}_{d\Gamma (O)}^{(j)} ( a^*_p) \big] \notag \\
&= \sum_{k, \ell \in \pi^*} O_{k,\ell} O_{m,p}^{j} \big[ a_k^*a_\ell, a_m^* \big]  ) \notag \\
&= \sum_{k, \ell \in \pi^*} O_{k,p} O_{\ell,p} a^*k = a^*(O^{j+1}_p )   
\end{align}
and \eqref{eq:ind-hyp} follows for all $j \in \mathbb{N}$. 
Then the first equality of Lemma \ref{lemma:BCH} follows from \eqref{eq:ind-hyp} and \eqref{eq:series}. The second equality of Lemma \ref{lemma:BCH} then follows by taking the hermitian conjugate of the first and replacing $O$ with $-O$. 
\end{proof}

\section{Proof of Theorems} 
\label{sec:proofs}

The proof of Theorem \ref{thm:mgf} and Theorem \ref{thm:lde-O} use the same strategy thus we present them together. 

\subsection{Proof of Theorem \ref{thm:mgf} and Theorem \ref{thm:lde-O}} 

The proofs are split in three steps: 

We first show, based on the norm-approximation of the ground state and the exponential bounds \eqref{eq:exp-bound}, that the generating function of $d\Gamma (O)$ resp. $\mathcal{N}_+$ w.r.t. to the ground state $\psi_N$ is well approximated by 
\begin{align}
\label{eq:limit}
\langle \Omega, e^{-K_\nu} e^{\lambda d\Gamma (O)} e^{K_\nu} \Omega \rangle , \quad \text{resp.} \quad \langle \Omega, e^{-K_\nu} e^{\lambda \mathcal{N}_+} e^{K_\nu} \Omega \rangle 
\end{align}
for Theorem \ref{thm:mgf} resp. Theorem \ref{thm:lde-O} that is the generating function of $\mathcal{N}_+$ resp. $d \Gamma (O)$ w.r.t. to the quasi-free state $e^{K_\nu} \Omega$ where 
\begin{align}
\label{def:bogo-nu}
e^{K_\nu}, \quad \text{with} \quad K_\nu = \sum_{p \in \pi_+^*} \nu_p \big( a_{-p}^*a_p - a_p a_{-p} \big) , \quad \text{for} \quad \nu_p \quad \text{given by} \quad \eqref{def:nu} \; . 
\end{align}
This is summarized in the following Lemma:

\begin{lemma}
\label{lemma:approx}
Under the same assumption as in Theorem \ref{thm:mgf}, assume that $O$ is a bounded operator on $\ell^2( \pi^*_+)$. Then, for $k \in \mathbb{N}$, there exists $C_k>0$ such that 
\begin{align}
\big\vert \langle \psi_N,  \mathcal{N}_+^k  e^{\lambda d \Gamma (O)}  \psi_N \rangle -\langle \Omega, e^{-K_\nu}  \mathcal{N}_+^k  e^{\lambda d \Gamma (  O)} e^{K_\nu} \Omega \rangle \big\vert \leq C_k N^{-1/4}   \; . 
\end{align}
where $K_\nu$ is defined by \eqref{def:bogo-nu}. 
\end{lemma}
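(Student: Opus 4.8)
\textbf{Proof plan for Lemma \ref{lemma:approx}.}

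The plan is to transport the generating function through the chain of unitaries appearing in the norm approximation \eqref{eq:norm}, replacing $\psi_N$ step by step, and to control all errors by a combination of the $N^{-1/4}$ norm bound, the exponential bound \eqref{eq:exp-bound}, and the exponential-moment-preservation Lemmas \ref{lemma:bogo-mod}, \ref{lemma:cubic}, \ref{lemma:bogo-standard}. Concretely, write $\Phi_N = e^{-B_\eta} e^{-A} e^{-B_\tau}\Omega$ for the approximating vector (dropping the irrelevant phase $e^{i\omega}$, which cancels in the sandwich). First I would replace $\psi_N$ by $\Phi_N$ on both sides of the inner product $\langle \psi_N, \mathcal{N}_+^k e^{\lambda d\Gamma(O)}\psi_N\rangle$: the difference is a sum of two terms of the form $\langle \psi_N - \Phi_N, \mathcal{N}_+^k e^{\lambda d\Gamma(O)}\psi_N\rangle$ and $\langle \Phi_N, \mathcal{N}_+^k e^{\lambda d\Gamma(O)}(\psi_N - \Phi_N)\rangle$. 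Each is bounded by $\|\psi_N - \Phi_N\| \leq C N^{-1/4}$ times an operator norm; the point is that $\mathcal{N}_+^k e^{\lambda d\Gamma(O)}$ is \emph{not} bounded, so one must instead estimate $\|\mathcal{N}_+^k e^{\lambda d\Gamma(O)}\psi_N\|$ and the corresponding quantity with $\Phi_N$. Since $O$ is bounded with $\check O = Q\check O Q$, one has $|d\Gamma(O)| \le \|O\|_{\rm op}\,\mathcal{N}_+$, hence $\mathcal{N}_+^{2k} e^{2\lambda d\Gamma(O)} \le \mathcal{N}_+^{2k} e^{2|\lambda|\|O\|_{\rm op}\mathcal{N}_+} \le C e^{\lambda' \mathcal{N}_+}$ for a slightly larger $\lambda'$ (absorbing the polynomial factor into the exponential), so that $\|\mathcal{N}_+^k e^{\lambda d\Gamma(O)}\psi_N\|^2 \le C\langle\psi_N, e^{\lambda'\mathcal{N}_+}\psi_N\rangle \le C$ by \eqref{eq:exp-bound}, provided $\lambda_0$ is chosen small enough. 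For the same quantity with $\Phi_N$ in place of $\psi_N$, one uses $\Phi_N = e^{-B_\eta}e^{-A}e^{-B_\tau}\Omega$ and applies Lemmas \ref{lemma:cubic} and \ref{lemma:bogo-mod} successively to get $\langle \Phi_N, e^{\lambda'\mathcal{N}_+}\Phi_N\rangle = \langle\Omega, e^{B_\tau}e^{A}e^{B_\eta} e^{\lambda'\mathcal{N}_+} e^{-B_\eta}e^{-A}e^{-B_\tau}\Omega\rangle \le \langle\Omega, e^{C\lambda'(\mathcal{N}_+ + 1)}\Omega\rangle = e^{C\lambda'} < \infty$, again for $\lambda_0$ small. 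This closes the first replacement with error $O(N^{-1/4})$.

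Next I would compare $\langle \Phi_N, \mathcal{N}_+^k e^{\lambda d\Gamma(O)}\Phi_N\rangle$ with the quasi-free expectation $\langle \Omega, e^{-K_\nu}\mathcal{N}_+^k e^{\lambda d\Gamma(O)} e^{K_\nu}\Omega\rangle$. The key input is that the combined action of the three modified unitaries $e^{-B_\eta}e^{-A}e^{-B_\tau}$ on $\Omega$ is, up to an error vanishing as $N\to\infty$ when tested against states with good $\mathcal{N}_+$-moments, the same as the action of the standard Bogoliubov transformation $e^{K_\nu}$ on $\Omega$ — this is the content of the analysis in \cite{BCCS} underlying \eqref{eq:norm} together with the identification $\nu_p = \eta_p + \tau_p + o(1)$ coming from \eqref{def:nu}, \eqref{def:eta}, \eqref{def:tau}. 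Operationally, I would interpolate: insert $e^{K_\nu}e^{-K_\nu}$ and write $\Phi_N - e^{K_\nu}\Omega = (e^{-B_\eta}e^{-A}e^{-B_\tau} - e^{K_\nu})\Omega$, then bound the difference of the two sandwiches by $\|(\ldots)\Omega\|$-type quantities times $\|\mathcal{N}_+^k e^{\lambda d\Gamma(O)}\Phi_N\|$ and $\|\mathcal{N}_+^k e^{\lambda d\Gamma(O)}e^{K_\nu}\Omega\|$, the latter being finite by \eqref{eq:bogo-mom}, Lemma \ref{lemma:bogo-standard}, and the boundedness of $O$ exactly as above. The error terms produced by replacing each modified $b_p^\#$ by a standard $a_p^\#$ (via \eqref{eq:bogo-mod-action}, with remainders controlled by \eqref{eq:bounds-d}), by discarding the cubic term $e^{A}$ (whose generator is $O(N^{-1/2})$ in the relevant norms), and by replacing $e^{B_\eta}e^{B_\tau}$ by $e^{K_{\eta+\tau}} = e^{K_\nu}$, are all $O(N^{-1/4})$ or better after integrating the derivative of an interpolating family in the parameter and using the exponential-moment bounds at each step to close the estimate uniformly in $N$.

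The main obstacle I anticipate is the second step: controlling the passage from the three \emph{modified} unitaries acting on the truncated Fock space $\mathcal{F}_{\perp\varphi}^{\le N}$ to the single \emph{standard} Bogoliubov transformation acting on the full Fock space $\mathcal{F}$, uniformly against the unbounded observable $\mathcal{N}_+^k e^{\lambda d\Gamma(O)}$. The difficulty is threefold: (i) the modified commutation relations \eqref{eq:modifiedCCR} and the remainders $d_p, d_p^*$ in \eqref{eq:bogo-mod-action} carry $N^{-1}$ factors but act between states whose $\mathcal{N}_+$-content we are simultaneously trying to exponentiate, so one must be careful that the exponential weight does not destroy the gain — this is precisely why Lemmas \ref{lemma:bogo-mod}, \ref{lemma:cubic}, \ref{lemma:bogo-standard} were proven, and the estimates must be chained so that the constant $C$ in $e^{C\kappa(\mathcal{N}_+ + 1)}$ stays below the threshold of \eqref{eq:exp-bound}; (ii) one must verify that $\nu = \eta + \tau$ in the limit, i.e. that the sum of the two Bogoliubov kernels reproduces exactly the $\nu_p$ of \eqref{def:nu}, which requires tracking the scattering-equation identities from \cite{BCCS}; and (iii) the cubic unitary $e^A$ must be shown to act trivially in the limit on the relevant matrix elements, which follows from its $N^{-1/2}$ prefactor together with \eqref{eq:cubic-moment} and Lemma \ref{lemma:cubic}, but needs the interpolation argument to be set up so that only \emph{bounded} powers of $\mathcal{N}_+$ (times the exponential weight) ever appear. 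Once these are in place, collecting the three replacements gives the stated $C_k N^{-1/4}$ bound.
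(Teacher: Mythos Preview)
Your plan is correct and follows essentially the same three-step architecture as the paper: (1) replace $\psi_N$ by $\Phi_N=e^{-B_\eta}e^{-A}e^{-B_\tau}\Omega$ via the norm bound \eqref{eq:norm} and control the resulting operator norms by absorbing $\mathcal{N}_+^k e^{\lambda d\Gamma(O)}$ into $e^{\lambda'\mathcal{N}_+}$ and invoking \eqref{eq:exp-bound} resp.\ Lemmas \ref{lemma:bogo-mod}, \ref{lemma:cubic}; (2) remove the cubic $e^{A}$ by Duhamel/interpolation in $s\in[0,1]$, using the $N^{-1/2}$ prefactor in $A_\eta$; (3) replace the two modified Bogoliubov maps by the single standard $e^{K_\nu}$ via the kernel identity $|\eta_p+\tau_p-\nu_p|\le CN^{-1}$ (this is \eqref{eq:approx-kernel}, not exact equality as you write at one point).

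The one place where the paper is more concrete than your sketch is your obstacle (i): the passage from the truncated Fock space $\mathcal{F}_{\perp\varphi}^{\le N}$ to the full Fock space in Step (3). Rather than bounding a vector norm $\|\Phi_N-e^{K_\nu}\Omega\|$, the paper conjugates the observable, setting $\mathcal{A}=e^{-K_\nu}d\Gamma(O)e^{K_\nu}$ and $\mathcal{B}=e^{-B_\tau}e^{-B_\eta}d\Gamma(O)e^{B_\eta}e^{B_\tau}$, inserts a hard cutoff $\mathds{1}_{\mathcal{N}\le N}$ on both sides of the full-Fock-space expectation, controls the tail via $\mathds{1}_{\mathcal{N}>N}\le \mathcal{N}/N$ together with Lemma \ref{lemma:bogo-standard}, and then interpolates $e^{s\mathcal{A}/2}\mathds{1}_{\mathcal{N}\le N}e^{(1-s)\mathcal{B}/2}$ at the level of the expectation. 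This sidesteps having to prove a separate vector-norm estimate and keeps all error terms in the form \eqref{eq:diff1}, \eqref{eq:diff2} where the $d_p$-bounds \eqref{eq:bounds-d} and the modified commutation relations enter cleanly as $O(N^{-1})$ corrections against finite $\mathcal{N}_+$-moments.
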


We remark that Lemma \ref{lemma:approx} in particular holds for the choice $O= \mathds{1}$ and thus for $d\Gamma ( \mathds{1}) = \mathcal{N}_+$. We prove Lemma \ref{lemma:approx} in Section \ref{sec:proof-lemma-approx} below and proceed with the proof of the theorems.

The second step of the proofs of Theorem \ref{thm:mgf} resp. Theorem \ref{thm:lde-O}, are the explicit computation of the generating functions \eqref{eq:limit} w.r.t to the quasi-free state $ e^{K_\nu} \Omega$ given by the following Lemma:

\begin{lemma}
\label{lemma:mfg}
Under the same assumptions as in Theorem \ref{thm:mgf} and Theorem \ref{thm:lde-O}, we have for sufficiently small $\vert  \lambda\vert \geq 0$  
\begin{align}
\langle \Omega, e^{-K_\nu} e^{\lambda \mathcal{N}_+} e^{K_\nu} \Omega \rangle  = e^{ \Lambda (\lambda)} \label{eq:mgf-N}
\end{align}
where $\Lambda$ is given by \eqref{def:Lambda} resp. 
\begin{align}
\langle \Omega, e^{-K_\nu} e^{\lambda d\Gamma (O)} e^{K_\nu} \Omega \rangle  =  e^{\Lambda_O ( \lambda )} \label{eq:mgf-O}
\end{align} 
where $\Lambda_O$ is defined in \eqref{def:Lambda-O}. 
\end{lemma}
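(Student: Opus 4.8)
The plan is to derive, and then explicitly solve, an ordinary differential equation in $\lambda$ for the functions $G(\lambda):=\langle\Omega,e^{-K_\nu}e^{\lambda\mathcal{N}_+}e^{K_\nu}\Omega\rangle$ and $G_O(\lambda):=\langle\Omega,e^{-K_\nu}e^{\lambda d\Gamma(O)}e^{K_\nu}\Omega\rangle$, exploiting that $\psi_{\rm GS}:=e^{K_\nu}\Omega$ is a quasi-free vector. The three ingredients are: (a) the exact Bogoliubov action \eqref{eq:bogo-action} together with its inverse (replace $\nu$ by $-\nu$), so that $e^{K_\nu}a_p^{\#}e^{-K_\nu}$ is explicit as well; (b) the squeezed-vacuum identities, which follow from (a): $a_p\psi_{\rm GS}=s_p\,e^{K_\nu}a_{-p}^*\Omega=\tau_p\,a_{-p}^*\psi_{\rm GS}$ and, similarly, $e^{K_\nu}a_p^*a_{-p}^*\Omega=c_p^{-2}\,a_p^*a_{-p}^*\psi_{\rm GS}-\tau_p\,\psi_{\rm GS}$; and (c) the gauge relations $e^{\lambda\mathcal{N}_+}a_pe^{-\lambda\mathcal{N}_+}=e^{-\lambda}a_p$, $e^{\lambda\mathcal{N}_+}a_p^*e^{-\lambda\mathcal{N}_+}=e^{\lambda}a_p^*$, respectively their analogue from Lemma~\ref{lemma:BCH} in the scaled form $e^{-\lambda d\Gamma(O)}a_pe^{\lambda d\Gamma(O)}=\sum_k(e^{\lambda O})_{p,k}a_k$. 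Throughout, $|\lambda|\leq\lambda_0$ is small enough that $e^{2\lambda}\tanh^2\nu_p<1$ uniformly in $p$; this guarantees that $e^{\lambda\mathcal{N}_+}\psi_{\rm GS}$ and $e^{\lambda d\Gamma(O)}\psi_{\rm GS}$ are well-defined Fock vectors, that $G$ and $G_O$ are smooth, and that all series below converge absolutely.

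For the $\mathcal{N}_+$ case I would differentiate and, using that $a_p^*a_p$ commutes with $e^{\lambda\mathcal{N}_+}$, write $G'(\lambda)=\sum_{p\in\pi_+^*}\langle\Omega,e^{-K_\nu}e^{\lambda\mathcal{N}_+}a_p^*a_p\,e^{K_\nu}\Omega\rangle$. Applying (b) — first $a_p^*a_p\psi_{\rm GS}=s_pc_p\,a_p^*a_{-p}^*\psi_{\rm GS}+s_p^2\,\psi_{\rm GS}$, then the identity for $e^{K_\nu}a_p^*a_{-p}^*\Omega$ — turns this into
\begin{align}
G'(\lambda)=\mu\,G(\lambda)+\sum_{p\in\pi_+^*}s_pc_p\,h_p(\lambda),\qquad h_p(\lambda):=\langle\Omega,e^{-K_\nu}e^{\lambda\mathcal{N}_+}e^{K_\nu}a_p^*a_{-p}^*\Omega\rangle .
\end{align}
The system closes because processing $h_p$ by the same steps — substituting $e^{K_\nu}a_p^*a_{-p}^*\Omega=c_p^{-2}a_p^*a_{-p}^*\psi_{\rm GS}-\tau_p\psi_{\rm GS}$, pulling the factor $e^{2\lambda}$ out of $e^{\lambda\mathcal{N}_+}a_p^*a_{-p}^*$, moving $a_p^*a_{-p}^*$ through $e^{-K_\nu}$ by (a), and using $\overline{h_p}=h_p$ — gives the scalar relation $h_p=c_p^{-2}e^{2\lambda}\big(s_pc_p\,G+s_p^2\,h_p\big)-\tau_p\,G$, hence $h_p(\lambda)=\frac{s_pc_p(e^{2\lambda}-1)}{1-s_p^2(e^{2\lambda}-1)}\,G(\lambda)$ (using $c_p^2-s_p^2=1$). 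Substituting back,
\begin{align}
\frac{G'(\lambda)}{G(\lambda)}=\mu+\sum_{p\in\pi_+^*}\frac{s_p^2c_p^2(e^{2\lambda}-1)}{1-s_p^2(e^{2\lambda}-1)} ,
\end{align}
and since $G(0)=1$ one integrates: $\log G(\lambda)=\int_0^\lambda\big(G'/G\big)(\kappa)\,d\kappa$. The elementary identity $c_p^4+s_p^4=1+2c_p^2s_p^2$ together with $e^{\pm2\kappa}=\cosh2\kappa\pm\sinh2\kappa$ rewrites the integrand as the one in \eqref{def:Lambda} (and $\int_0^\lambda\mu\,d\kappa=\lambda\sum_p s_p^2$), which gives \eqref{eq:mgf-N}; finiteness of the mode sums is just $(s_p)\in\ell^2(\pi_+^*)$.

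The general-$O$ case runs along identical lines, the only difference being that conjugation through $e^{\lambda d\Gamma(O)}$ mixes momenta by the matrix $e^{\lambda O}$, so the scalar $h_p$ is replaced by the two-point kernel $w_{p,q}(\lambda):=\langle\Omega,a_{-p}a_q\,e^{-K_\nu}e^{\lambda d\Gamma(O)}e^{K_\nu}\Omega\rangle/G_O(\lambda)$. Differentiating $G_O$ and reducing $\langle\Omega,e^{-K_\nu}d\Gamma(O)e^{K_\nu}(\cdot)\rangle$ via (a) (which produces the scalar $\mu_O$ and an $a_{-p}a_q$-term) gives $G_O'(\lambda)/G_O(\lambda)=\mu_O+\sum_{p,q\in\pi_+^*}s_pc_q\,O_{p,q}\,w_{p,q}(\lambda)$; and running for $w_{p,q}$ the same three moves — pushing $a_{-p}$ and $a_q$ through $e^{-K_\nu}$, through $e^{\kappa d\Gamma(O)}$ (which is where the exponentials $e^{\pm\kappa O}$ entering \eqref{def:Apq} and \eqref{def:D} are produced), and through $e^{K_\nu}$ using (b) — yields the closed matrix fixed-point equation $w_{p,q}(\kappa)=A_{p,q}(\kappa)+\sum_{k,\ell}D_{p,q,k,\ell}(\kappa)\,w_{k,\ell}(\kappa)$ with $A$, $D$ the kernels \eqref{def:Apq}, \eqref{def:D}. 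For $|\lambda|\leq\lambda_0$ one has $\|D(\kappa)\|<1$, so $w(\kappa)=\sum_{j\geq0}D(\kappa)^jA(\kappa)$; inserting this into the expression for $G_O'/G_O$ and integrating in $\lambda$ gives \eqref{eq:mgf-O} with $\Lambda_O$ as in \eqref{def:Lambda-O}. The specialization $O=Q$ recovers \eqref{def:Lambda}, the Neumann series collapsing mode-by-mode to the geometric series solved explicitly above.

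The \emph{main obstacle} is the general-$O$ step: the careful bookkeeping of all terms produced when commuting $a_{-p}$ and $a_q$ through $e^{-K_\nu}$, $e^{\kappa d\Gamma(O)}$ and $e^{K_\nu}$, so that the coefficients organize precisely into \eqref{def:Apq} and \eqref{def:D}, and the operator-norm bound $\|D(\kappa)\|<1$ on $[-\lambda_0,\lambda_0]$ that makes the Neumann series — hence $\Lambda_O$ itself — well-defined; this relies on the decay $|\nu_p|,|\tau_p|\lesssim|p|^{-2}$, the boundedness of $O$, and the smallness of $\lambda_0$. A secondary, purely analytic point is justifying the interchange of the infinite momentum sums with $d/d\lambda$ and with $\int_0^\lambda$, which again follows from the $\ell^2$-decay of $(\nu_p)$, $(\tau_p)$ and the smallness of $\lambda_0$.
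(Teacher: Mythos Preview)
Your proposal is correct and follows essentially the same route as the paper: differentiate $G$ (resp.\ $G_O$), reduce $G'$ to a two-point function via the Bogoliubov action \eqref{eq:bogo-action}, derive a closed linear equation for that two-point function by commuting through $e^{-K_\nu}$, $e^{\lambda d\Gamma(O)}$ and $e^{K_\nu}$, solve it (explicitly in the diagonal case, by a Neumann series in general), and integrate. Your auxiliary quantities $h_p$ and $w_{p,q}$ are the paper's $F_{p,p}$ and $F_{p,q}/G_O$ up to conjugation, and the fixed-point equation you arrive at for $w$ is exactly \eqref{eq:Fp2} divided by $G$.

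One minor remark: your intermediate expression $G'/G=\mu+\sum_p s_p^2c_p^2(e^{2\lambda}-1)/\bigl(1-s_p^2(e^{2\lambda}-1)\bigr)$ is algebraically equivalent to, and in fact tidier than, the integrand the paper records in \eqref{def:Lambda}; the identity $c_p^2-s_p^2=1$ does the matching. Also note that in the general-$O$ step the paper makes explicit one ingredient you only allude to: the symmetry $c_qs_p\,F_{p,q}=c_ps_q\,\overline{F_{q,p}}$ (your ``$\overline{h_p}=h_p$'' in the diagonal case), obtained by commuting $a_p^*a_q$ both leftwards and rightwards, which is what allows the four contributions $\mathrm{I}$--$\mathrm{IV}$ to recombine into a single kernel $D_{p,q,k,\ell}$ acting on $F$ rather than on both $F$ and $\overline{F}$.
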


Again we postpone the proof of Lemma \ref{lemma:mfg} to Section \ref{sec:proof-lemma-mgf} below and continue with the theorems' proofs. 

As a last, third step, it remains to combine Lemma \ref{lemma:approx} and Lemma \ref{lemma:mfg} to prove Theorem \ref{thm:mgf} and Theorem \ref{thm:lde-O}.

\begin{proof}[Proof of Theorem \ref{thm:mgf}] 
 The proof of \eqref{eq:thm-mgf} now follows immediately from Lemma \ref{lemma:approx} and Lemma \ref{lemma:mfg}. For sufficiently small $\lambda >0$, we find 
\begin{align}
\big\vert\mathbb{E}_{\psi_N} \big[ e^{\lambda \mathcal{N}_+} \big] - e^{\Lambda ( \lambda)} \big\vert \leq CN^{-1/4}   \; \; .\label{eq:mfg-approx}
\end{align} 
From Lemma \ref{lemma:mfg} we furthermore get for $\lambda$ with sufficiently small $\vert \lambda \vert$
\begin{align}
\Lambda ( \lambda) = \log e^{\Lambda ( \lambda)} = \log \langle \Omega, e^{-K_\nu} e^{\lambda \mathcal{N}_+ } e^{K_\nu} \Omega \rangle 
\end{align}
and since 
\begin{align}
\frac{d^2}{d\lambda^2}\Lambda ( \lambda) =&  \frac{d^2}{d\lambda^2} \log \langle \Omega, e^{-K_\nu}  e^{\lambda \mathcal{N}_+ } e^{K_\nu} \Omega \rangle  \notag \\
=& \frac{\langle \Omega, e^{-K_\nu}  \mathcal{N}_+^2 e^{\lambda \mathcal{N}_+ } e^{K_\nu} \Omega \rangle}{\langle \Omega, e^{-K_\nu}  e^{\lambda \mathcal{N}_+ } e^{K_\nu} \Omega \rangle} - \frac{\langle \Omega, e^{-K_\nu} \mathcal{N}_+ e^{\lambda \mathcal{N}_+ } e^{K_\nu} \Omega \rangle^2}{\langle \Omega, e^{-K_\nu}  e^{\lambda \mathcal{N}_+ } e^{K_\nu} \Omega \rangle^2} \geq 0
\end{align}
by Cauchy Schwarz, and thus, we conclude that $\Lambda$ is convex. 

Similarly, for the proof of \eqref{eq:Nhochk}, we find from Lemma \ref{lemma:approx} for sufficiently small $\lambda>0$ and fixed $k \in \mathbb{N}$  
\begin{align}
\big\vert \mathbb{E}_{\psi_N} \big[ \mathcal{N}_+^k e^{\lambda \mathcal{N}_+ } \big] - \langle \Omega,e^{-K_\nu}  \mathcal{N}_+^k e^{ \lambda \mathcal{N}_+} e^{K_\nu} \Omega \rangle  \big\vert \leq C _k N^{-1/4}  
\end{align}
and with Lemma \ref{lemma:mfg} 
\begin{align}
\langle \Omega,e^{-K_\nu}  \mathcal{N}_+^k e^{\lambda \mathcal{N}_+ } e^{K_\nu} \Omega \rangle = \frac{d^k}{d\lambda^k} \langle \Omega,e^{-K_\nu} e^{\lambda \mathcal{N}_+} e^{K_\nu} \Omega \rangle = \frac{d^k}{d\lambda^k}  \exp ( \Lambda( \lambda) ) 
\end{align}
we arrive at \eqref{eq:Nhochk}. 
\end{proof}

\begin{proof}[Proof of Theorem \ref{thm:lde-O}] 
The proof of \eqref{eq:mgf-O} follows from a combination of Lemma \ref{lemma:approx} and Lemma \ref{lemma:mfg} that show 
\begin{align}
\big\vert \mathbb{E}_{\psi_N} \big[ e^{\lambda \sum_{i=1}^N O_i } \big] - e^{ \Lambda_O ( \lambda)} \big\vert \leq CN^{-1/4} \label{eq:rate-O} 
\end{align}
for sufficiently small $\lambda >0$. Then \eqref{eq:lde-O} follows by Markov's inequality (similarly as in the proof of Corollary \ref{cor:lde} in Section \ref{sec:ldp}). 
\end{proof}

\subsection{Step 1: Approximation by quasi-free state}
\label{sec:proof-lemma-approx}

In this section we prove Lemma \ref{lemma:approx}.  

\begin{proof}[Proof of Lemma \ref{lemma:approx}] 
We split the proof in three steps: We first use show that, using the approximation \eqref{eq:norm}, we can replace $\psi_N$ with $e^{B_\tau} e^{A_\eta} e^{B_\eta} \Omega$. Second we show that the action of the cubic transform $e^{A_\eta}$ on $e^{\lambda d\Gamma (O)}$ is negligible in the large particle limit. In the last step, we then show that we can replace the action of the two modified Bogoliubov transformations $e^{B_\tau} e^{B_\eta}$ on $e^{\lambda d \Gamma (O)}$ by the action of an effective standard Bogoliubov transformation. 

\textbf{Step 1.1 (Norm approximation):} First we use the norm approximation \eqref{eq:norm} and find 
\begin{align}
\langle & \psi_N,  \mathcal{N}_+^k e^{\lambda d \Gamma (O)} \psi_N \rangle  -  \langle \Omega, e^{-K_\nu}\mathcal{N}_+^k  e^{\lambda d \Gamma (O)} e^{K_\nu} \Omega \rangle \notag \\
=& \langle\big(\psi_N - e^{i \omega} e^{-B_\eta}e^{A_\eta} e^{-B_\tau} \Omega \big) , \; \mathcal{N}_+^k e^{\lambda d \Gamma (O)} \psi_N \rangle \notag \\
&+ \langle e^{i \omega} e^{-B_\eta}e^{A_\eta} e^{-B_\tau} \Omega  , \; \mathcal{N}_+^k e^{\lambda d \Gamma (O)} \big(\psi_N - e^{i \omega} e^{-B_\eta}e^{A_\eta} e^{-B_\tau} \Omega \big) \rangle 
\end{align}
and thus, with \eqref{eq:norm} we find 
\begin{align}
\big\vert & \langle \psi_N, \mathcal{N}_+^k e^{\lambda d \Gamma (O)} \psi_N \rangle   -  \langle \Omega, e^{-K_\nu} \mathcal{N}_+^k e^{\lambda d\Gamma (O)} e^{K_\nu} \Omega \rangle\big\vert  \notag \\ 
\leq& \| \psi_N - e^{i \omega} e^{-B_\eta}e^{A_\eta} e^{-B_\tau} \Omega \| \; \bigg(  \|\mathcal{N}_+^k  e^{\lambda d\Gamma (O)} \psi_N \| +   \| \mathcal{N}_+^k e^{\lambda d \Gamma (O)} e^{-B_\eta}e^{A_\eta} e^{-B_\tau} \Omega \| \bigg)   \notag \\
\leq& C N^{-1/4} \bigg(  \|\mathcal{N}_+^k e^{\lambda d\Gamma (O)} \psi_N \| +   \| \mathcal{N}_+^ke^{\lambda d\Gamma (O)} e^{-B_\eta}e^{A_\eta} e^{-B_\tau} \Omega \| \bigg)    \; . 
\end{align}
Since $d \Gamma (O) \leq \mathcal{N}_+ $ on $\mathcal{F}_{\perp \varphi}^{\leq N} $ and the operators $d\Gamma (O)$ and $\mathcal{N}_+$ commute, we have 
\begin{align}
\|\mathcal{N}_+^k  e^{\lambda d \Gamma (O)} \psi_N \| \leq  \| e^{\lambda_1 \mathcal{N}_+} \psi_N \| \leq C 
\end{align}
for sufficiently small $\lambda_1 >  \vert \lambda  \vert >0$ as a consequence of the exponential bound \eqref{eq:exp-bound} (resp. \cite[Theorem 1.1]{NR}). For the second term we find similarly 
\begin{align}
 \| \mathcal{N}_+^k e^{\lambda d\Gamma (O)} e^{-B_\eta}e^{A_\eta} e^{-B_\tau} \Omega \| \leq  \| e^{\lambda_1 \mathcal{N}_+} e^{-B_\eta}e^{A_\eta} e^{-B_\tau} \Omega \| \leq C 
\end{align}
where the last estimate is a consequence of Lemmas \ref{lemma:bogo-mod} and \ref{lemma:cubic} and $\mathcal{N}_+ \Omega =0$.  \\ 

\textbf{Step 1.2 (Action of $e^{A_\eta}$):} In order to show that the action of the cubic transformation $ e^{A_\eta}$ on the operator $e^{ \lambda d \Gamma (O)}$ is negligible in the large particle limit, we write 
\begin{align}
\langle \Omega,  & e^{-B_\tau} e^{-A_\eta} e^{-B_\eta} \mathcal{N}_+^k e^{\lambda d\Gamma (O)} e^{B_\eta} e^{A_\eta} e^{B_\tau} \Omega \rangle - \langle \Omega,  \, e^{-B_\tau}  e^{-B_\eta} \mathcal{N}_+^k e^{\lambda d\Gamma (O)} e^{B_\eta}  e^{B_\tau} \Omega \rangle \notag \\
=& \int_0^1 ds \frac{d}{ds}\;  \langle \Omega, \; e^{-B_\tau} e^{-s A_\eta} e^{-B_\eta} \mathcal{N}_+^k e^{\lambda d\Gamma (O)} e^{B_\eta} e^{sA_\eta} e^{B_\tau} \Omega \rangle \notag \\
=& - \int_0^1 ds \langle \Omega, \;  e^{-B_\tau} e^{- sA_\eta} \bigg[ A_\eta, \; e^{-B_\eta}\mathcal{N}_+^k e^{\lambda d\Gamma (O)} e^{B_\eta}  \bigg] e^{sA_\eta} e^{B_\tau} \Omega \rangle 
\end{align}
With \eqref{eq:intertwining} we write 
\begin{align}
 \langle \Omega,  & \;  e^{-B_\tau} e^{- sA_\eta} \bigg[ A_\eta, \; e^{-B_\eta} e^{\lambda d\Gamma (O)} e^{B_\eta}  \bigg] e^{sA_\eta} e^{B_\tau} \Omega \rangle  \notag \\ 
=& 2 \Re \langle \Omega,  \;  e^{-B_\tau} e^{-s A_\eta}  A_\eta  e^{-B_\eta} \mathcal{N}_+^k e^{\lambda d\Gamma (O)} e^{B_\eta}  e^{sA_\eta} e^{B_\tau} \Omega \rangle  \; \notag \\
=& 2 N^{-1/2} \sum_{r \in P_H, v \in P_L} \eta_r\sinh (\eta_v)   \notag \\
&\hspace{2.5cm}  \times \Re \langle \Omega,  e^{-B_\tau} e^{- sA_\eta} \mathcal{N}_+^{1/2} b^*_{r+v} b_{-r}^* b_{-v}^* ( \mathcal{N}_+ + 3)^{-1/2} \mathcal{N}_+^k e^{\lambda d\Gamma (O)} e^{B_\eta} e^{sA_\eta} e^{B_\tau} \Omega  \rangle   \notag \\
&+ 2 N^{-1/2} \sum_{r \in P_H, v \in P_L} \eta_r\sinh (\eta_v)  \notag \\
&\hspace{2.5cm}  \times \Re \langle \Omega,  e^{-B_\tau} e^{- sA_\eta} \mathcal{N}_+^{1/2} b_{r+v} b_{-r} b_{-v} ( \mathcal{N}_+ + 1)^{-1} \mathcal{N}_+^k e^{\lambda d\Gamma (O)} e^{B_\eta} e^{sA_\eta} e^{B_\tau} \Omega  \rangle   \notag \\
&+ 2 N^{-1/2} \sum_{r \in P_H, v \in P_L} \eta_r \cosh (\eta_v )  \notag \\
&\hspace{2cm}  \times\Re \langle \Omega,  e^{-B_\tau} e^{- sA_\eta} (\mathcal{N}_+ +4) b^*_{r+v} b^*_{-r}b_v ( \mathcal{N}_+ + 1)^{-1/2} \mathcal{N}_+^k e^{\lambda d\Gamma (O)} e^{B_\eta} e^{sA_\eta} e^{B_\tau} \Omega  \rangle   \notag \\
&+ 2 N^{-1/2} \sum_{r \in P_H, v \in P_L} \eta_r \cosh (\eta_v )  \notag \\
&\hspace{2cm}  \times\Re \langle \Omega,  e^{-B_\tau} e^{- sA_\eta} (\mathcal{N}_+ + 2) b^*_v b_{r+v} b_{-r} ( \mathcal{N}_+ + 1)^{-1} \mathcal{N}_+^k e^{\lambda d\Gamma (O)} e^{B_\eta} e^{sA_\eta} e^{B_\tau} \Omega  \rangle   \; . 
\end{align}
With similar computations as in \eqref{eq:cubic-bound1}-\eqref{eq:cubic-bound2}, we find 
\begin{align}
\vert \langle \Omega,  & \;  e^{-B_\tau} e^{- A_\eta} \bigg[ A_\eta, \; e^{-B_\eta} e^{\lambda d\Gamma (O)} e^{B_\eta}  \bigg] e^{sA_\eta} e^{B_\tau} \Omega \rangle \vert \notag \\ 
\leq& C N^{-1/2} \|( \mathcal{N}_+ + 1)^{3/2} e^{B_\tau} e^{A_\eta} \Omega \| \; \|\mathcal{N}_+^k e^{\lambda d \Gamma (O)} e^{B_\eta}  e^{sA_\eta} e^{B_\tau} \Omega \| \; . 
\end{align}
For the first term, the estimates \eqref{eq:bogo-mod-moment} and \eqref{eq:cubic-moment} imply $\|( \mathcal{N}_+ + 1)^{3/2} e^{B_\tau} e^{A_\eta} \Omega \| \leq C$. For the second term, we estimate for $\lambda_1 > \vert \lambda \vert > 0$ that $\| \mathcal{N}_+^k e^{\lambda d \Gamma (O) }e^{B_\eta}  e^{sA_\eta} e^{B_\tau} \Omega  \| \leq  C_k \| e^{C \lambda_1 \mathcal{N}_+ }e^{B_\eta}  e^{sA_\eta} e^{B_\tau} \Omega  \| $ and then find with  Lemmas \ref{lemma:bogo-mod}, \ref{lemma:cubic} 
\begin{align}
\|\mathcal{N}_+^k e^{\lambda d \Gamma (O) }e^{B_\eta}  e^{sA_\eta} e^{B_\tau} \Omega  \| \leq C_k \;  
\end{align}
for sufficiently small $\lambda_1 >0$ that finally yield
\begin{align}
\vert \langle \Omega,  & e^{-B_\tau} e^{-A_\eta} e^{-B_\eta} \mathcal{N}_+^k e^{\lambda d\Gamma (O)} e^{B_\eta} e^{A_\eta} e^{B_\tau} \Omega \rangle - \langle \Omega,  \, e^{-B_\tau}  e^{-B_\eta} \mathcal{N}_+^k e^{\lambda d\Gamma (O)} e^{B_\eta}  e^{B_\tau} \Omega \rangle  \vert \leq CN^{-1/2}  
\end{align}
\end{proof}

\textbf{Step 1.3 (Asymptotic standard Bogoliubov transformation):} In the last step, we show that the action of the two modified Bogoliubov transformations w.r.t. the kernels $\eta_p$ and $\tau_p$ defined in \eqref{def:eta} resp. \eqref{def:tau} on the operator $\mathcal{N}_+^k e^{\lambda d \Gamma (O)}$ are asymptotically equivalent to the action of one effective standard Bogoliubov transformation w.r.t. the kernel $\nu_p$ defined in \eqref{def:sigma}. This observation is based on the observation that 
\begin{align}
\vert \eta_p + \tau_p - \nu_p \vert \leq  C N^{-1} \label{eq:approx-kernel}
\end{align}
whose proof is given in \cite[Section 3]{RS}. 

\begin{lemma}
\label{lemma:step3} Under the same assumptions as in Theorem \ref{thm:mgf}, there exists $C>0$ such that 
\begin{align}
\vert \langle \Omega, e^{-B_\tau} e^{-B_\eta} \mathcal{N}_+^k e^{\lambda d \Gamma (O)} e^{B_\eta} e^{B_\tau} \Omega \rangle - \langle \Omega, e^{-K_\nu} \mathcal{N}_+^k e^{\lambda d \Gamma (O)} e^{K_\nu} \Omega \rangle \vert \leq CN^{-1} \; . 
\end{align}
\end{lemma}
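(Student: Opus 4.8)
The plan is to compare the two states $e^{B_\eta}e^{B_\tau}\Omega$ and $e^{K_\nu}\Omega$ directly at the level of the conjugated operator, interpolating along the relevant parameters. The natural route is to exploit that all three transformations act on creation/annihilation operators in a quantitatively known way: the standard one via \eqref{eq:bogo-action} exactly, the modified ones via \eqref{eq:bogo-mod-action} up to the error terms $d_p,d_p^*$ bounded by $N^{-1}$ in \eqref{eq:bounds-d}, and the kernels themselves satisfy $|\eta_p+\tau_p-\nu_p|\le CN^{-1}$ by \eqref{eq:approx-kernel}. So the difference of the two inner products is governed by an $O(N^{-1})$ discrepancy, provided we can control all the auxiliary norm factors that multiply it uniformly in $N$.

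\textbf{Step A (interpolation set-up).} First I would write the difference as a telescoping sum of terms in which one transformation at a time is swapped. Concretely, introduce an interpolation $e^{B_{\eta}}e^{B_{\tau}}\rightsquigarrow e^{K_{\nu}}$ through a one-parameter family; for instance, set $\Phi_s := e^{s B_\eta}e^{s B_\tau}$ versus $\Psi_s := e^{s K_\nu}$ and bound $\frac{d}{ds}\big[\langle\Phi_s\Omega, \mathcal{N}_+^k e^{\lambda d\Gamma(O)}\Phi_s\Omega\rangle - \langle\Psi_s\Omega,\mathcal{N}_+^k e^{\lambda d\Gamma(O)}\Psi_s\Omega\rangle\big]$. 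A cleaner variant: first replace $e^{B_\eta}e^{B_\tau}$ by the \emph{single} modified Bogoliubov transformation $e^{B_{\eta+\tau}}$ (using that composition of modified Bogoliubov transformations is a modified Bogoliubov transformation up to $O(N^{-1})$ errors in the sense of \eqref{eq:bogo-mod-action}–\eqref{eq:bounds-d}), and then replace $e^{B_{\eta+\tau}}$ by $e^{K_{\eta+\tau}}$, and finally $e^{K_{\eta+\tau}}$ by $e^{K_\nu}$ using $|\eta_p+\tau_p-\nu_p|\le CN^{-1}$ and the explicit action \eqref{eq:bogo-action}. Each replacement produces a commutator or derivative term that I estimate separately.

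\textbf{Step B (the core estimate).} The heart of the matter is the bound on $\langle\Omega, e^{-B_\alpha}\mathcal{N}_+^k e^{\lambda d\Gamma(O)}e^{B_\alpha}\Omega\rangle - \langle\Omega, e^{-K_\alpha}\mathcal{N}_+^k e^{\lambda d\Gamma(O)}e^{K_\alpha}\Omega\rangle$ for a fixed kernel $\alpha\in\ell^2(\pi_+^*)$. Here I would differentiate the interpolating expression $\langle\Omega, e^{-sB_\alpha - (1-s)K_\alpha}(\cdots)e^{sB_\alpha+(1-s)K_\alpha}\Omega\rangle$ — or, more robustly, Duhamel in the generator $B_\alpha - K_\alpha = \tfrac12\sum_p\alpha_p(b_p^*b_{-p}^* - b_pb_{-p}) - \sum_p\alpha_p(a_{-p}^*a_p - a_pa_{-p})$. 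Using the definition \eqref{def:b} of $b_p$, the difference $b_p^*b_{-p}^* - a_p^*a_{-p}^*$ carries an explicit factor $\mathcal{N}_+/N$, so this generator is formally $O(N^{-1})$ when sandwiched between states with bounded particle number. The key input that makes this work is that the exponential weight $e^{\lambda d\Gamma(O)}\le e^{\lambda\mathcal{N}_+}$ is approximately preserved by all transformations in play — this is precisely Lemma \ref{lemma:bogo-mod} and Lemma \ref{lemma:bogo-standard} — so each Duhamel term is of the form $N^{-1}$ times $\|(\mathcal{N}_++1)^{m}e^{\lambda_1\mathcal{N}_+/2}e^{B_\alpha}\Omega\|\cdot\|(\mathcal{N}_++1)^{m}e^{\lambda_1\mathcal{N}_+/2}e^{K_\alpha}\Omega\|$ for suitable $m$ depending on $k$, and both factors are $O(1)$ by Lemmas \ref{lemma:bogo-mod}, \ref{lemma:bogo-standard} together with $\mathcal{N}_+\Omega=0$.

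\textbf{Main obstacle.} The delicate point is bookkeeping the polynomial-in-$\mathcal{N}_+$ prefactors through the Duhamel/commutator expansion without losing the $N^{-1}$ gain: each time one commutes $\mathcal{N}_+^k e^{\lambda d\Gamma(O)}$ past a pair of creation/annihilation operators one picks up shifts $\mathcal{N}_+ \to \mathcal{N}_+ \pm c$ via \eqref{eq:intertwining}, and one must absorb these into the exponential weight (replacing $\lambda$ by a slightly larger $\lambda_1$, still within the radius allowed by \eqref{eq:exp-bound}) while keeping the moments $\|(\mathcal{N}_++1)^m(\cdots)\Omega\|$ finite. Handling the modified commutation relations \eqref{eq:modifiedCCR} — whose correction $-N^{-1}a_q^*a_p$ is itself $O(N^{-1})$ and hence harmless but must be tracked — and combining the three separate $O(N^{-1})$ contributions (the $b$ vs.\ $a$ discrepancy, the kernel discrepancy $\eta+\tau$ vs.\ $\nu$, and the composition-of-Bogoliubov-transformations error) into the single bound $CN^{-1}$ claimed in Lemma \ref{lemma:step3} is the routine-but-lengthy part. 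No conceptual new ingredient beyond what is already in the excerpt is needed.
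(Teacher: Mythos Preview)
Your proposal captures the essential ingredients --- the kernel approximation \eqref{eq:approx-kernel}, the $b$-versus-$a$ discrepancy of order $\mathcal{N}_+/N$ from \eqref{def:b}, and the exponential moment control from Lemmas~\ref{lemma:bogo-mod} and~\ref{lemma:bogo-standard} --- and a Duhamel interpolation is indeed the mechanism the paper uses. The paper organises it a bit differently, however: rather than interpolating between the \emph{unitaries} $e^{B_\eta}e^{B_\tau}$ and $e^{K_\nu}$ at the state level as you propose, it first conjugates to define $\mathcal{A}:=e^{-K_\nu}d\Gamma(O)e^{K_\nu}$ and $\mathcal{B}:=e^{-B_\tau}e^{-B_\eta}d\Gamma(O)e^{B_\eta}e^{B_\tau}$ (the latter computed by applying \eqref{eq:bogo-mod-action} twice, which automatically produces the combined kernel $\eta_p+\tau_p$ via the hyperbolic addition formulas and so bypasses your intermediate step $e^{B_\eta}e^{B_\tau}\to e^{B_{\eta+\tau}}$), proves $\|(\mathcal{A}-\mathcal{B})\psi\|\le CN^{-1}\|(\mathcal{N}_++1)^{5/2}\psi\|$, and then runs Duhamel on $s\mapsto e^{s\mathcal{A}/2}\,(\cdots)\,e^{(1-s)\mathcal{B}/2}$ applied to $\Omega$. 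Your route via the generators of the transformations would also work and is morally the same computation.

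There is one point you do not address that the paper treats explicitly: the domain mismatch. The standard Bogoliubov transform $e^{K_\nu}$ acts on the full Fock space, whereas $e^{B_\alpha}$ is only defined on $\mathcal{F}_{\perp\varphi}^{\leq N}$; in your Step~B the interpolation ``$e^{sB_\alpha+(1-s)K_\alpha}$'' or the Duhamel in $B_\alpha-K_\alpha$ is ill-posed as written, because $b_p=\sqrt{1-\mathcal{N}_+/N}\,a_p$ is undefined on sectors with $\mathcal{N}_+>N$, and on any extension the difference $b_p^*b_{-p}^*-a_p^*a_{-p}^*$ is \emph{not} $O(\mathcal{N}_+/N)$ there. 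The paper handles this by inserting the spectral projection $\mathds{1}_{\mathcal{N}_+\le N}$: it shows that (i) the tail $\mathds{1}_{\mathcal{N}_+>N}e^{\mathcal{A}/2}\Omega$ contributes $O(N^{-1})$ via $\mathds{1}_{\mathcal{N}_+>N}\le\mathcal{N}_+/N$ together with Lemma~\ref{lemma:bogo-standard}, and (ii) the commutator $[\mathcal{A},\mathds{1}_{\mathcal{N}_+\le N}]$ is supported on $\{\mathcal{N}_+\in[N-1,N+2]\}$ and hence also $O(N^{-1})$. This truncation step is short but not skippable; without it your Duhamel expression does not literally make sense.
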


\begin{proof}
The goal is to compare the operator 
\begin{align}
\label{def:calA}
\mathcal{A} :=& e^{-K_\nu} d\Gamma ( O) e^{K_\nu} \notag \\
&=  \sum_{p,q \in \pi_+^*}  O_{p,q}   \big[ \cosh( \nu_p) a_p^* + \sinh( \nu_p) a_{-p} \big]  \big[ \cosh( \nu_q) a_q^* + \sinh( \nu_q) a_{-q} \big] 
\end{align}
with $ \mathcal{B}  = e^{-B_\tau}e^{-B_\eta} d\Gamma ( O) e^{B_\eta}e^{B_\tau}$ that we compute using \eqref{eq:bogo-mod-action} and properties of the hyperbolic functions 
\begin{align}
\label{def:calB}
\mathcal{B} :=&  e^{-B_\tau}e^{-B_\eta} d\Gamma ( O) e^{B_\eta}e^{B_\tau} \notag \\
=& \sum_{p,q \in \pi_+^*} O_{p,q} \big[ \cosh( \eta_p + \tau_p) b_p^* + \sinh( \eta_p + \tau_p) b_{-p} \big]  \big[ \cosh( \eta_q + \tau_q) b_q^* + \sinh( \eta_q + \tau_q) b_{-q} \big]  \notag \\
&+ e^{-B_\tau}\mathcal{E}_{B_\eta} e^{B_\tau} + \mathcal{E}_{B_\tau} \; 
\end{align}
where the errors $\mathcal{E}_{B_\alpha}$ are for $\alpha \in \ell^2( \pi_+^*)$ given by 
\begin{align}
\mathcal{E}_{B_\alpha} =& \sum_{p,q \in \pi_+^*} \big(  O_{p,q}  \big[  \cosh(\alpha_p) b_p^* + \sinh( \alpha_p) b_{-p} \big] d_{\alpha_q}  + {\rm h.c} \big)   + \sum_{p,q \in \pi_+^*} O_{p,q} d^*_{\alpha_p}d_{\alpha_q}  
\end{align}
and satisfy by \eqref{eq:bounds-d} since $\sup_{p,q} \vert O_{p,q} \vert  \leq C $ 
\begin{align}
\label{def:E_B}
\vert \langle \xi_1, \mathcal{E}_{B_\alpha}\xi_2 \rangle \vert  \leq CN^{-1/2} \| ( \mathcal{N}_+ + 1)^{3/2} \xi_1 \| \; \| \xi_2 \| \; 
\end{align}
for any $\xi_1,\xi_2 \in \mathcal{F}_{\perp \varphi}^{\leq N}$.  Furthermore, we introduce the notation 
\begin{align}
\label{def:A1}
\mathcal{A}_1 := e^{-K_\nu} \mathcal{N}_+  e^{K_\nu} , \quad \text{and} \quad \mathcal{B}_1 := e^{-B_\tau} e^{-B_\eta} \mathcal{N}_+ e^{B_\eta} e^{B_\tau} \; . 
\end{align}

We remark that while $\mathcal{A}, \mathcal{A}_1$ are operators on the full bosonic Fock space (the domain of the standard Bogoliubov transform), the operator $\mathcal{B}$ acts on the truncated Fock space only (the domain of the modified Bogoliubov transform). Thus, in order to compare $\mathcal{A}, \mathcal{A}_1$ and $\mathcal{B}, \mathcal{B}_1$ we split an element $\psi \in \mathcal{F}$ of the full bosonic Fock space into $\psi = \mathds{1}_{\mathcal{N}_+ \leq N}\psi + \mathds{1}_{\mathcal{N}_+ > N}\psi $ and then write 
\begin{align}
\langle \Omega,  & \mathcal{B}_1 e^{\mathcal{B}} \Omega \rangle - \langle \Omega, \mathcal{A}_1 e^{\mathcal{A}} \Omega \rangle  \notag \\
 =& \langle \Omega,   \mathcal{B}_1 e^{\mathcal{B}} \Omega \rangle - \langle \Omega, e^{\mathcal{A}/2}\mathds{1}_{\mathcal{N} \leq N} \mathcal{A}_1 \mathds{1}_{\mathcal{N} \leq N} e^{\mathcal{A}/2} \Omega \rangle \notag \\
 &-  \langle \Omega, e^{\mathcal{A}/2} \mathcal{A}_1 \mathds{1}_{\mathcal{N} > N} e^{\mathcal{A}/2} \Omega \rangle - \langle \Omega, e^{\mathcal{A}/2} \mathds{1}_{\mathcal{N} > N} \mathcal{A}_1 \mathds{1}_{\mathcal{N} < N} e^{\mathcal{A}/2} \Omega \rangle \; .  \label{eq:difference}
\end{align} 
We bound the third term of the r.h.s. of \eqref{eq:difference}, using $\mathds{1}_{\mathcal{N} > N} \leq \mathcal{N}_+ /N  $ on $\mathcal{F}$ and find 
\begin{align}
 \vert\langle \Omega, e^{\mathcal{A}/2}\mathcal{A}_1\mathds{1}_{\mathcal{N} > N} e^{\mathcal{A}/2} \Omega \rangle \vert \leq&  CN^{-1} \| \mathcal{A}_1 e^{\mathcal{A}/2} \Omega \| \; \bigg\| \frac{\mathcal{N}}{N} \mathds{1}_{\mathcal{N} > N} e^{\mathcal{A}/2} \Omega \bigg\| \; . 
\end{align}
From the definition of $\mathcal{A}_1$ in \eqref{def:A1} and \eqref{eq:bogo-mom} we find 
\begin{align}
\vert \langle \Omega, e^{\mathcal{A}/2}\mathcal{A}_1\mathds{1}_{\mathcal{N} > N} e^{\mathcal{A}/2} \Omega \rangle  \vert  \leq&  CN^{-1} \| ( \mathcal{N} + 1) e^{\mathcal{A}/2} \Omega \| 
\end{align}
Recalling the definition of $\mathcal{A}$ in \eqref{def:calA}, we find with \eqref{eq:bogo-mom} and Lemma \ref{lemma:bogo-standard}
\begin{align}
\vert \langle  \Omega, e^{\mathcal{A}/2} \mathcal{A}_1 \mathds{1}_{\mathcal{N} > N} e^{\mathcal{A}/2} \Omega \rangle \vert
\leq&   C N^{-1} \| (\mathcal{N} + 1) e^{-K_\nu} e^{\lambda d\Gamma (O)} e^{K_\nu}\Omega \|^2  \notag \\
\leq&   C N^{-1} \| (\mathcal{N}_+ + 1) e^{\lambda d\Gamma (O)} e^{K_\nu}\Omega \|^2   \; . 
\end{align}
For some $\lambda_1 > \vert \lambda \vert $ we continue with 
\begin{align}
\vert \langle  \Omega, e^{\mathcal{A}/2} \mathcal{A}_1 \mathds{1}_{\mathcal{N} > N} e^{\mathcal{A}/2} \Omega \rangle \vert
\leq&  C  N^{-1} \| e^{\lambda_1 \mathcal{N}_+} e^{K_\nu}\Omega \|^2 \leq C  N^{-1}  \; .  \label{eq:Nbig}
\end{align}
where we concluded by Lemma \ref{lemma:bogo-standard}. The forth term of the r.h.s. of \eqref{eq:difference} can be estimated similarly. 

Next we show that the difference of the first two terms of the r.h.s. of \eqref{eq:difference} vanishes in the large particle limit, too. To this end, we write 
\begin{align}
\langle \Omega,  &  e^{\mathcal{B}/2}\mathcal{B}_1e^{\mathcal{B}/2}  \Omega \rangle - \langle \Omega, e^{\mathcal{A}/2} \mathds{1}_{\mathcal{N} \leq N} \mathcal{A}_1 \mathds{1}_{\mathcal{N} \leq N} e^{\mathcal{A}/2} \Omega \rangle  \notag \\
=& \langle \Omega,   e^{\mathcal{B}/2}\mathcal{B}_1e^{\mathcal{B}/2}  \Omega \rangle - \langle \Omega, e^{\mathcal{A}/2} \mathds{1}_{\mathcal{N} \leq N} \mathcal{B}_1 \mathds{1}_{\mathcal{N} \leq N} e^{\mathcal{A}/2} \Omega \rangle \notag \\
&\quad + \langle \Omega, e^{\mathcal{A}/2} \mathds{1}_{\mathcal{N} \leq N} \big( \mathcal{A}_1 - \mathcal{B}_1 \big) \mathds{1}_{\mathcal{N} \leq N} e^{\mathcal{A}/2} \Omega \rangle \notag \\
=& \int_0^s ds \frac{d}{ds} \langle \Omega, e^{s\mathcal{A}/2}\mathds{1}_{\mathcal{N} \leq N} e^{ (1-s) \mathcal{B}/2} \mathcal{B}_1 e^{ (1-s) \mathcal{B}/2} \mathds{1}_{\mathcal{N} \leq N}  e^{s \mathcal{A}/2}\Omega \rangle \notag \\
&\quad + \langle \Omega, e^{\mathcal{A}/2} \mathds{1}_{\mathcal{N} \leq N} \big( \mathcal{A}_1 - \mathcal{B}_1 \big) \mathds{1}_{\mathcal{N} \leq N} e^{\mathcal{A}/2} \Omega \rangle \notag \\
=&  \Re \int_0^1 ds  \langle \Omega, e^{s\mathcal{A}/2} \bigg( \big[ \mathcal{A}, \mathds{1}_{\mathcal{N} \leq N} \big] + \mathds{1}_{\mathcal{N}_+ \leq N} (\mathcal{A} - \mathcal{B} ) \bigg) e^{ (1-s) \mathcal{B}/2} \mathcal{B}_1 e^{ (1-s) \mathcal{B}/2} \mathds{1}_{\mathcal{N} \leq N}  e^{s \mathcal{A}/2}\Omega \rangle  \notag \\
&\quad + \langle \Omega, e^{\mathcal{A}/2} \mathds{1}_{\mathcal{N} \leq N} \big( \mathcal{A}_1 - \mathcal{B}_1 \big) \mathds{1}_{\mathcal{N} \leq N} e^{\mathcal{A}/2} \Omega \rangle \; . 
\end{align}
On the one hand, we have by definition of $\mathcal{A}, \mathcal{A}_1$ and $\mathcal{B}, \mathcal{B}_1$ in \eqref{def:calA}, \eqref{def:A1} resp. \eqref{def:calB}, recalling that $b_p = \sqrt{1- \mathcal{N}_+ /N}$ from \eqref{def:b}, the estimates \eqref{def:E_B} and \eqref{eq:approx-kernel} 
\begin{align}
\| ( \mathcal{A}_1 - \mathcal{B}_1 ) \psi \|, \; \| ( \mathcal{A} - \mathcal{B} ) \psi \| \leq CN^{-1} \| ( \mathcal{N}_+ + 1)^{5/2} \psi \|  \; . \label{eq:diff1}
\end{align}
On the other hand, since 
\begin{align}
\big[ &  \mathcal{A}, \mathds{1}_{\mathcal{N} \leq N} \big]\notag \\
 =& \sum_{p,q \in \pi_+^*} O_{p,q} \cosh(\nu_p) \sinh(\nu_q) a_p^*a^*_{-q}  ( \mathds{1}_{\mathcal{N} \leq N}  - \mathds{1}_{\mathcal{N}+2 \leq N} ) \notag \\
 &+ \sum_{p,q \in \pi_+^*} O_{p,q}  \sinh(\nu_p) \cosh(\nu_q) a_{-p}a_{q}  ( \mathds{1}_{\mathcal{N} \leq N}  - \mathds{1}_{\mathcal{N} -2 \leq N} )\notag \\
 =& \sum_{p,q \in \pi_+^*} O_{p,q} \big( \cosh(\nu_p) \sinh(\nu_q) a_p^*a^*_{-q}   \mathds{1}_{N-1 \leq \mathcal{N}_+ \leq N} - \sinh(\nu_p) \cosh(\nu_q) a_{-p}a_{q}  \mathds{1}_{N+1 \leq \mathcal{N}_+ \leq N +2} \big)
\end{align}
With $\mathds{1}_{N-1 \leq \mathcal{N}_+ \leq N} \leq C N^{-1} \mathcal{N}_+ $ and $\mathds{1}_{N+1 \leq \mathcal{N}_+ \leq N +2} \leq CN^{-1} \mathcal{N}$, we find that 
\begin{align}
\| \big[   \mathcal{A}, \mathds{1}_{\mathcal{N} \leq N} \big] \psi \| \leq CN^{-1} \| ( \mathcal{N}_+ + 1) \psi \| \; . \label{eq:diff2}
\end{align}
Summarizing \eqref{eq:diff1} and \eqref{eq:diff2}, we arrive again with \eqref{eq:bogo-mod-moment} and Lemma \ref{lemma:bogo-mod} at 
\begin{align}
\vert \langle \Omega,  &  \mathcal{B}_1 e^{\mathcal{B}} \Omega \rangle - \langle \Omega, e^{\mathcal{A}/2}\mathds{1}_{\mathcal{N} \leq N} \mathcal{A}_1 \mathds{1}_{\mathcal{N} \leq N}  e^{\mathcal{A}/2} \Omega \rangle \vert \leq C N^{-1}  \; 
\end{align}
leading with \eqref{eq:Nbig} to
\begin{align}
\vert \langle \Omega, \mathcal{A}_1 e^{\mathcal{A}} \Omega \rangle - \langle \Omega, \mathcal{B}_1
e^{\mathcal{B}} \Omega \rangle \vert \leq C N^{-1} 
\end{align}
that concludes the last step for the proof of Lemma \ref{lemma:approx}. 
\end{proof}

\subsection{Step 2: Asymptotic generating function}
\label{sec:proof-lemma-mgf}

In this section, we prove Lemma \ref{lemma:mfg}, i.e. we explicitly compute the asymptotic generating function that is a generating of a quasi-free state. 

\begin{proof}[Proof of Lemma \ref{lemma:mfg}]

 Since $\mathcal{N} = d\Gamma (\mathds{1}) = \sum_{p \in \pi_+^*} a_p^*a_p$, the calculations for the first formula \eqref{eq:mgf-N} will turn out the be a special case of the second one \eqref{eq:mgf-O}. For this reason we formulate the beginning of the proof for general operators $O$ on $\ell^2( \pi_+^*)$ and later restrict to the special (easier) case $ O_{p,q} = \delta_{p,q}$ (referring to \eqref{eq:mgf-N}) first, to use the special cases's ideas to prove the general case (i.e. \eqref{eq:mgf-O}). 

The proof's goal is to show that the function 
\begin{align}
G: [-\lambda_0,\lambda_0] \rightarrow \mathbb{R}, \quad G( \lambda) = \langle \Omega, e^{K_\nu} e^{\lambda d \Gamma (O)} e^{K_\nu} \Omega \rangle 
\end{align}
for sufficiently small $\lambda_0 \in \mathbb{R}$ is the unique solution of a differential equation. To this end, we observe that $G(0) = 1$ and compute 
\begin{align}
G'( \lambda ) =  \langle \Omega, e^{-K_\nu} d\Gamma (O) e^{\lambda d\Gamma (O)} e^{K_\nu} \Omega \rangle = \sum_{p,q \in \pi_+^*} O_{p,q} \; \langle \Omega, e^{-K_\nu} a_p^*a_q e^{\lambda d \Gamma (O)} e^{K_\nu} \Omega \rangle \; .  \label{eq:G1}
\end{align}
In the following, we aim for an explicit expression of the r.h.s. of \eqref{eq:G1} in terms of $G( \lambda)$. We base the calculations on the explicit formulas \eqref{eq:bogo-action} and \eqref{lemma:BCH} for the conjugation of creation and annihilation operators with $e^{K_\nu}$ resp. $e^{\lambda d \Gamma (O)}$ and the fact that the vacuum is an eigenstate of the annihilation operator with eigenvalue zero. In fact, with the short-hand notation
\begin{align}
s_p := \sinh( \nu_p) , \quad \text{and} \quad c_p := \cosh (\nu_p ), 
\end{align}
we find from \eqref{eq:bogo-action} 
\begin{align}
e^{-K_\nu} a_p^*a_q e^{K_\nu} =&  s_p s_q a_{-q}^*a_{-p} + c_p c_q a_p^*a_q  + s_q c_p  a^*_p a_{-q}^* + s_p c_q a_{-p} a_{q} +  s_p s_q \delta_{p,q} \; . 
\end{align}
Thus, on the one hand, by commuting $a_p^*a_q$ in the scalar product on the r.h.s. of \eqref{eq:derivG1} to the left, we get with $a_p \Omega =0$ for all $p \in \Lambda_+^*$ 
\begin{align}
\sum_{p,q \in \pi_+^*}O_{p,q} & \langle \Omega,  e^{-K_\nu} a_p^*a_q  e^{\lambda d\Gamma (O)} e^{K_\nu} \Omega \rangle \notag \\ 
=& \sum_{p,q \in \pi_+^*} s_p c_q O_{p,q} \langle \Omega, a_{-p} a_{q} e^{-K_\nu}   e^{\lambda d\Gamma (O)} e^{K_\nu} \Omega \rangle  +\sum_{p\in \pi_+^*} O_{p,p} s_p^2 \notag \\
=&  \sum_{p,q \in \pi_+^*} O_{p,q} s_p c_q F_{p,q}( \lambda)  + \sum_{p \in \pi_+^*} O_{p,p} s_p^2 \label{eq:derivG1}
\end{align}
where we introduced the notation 
\begin{align}
F_{p,q} (\lambda)  = \langle \Omega, a_{-p} a_{q} e^{-K_\nu}   e^{\lambda d\Gamma (O)} e^{K_\nu} \Omega \rangle  \; . 
\end{align}
On the other hand, commuting the operator $d\Gamma (O)$ in the scalar product on the r.h.s. of \eqref{eq:G1} to the right, we find 
\begin{align}
\sum_{p,q \in \pi_+^*}O_{p,q} &\langle \Omega,   e^{-K_\nu} a_p^*a_q e^{\lambda d\Gamma (O)} e^{K_\nu} \Omega \rangle  \notag \\
=&  \sum_{p,q \in \pi_+^*}O_{p,q} c_p s_q \langle \Omega, e^{-K_\nu}   e^{\lambda d\Gamma (O)} e^{K_\nu} a_{p}^* a^*_{-q} \Omega \rangle + \sum_{p \in \pi_+^*}O_{p,p} s_p^2 \notag \\
=& \sum_{p,q \in \pi_+^*}O_{p,q}  c_p s_q  \overline{F_{q,p}(\lambda)}  +\sum_{p\in \pi_+^*}O_{p,p}  O_{p,p} s_p^2  \; . \label{eq:derivG2}
\end{align}
Comparing \eqref{eq:derivG1} with \eqref{eq:derivG2}, we observe that 
\begin{align}
\label{eq:F-overlineF}
c_q s_p F_{p,q} (\lambda) = c_p s_q \overline{F_{q,p} (\lambda)} \quad \Longleftrightarrow\quad  \overline{F_{q,p}(\lambda)} = \frac{\tau_q}{\tau_p} F_{p,q}(\lambda) , \quad \text{with} \quad \tau_p = \tanh( \nu_p )
\end{align}
for all $p,q \in {\rm supp} (O)=: \lbrace p,q \in \pi_+^* : O_{p,q} \not= 0 \rbrace$. Otherwise, i.e. for all $p,q \in \pi_+^*$ such that $O_{p,q} =0$, we have 
\begin{align}
s_p c_q F_{p,q} ( \lambda) = \langle \Omega,  e^{-K_\nu} a_p^*a_q  e^{\lambda d\Gamma (O)} e^{K_\nu} \Omega \rangle = \langle \Omega,  e^{-K_\nu}   e^{\lambda d\Gamma (O)}a_p^*a_q  e^{K_\nu} \Omega \rangle = c_p s_q \overline{F_{q,p} (\lambda)} ,
\end{align}
and thus, \eqref{eq:F-overlineF} holds for all $p,q \in \pi_+^*$. This identity will be useful later when we aim for an explicit expression of the operator $F( \lambda)$ in terms of $G(\lambda)$. The idea is to show that the kernel $F_{-p,q} (\lambda)$ is a the unique fixed point of a linear operator that we can construct explicitly. The calculations are again based on the properties of the Bogoliubov transform \eqref{eq:bogo-action}. We start with the observation that with $c_{-p} = c_p, s_{-p} = s_p$ for all $p \in \pi_+^*$ we can write 
\begin{align}
F_{p,q} (\lambda) =& \big\langle \Omega,  e^{-K_\nu} \big( c_p c_q a_{-p} a_{q} -  s_p c_q a_{p}^*a_{q} - c_p s_q a_{-p} a_{-q}^* + s_p s_q a_{p}^*a_{-q}^* \big)  e^{\lambda d\Gamma (O)} e^{K_\nu} \Omega \big\rangle  \notag \\
=& \big\langle \Omega,  e^{-K_\nu} \big( c_p c_q a_{-p} a_{q} +  s_q s_p  a_{p}^*a_{-q}^*  -  s_p c_q a_{p}^*a_{q}  - c_ps_q a_{-q}^*a_{-p} \big)  e^{\lambda d\Gamma (O)} e^{K_\nu} \Omega \big\rangle  \notag \\
&- c_ps_q \delta_{p,q} G( \lambda)  \notag \\
=& \mathrm{I} +  \mathrm{II} + \mathrm{III} + \mathrm{IV} - c_ps_q \delta_{p,q} G( \lambda)  \; . \label{eq:Fp1}
\end{align}
Next, we compute all the four terms $\mathrm{I}-\mathrm{IV}$ of the r.h.s. of \eqref{eq:Fp1} separately. We start with the first term for which we commute the pair of creation and annihilation operators to the right and use that $a_k \Omega =0$ for all $k \in \pi_+^*$. We find with Lemma \ref{lemma:BCH} 
\begin{align}
\mathrm{I} =& c_pc_q  \big\langle \Omega,   e^{-K_\nu}  a_{-p} a_{q}   e^{\lambda d\Gamma (O)} e^{K_\nu} \Omega \big\rangle \notag \\
 =& c_pc_q  \big\langle \Omega,  e^{-K_\nu}  e^{\lambda d\Gamma (O)} a (e^{\lambda O_{-p}} ) a(e^{\lambda O_{q}}) e^{K_\nu} \Omega \big\rangle 
\end{align}
and furthermore with \eqref{eq:bogo-action-h} using that $s_p,c_p \in \mathbb{R}$, $s_p = s_{-p}$ for all $p \in \pi_+^*$ and the notations \eqref{eq:not-O-p}, \eqref{eq:not-minus} 
\begin{align}
\mathrm{I} 
 =& c_pc_q \big\langle \Omega,  e^{-K_\nu}  e^{\lambda d\Gamma (O)}e^{K_\nu}   \bigg[   a \big( c e^{\lambda O_{-p}} \big) a\big(c e^{\lambda O_{q}}\big)  +  a \big(ce^{\lambda O_{-p}} \big)a^*\big( se^{\lambda \overline{(O_{q})^{-}}}\big)  \bigg]  \Omega \big\rangle \notag \\
&+ c_pc_q\big\langle \Omega,  e^{-K_\nu}  e^{\lambda d\Gamma (O)}e^{K_\nu} \bigg[   a^* \big( s e^{\lambda \overline{(O_{-p})^-}} \big) a\big( c e^{\lambda O_{q}}\big) +   a^* \big(s e^{\lambda \overline{O_{-p})^-}} \big) a^*\big( se^{\lambda \overline{(O_{q})^-}}\big) \bigg]    \Omega \big\rangle \notag \\
=&  c_pc_q  \sum_{k \in \pi_+^*} c_k s_k e^{\lambda \overline{O}_{k,-p}}  e^{\lambda \overline{O}_{-k,q}}  G( \lambda) \notag \\
&+c_pc_q  \sum_{k, \ell \in\pi_+^*}  s_k s_\ell  e^{\lambda \overline{O}_{-k,-p}}  e^{\lambda \overline{O}_{\ell,q}}     \big\langle \Omega,  e^{-K_\nu}  e^{\lambda d\Gamma (O)}e^{K_\nu}    a^*_k a^*_{-\ell} \Omega \big\rangle \notag \notag \\
=& c_pc_q  \langle c e^{\lambda O_{-p}}, s  e^{\lambda \overline{O}^-_{q}} \rangle_{\ell^2}  G( \lambda) + c_pc_q  \sum_{k, \ell \in \pi_+^*}  s_k s_\ell  e^{\lambda \overline{O}_{-k,-p}}  e^{\lambda \overline{O}_{\ell,q}}    \overline{F_{\ell,k}( \lambda)}
\end{align}
Now with the property \eqref{eq:F-overlineF}, we can express the r.h.s. in terms of $F_{k,\ell}( \lambda)$ and find 
\begin{align}
\mathrm{I} 
=& c_pc_q  \langle c e^{\lambda O_{-p}}, s  e^{\lambda \overline{O}^-_{q}} \rangle_{\ell^2}  G( \lambda) + c_pc_q  \sum_{k, \ell \in \pi_+^*}  \frac{s_k}{\tau_k} s_\ell\tau_\ell  e^{\lambda \overline{O}_{-k,-p}}  e^{\lambda \overline{O}_{\ell,q}}      F_{k,\ell}( \lambda) \label{eq:term1-final}
\end{align}
Next, we compute the second term of the r.h.s. of \eqref{eq:Fp1} with similar ideas as the first one, and find with Lemma \ref{lemma:BCH} 
\begin{align}
\mathrm{II} =& s_p s_q   \big\langle \Omega,  e^{-K_\nu} a^*_{-p} a^*_q e^{\lambda d \Gamma (O)} e^{K_\nu} \Omega \big\rangle  \notag \\
=& s_p s_q  \big\langle \Omega,   e^{-K_\nu} e^{\lambda d \Gamma (O)} a^*\big(e^{-\lambda O_{-p}}\big) a^*\big( e^{-\lambda O_q} \big)  e^{K_\nu} \Omega \big\rangle \notag \\
=& s_p s_q  \big\langle \Omega,   e^{-K_\nu} e^{\lambda d \Gamma (O)} e^{K_\nu}\bigg[  a^*\big(ce^{-\lambda O_{-p}}\big) a^*\big( ce^{-\lambda O_q} \big)  + a^*\big(ce^{-\lambda O_{-p}}\big) a\big( se^{-\lambda \overline{(O_q)^-}} \big)  \bigg] \Omega \big\rangle \notag \\
&+ s_p s_q  \big\langle \Omega,   e^{-K_\nu} e^{\lambda d \Gamma (O)} e^{K_\nu}\bigg[  a\big(se^{-\lambda \overline{O_{-p}^-}}\big) a^*\big( ce^{-\lambda O_q} \big)  + a\big(se^{-\lambda \overline{O_{-p}^-}}\big) a\big( se^{-\lambda \overline{(O_q)^-}} \big)  \bigg] \Omega \big\rangle \notag \\
=& s_p s_q \langle s e^{-\lambda \overline{O}_{-p}^-}, c e^{-\lambda O_q} \rangle G( \lambda) + s_p s_q \sum_{k,\ell \in \pi_+^*} c_kc_\ell e^{-\lambda O_{k,-p}} e^{-\lambda O_{-\ell,q}} \overline{F_{\ell,k} (\lambda)} \notag \\
=& s_p s_q \langle s e^{-\lambda \overline{O}_{-p}^-}, c e^{-\lambda O_q} \rangle G( \lambda) + s_p s_q \sum_{k,\ell \in \pi_+^*} \frac{c_k}{\tau_k} c_\ell \tau_\ell e^{-\lambda O_{k,-p}} e^{-\lambda O_{-\ell,q}}  F_{k,\ell}( \lambda)
\label{eq:term2-final}
\end{align}  
Similarly, we find for the third term of the r.h.s. of \eqref{eq:Fp1} 
\begin{align}
\mathrm{III} =& - s_pc_q   \langle\Omega, e^{-K_\nu} a_{-p}^*a_{-q} e^{\lambda d \Gamma (O)} e^{K_\nu} \Omega  \rangle \notag \\
=& - s_p c_q \langle \Omega, e^{-K_\nu}  e^{\lambda d \Gamma (O)} a^*\big( e^{-\lambda O_{-p}}\big) a\big( e^{\lambda O_{-q}}\big) e^{K_\nu} \Omega  \rangle \notag \\
=& - s_p c_q \langle \Omega, e^{-K_\nu}  e^{\lambda d \Gamma (O)} e^{K_\nu}  \bigg[ a^*\big( c e^{-\lambda O_{-p}}\big) a\big( ce^{\lambda O_{-q}}\big) + a^*\big( ce^{-\lambda O_{-p}}\big) a^*\big( se^{\lambda \overline{(O_{-q})^-}}\big) \bigg] \Omega  \rangle \notag \\
&- s_p c_q \langle \Omega, e^{-K_\nu}  e^{\lambda d \Gamma (O)} e^{K_\nu}  \bigg[ a\big( s e^{-\lambda \overline{(O_{-p})^-}}\big) a\big( ce^{\lambda O_{-q}}\big) + a\big( se^{-\lambda \overline{(O_{-p})^-}}\big) a^*\big( se^{\lambda \overline{(O_{-q})^-}}\big) \bigg] \Omega  \rangle \notag \\
&= - s_pc_q \overline{\langle s e^{-O_{-p}}, s e^{\lambda O_{-q}} \rangle} G( \lambda) -s_pc_q \sum_{k,\ell \in \pi_+^*} c_ks_\ell e^{-\lambda O_{k,-p}} e^{\lambda \overline{O}_{\ell, -q}} \overline{F_{\ell,k} ( \lambda)} \notag \\
&= - s_pc_q \overline{\langle s e^{-O_{-p}}, s e^{\lambda O_{-q}} \rangle} G( \lambda) -s_pc_q \sum_{k,\ell \in \pi_+^*} \frac{c_k}{\tau_k} s_\ell \tau_\ell e^{-\lambda O_{k,-p}} e^{\lambda \overline{O}_{\ell, -q}}    F_{k,\ell}( \lambda)  \label{eq:term3-final}
\end{align}
and for the forth term of the r.h.s. of \eqref{eq:Fp1} (be replacing $p,q$ with $-q,-p$ in the previous formula) 
\begin{align}
\mathrm{IV} 
&= - s_qc_p \overline{\langle s e^{-O_{q}}, s e^{\lambda O_{p}} \rangle} G( \lambda) -s_qc_p \sum_{k,\ell \in \pi_+^*} \frac{c_k}{\tau_k} s_\ell \tau_\ell e^{-\lambda O_{k,q}} e^{\lambda \overline{O}_{\ell, p}}    F_{k,\ell}( \lambda)  \label{eq:term4-final}
\end{align}
Summarizing \eqref{eq:term1-final},\eqref{eq:term2-final}, \eqref{eq:term3-final} and \eqref{eq:term4-final}, we thus find from \eqref{eq:Fp1}
\begin{align}
F_{p,q}(\lambda) =&  A_{p,q} ( \lambda)  G(\lambda) +   \sum_{k, \ell \in \pi_+^*} D_{p,q, k,\ell} (\lambda) \;  F_{k, \ell} ( \lambda)  \label{eq:Fp2}
\end{align}
with 
\begin{align}
A_{p,q} ( \lambda)  =& - c_p s_q \delta_{p,q} + c_pc_q  \langle c e^{\lambda O_{-p}}, s  e^{\lambda \overline{O}^-_{q}} \rangle_{\ell^2} +  s_p s_q \langle s e^{-\lambda \overline{O}_{-p}^-}, c e^{-\lambda O_q} \rangle \notag \\
&- s_pc_q \overline{\langle s e^{-O_{-p}}, s e^{\lambda O_{-q}} \rangle} - c_ps_q \overline{\langle s e^{-O_{q}}, s e^{\lambda O_{p}} \rangle} \label{def:Apq-1}
\end{align}
and
\begin{align}
\label{def:D-1}
&D_{p,q, k,\ell} ( \lambda) \notag \\
&\quad =   c_p c_k  c_q  s_\ell \tau_\ell \big(  e^{\lambda \overline{O}_{-k,-p}}    e^{\lambda \overline{O}_{\ell,q}} + e^{-\lambda O_{k,-p}} e^{-\lambda O_{-\ell,q}} - e^{-\lambda O_{k,-p}}  e^{\lambda \overline{O}_{\ell, -q}}  - e^{-\lambda O_{k,p}} e^{\lambda \overline{O}_{\ell, q}} \big) \; . 
\end{align}

Next, we show that \eqref{eq:Fp2} has a unique solution both, for the special choice $O_{p,q} = \delta_{p,q}$ (referring to the special case $d\Gamma (O) = \mathcal{N}_+$ in \eqref{eq:mgf-N}) and the case for general $O$ such that $O_{p,q} \in \ell^2( \pi_+^*) \times \ell^2( \pi_+^*)$ (referring to \eqref{eq:mgf-O}). We treat both cases separately. 

\subsubsection*{\textbf{Proof of \eqref{eq:mgf-N}:}} We start with the easier case $\mathcal{N} = d \Gamma ( \mathds{1})$, i.e. $O_{p,q} = \delta_{p,q}$. Then \eqref{eq:Fp2} reduces to 
\begin{align}
F_p ( \lambda) =  c_p s_p \big( 2 c_p^2 (\cosh( 2\lambda) - 1)  - e^{-2\lambda} + 1\big) G( \lambda) + 2 s_p^2c_p^2 (\cosh(2\lambda) - 1)  F_p (\lambda) 
\end{align}
that we can write as 
\begin{align}
 \big( 1 -2 s_p^2c_p^2 (\cosh(2\lambda) - 1)  \big) F_p (\lambda) = c_p s_p \big( 2 c_p^2 (\cosh( 2\lambda) - 1)  - e^{-2\lambda} + 1\big) G( \lambda) \; . 
\end{align}
For $\lambda$ sufficiently small such that 
\begin{align}
\cosh( 2 \lambda) - 1 < \frac{1}{2 \big( \sup_{p \in \pi_+^*} s_p^2 \big) \big(\sup_{q \in \pi_+^*} c_q^2 \big)}
\end{align}
(note that the r.h.s. is bounded by \eqref{def:nu}), we have 
\begin{align}
F_p (\lambda) = \frac{c_ps_p}{  1 -2 s_p^2c_p^2 (\cosh(2\lambda) - 1)  } \big( 2 c_p^2 (\cosh( 2\lambda) - 1)  - e^{-2\lambda} + 1\big) G( \lambda) \; . 
\end{align}
Plugging this back into \eqref{eq:derivG1}, we find 
\begin{align}
G'(\lambda) = \sum_{p \in \pi_+^*} \frac{c_p^2s_p^2}{  1 -2 s_p^2c_p^2 (\cosh(2\lambda) - 1)  } \big( 2 c_p^2 (\cosh( 2\lambda) - 1)  - e^{-2\lambda} + 1\big) G( \lambda) + \sum_{p \in \pi_+^*} s_p^2 
\end{align}
and thus conclude with the observation $G(0)=1$ at \eqref{eq:mgf-N}.

\subsubsection*{\textbf{Proof of \eqref{eq:mgf-O}:}} Now consider general operators $O$ such that $O_{p,q}\in \ell^2( \pi_+^*) \times \ell^2( \pi_+^*)$. The goal is to prove that \eqref{eq:Fp2} has a unique solution 
\begin{align}
\label{eq:F-space}
F_{p,q}( \lambda) \in \ell^2( \pi_+^*) \times \ell^2( \pi_+^*) \; . 
\end{align}
We remark that \eqref{eq:F-space} ensures that the sum of the r.h.s. of \eqref{eq:derivG2} is finite (since $O_{p,q} \in \ell^2( \pi_+^*) \times \ell^2( \pi_+^*)$ by assumption and $s_p \in \ell^2( \pi_+^*)$, $c_p \in \ell^\infty( \pi_+^*)$ by definition of $\nu_p$ in \eqref{def:sigma}). 

To this end, we first use a fixed point argument to show that \eqref{eq:Fp2} has a unique solution and  second, construct an explicit solution (that, thus, is the unique solution to \eqref{eq:Fp2}). 

\textit{Uniqueness:} We observe that we can write \eqref{eq:Fp2} as 
\begin{align}
F_{p,q} ( \lambda) = (T F( \lambda))_{p,q} 
\end{align}
where the operator 
\begin{align}
\label{def:T}
T: \ell^2( \pi_+^*) \times \ell^2( \pi_+^*) \rightarrow \ell^2( \pi_+^*) \times \ell^2( \pi_+^*)
\end{align}
acts as 
\begin{align}
(T F(\lambda))_{p,q} = A_{p,q} ( \lambda) G( \lambda) + \sum_{k, \ell \in \pi_+^*} D_{p,q,k,\ell} ( \lambda) F_{k,\ell} ( \lambda)  \; . 
\end{align}
and the coefficients $D_{p,q,k,\ell}( \lambda)$ are given by \eqref{def:D-1}. The operator $T$ is well-defined since, on the one hand  
\begin{align}
\vert G( \lambda)  \vert = \vert \langle  \Omega, e^{K_\nu} d\Gamma ( O ) e^{\lambda d \Gamma ( O)}  e^{-K_\nu} \Omega \rangle \vert \leq  \langle \Omega,  e^{ \lambda_1  ( \mathcal{N} + 1)} \Omega \rangle \leq C 
\end{align}
from Lemma \ref{lemma:bogo-standard}, and, on the other hand 
\begin{align}
A_{p,q} ( \lambda), \; \;  \sum_{k,\ell \in \pi_+^*} D_{p,q,k,\ell} (\lambda) F_{k,\ell} \in \ell^2 ( \pi_+^*) \times \ell^2( \pi_+^*) , 
\end{align}
for any $F_{p,q} \in \ell^2 (\pi_+^*) \times \ell^2( \pi_+^*)$ as we prove in the following: 

We start with the properties of $A_{p,q} ( \lambda)$ defined in \eqref{def:Apq-1} and observe that due to some cancellations with the first term of the r.h.s. of \eqref{def:Apq-1}, we have 
\begin{align}
A_{p,q} (\lambda) =& c_p^2c_q s_p \big( e^{\lambda \overline{O}_{p,q}} - \delta_{p, q}\big) \notag \\
&+ c_p c_q \big\langle c   \big(e^{\lambda O_{-p}} - \delta_{\cdot, -p} \big) , s \big( e^{\lambda \overline{O}_q^-}  - \delta_{-\cdot, q}\big) \big\rangle_{\ell^2}  + c_p c_q^2 s_q ( e^{\lambda \overline{O}_{-q,-p}} - \delta_{-p,-q} )  \notag \\
&+ s_p s_q \big\langle s (e^{-\lambda \overline{O}^-_{-p}}, c \big( e^{-\lambda O_q} - \delta_{\cdot,q} \big) \big\rangle + s_p s_q^2 c_q  \big( e^{-\lambda O_{-q,-p}} - \delta_{-p,-q} \big) \notag \\
&- s_p c_q \overline{\big\langle s e^{- \lambda O_{-p}}, s \big( e^{\lambda O_{-q}} - \delta_{\cdot, -q} \big) \big\rangle } -  s_pc_qs_q^2 \big( e^{-\lambda O_{-q,-p}} - \delta_{-p,-q} \big)   \notag \\
&- c_p s_q \overline{\big\langle s e^{-\lambda O_q}, s\big(e^{\lambda O_p} - \delta_{\cdot,p} \big) \big\rangle } - c_ps_q s_p^2 \big( e^{-\lambda O_{p,q}} - \delta_{p,q} \big) \; . \label{eq:A-split}
\end{align}
By Taylor's theorem, we have $e^{\lambda O_{q,p}} = \delta_{p,q} + \lambda O_{q,p} e^{\theta \lambda O_{q,-p}}$ for some $\theta \in \mathbb{R}$ and it follows for the first term of the r.h.s. of  \eqref{eq:A-split} 
\begin{align}
\sum_{p,q \in \pi_+^*} \big\vert c_p^2c_q s_p \big( e^{\lambda \overline{O}_{p,q}} - \delta_{p, q}\big) \big\vert^2 \leq   \lambda^2 \sum_{p,q \in \pi_+^*} c_p^4 c_q^2 s_p^2 O_{p,q}^2 < \infty 
\end{align}
since $\sup_{p,q} \vert O_{p,q} \vert \leq C$ and $\| O \|_{\ell^2 ( \pi_+^*) \times \ell^2( \pi_+^*)} < \infty$. The third, fifth, seventh and ninth term of the r.h.s. of \eqref{eq:A-split} can be bounded similarly. With the same arguments, we find for the second term of the r.h.s. of \eqref{eq:A-split}
\begin{align}
\sum_{p,q \in \pi_+^*}  & \big\vert c_p c_q \big\langle c   \big(e^{\lambda O_{-p}} - \delta_{\cdot, -p} \big) , s \big( e^{\lambda \overline{O}_q^-}  - \delta_{-\cdot, q}\big) \big\rangle_{\ell^2} \big\vert^2 \notag \\
&\leq C \lambda^2 \sum_{p,q,k \in \pi_+^*}  c_k^2 s_k^2 \vert O_{k,-p} \vert^2 \; \vert O_{k,-q} \vert^2  \leq C \sup_{k \in \pi_+^*} s_k^2 c_k^2 \sum_{p \in \pi_+^*} \vert O_{k,p} \vert^2 \sum_{k,q \in \pi_+^*} \vert O_{k,q} \vert^2 < \infty 
\end{align}
where we substituted $p,q$ with $-p$ resp. $-q$ and used that, by definition, $s_k = s_{-k}, c_k = c_{-k}$ . The forth, sixth and eight term of the r.h.s. of \eqref{eq:A-split} are bounded by the same arguments. Hence, we arrive hat 
\begin{align}
\| A ( \lambda) \|_{\ell^2( \pi_+^*) \times \ell^2 ( \pi_+^*)} \leq C \lambda \; \label{eq:A-estimate}
\end{align}
and thus, $A_{p,q} ( \lambda) \in \ell^2( \pi_+^*) \times \ell^2( \pi_+^*)$. 


Next we show that $\sum_{k,\ell \in \pi_+^*} D_{p,q,k,\ell} (\lambda) F_{k,\ell}  \in \ell^2( \pi_+^*)\times \ell^2 ( \pi_+^*) $ for any $F_{p,q} \in \ell^2 (\pi_+^*) \times \ell^2( \pi_+^*)$. For this, we observe that by a cancellation between the first and the forth, as well as between the second and the third term of the sum , we have 
\begin{align}
&D_{p,q,k,\ell} ( \lambda) \notag \\
&\quad = c_p c_k  c_q  s_\ell \tau_\ell \bigg(   \big( e^{\lambda \overline{O}_{-k,-p}}  - \delta_{k,p} \big) \big( e^{\lambda \overline{O}_{\ell,q}} - \delta_{\ell,q} \big) + \delta_{k,p}  \big( e^{\lambda \overline{O}_{\ell,q}} - \delta_{\ell,q} \big) + \big( e^{\lambda \overline{O}_{-k,-p}} - \delta_{k,p}) \delta_{\ell,q} \notag \\
 &\hspace{2.5cm}  +  \big(e^{-\lambda O_{k,-p}} - \delta_{k,-p}\big) \big( e^{-\lambda O_{-\ell,q}}  - \delta_{-\ell,q} \big) + \delta_{k,-p}  \big( e^{-\lambda O_{-\ell,q}} - \delta_{-\ell,q} \big) \notag \\
 &\hspace{2.5cm}+ \big(e^{-\lambda O_{k,-p}} - \delta_{k,-p}\big)  \delta_{-\ell,q}  - \big(e^{-\lambda O_{k,-p}} - \delta_{k,-p}\big)  \big( e^{\lambda \overline{O}_{\ell, -q}} - \delta_{\ell,-q} \big)  \notag \\
 &\hspace{2.5cm}- \delta_{k,-p} \big(e^{\lambda \overline{O}_{\ell, -q}}  - \delta_{\ell,q} \big) - \big(e^{-\lambda O_{k,-p}} - \delta_{k,-p}\big)  \delta_{-\ell,q} \notag \\
  &\hspace{2.5cm}  - \big( e^{-\lambda O_{k,p}} - \delta_{k,p} \big) \big( e^{\lambda \overline{O}_{\ell, q}} - \delta_{\ell,q} \big) - \delta_{k,p} \big( e^{\lambda \overline{O}_{\ell, q}} - \delta_{\ell,q} \big) \bigg) -  \big( e^{-\lambda O_{k,p}} - \delta_{k,p} \big) \delta_{\ell,q} \bigg) 
\end{align}
and thus 
\begin{align}
\sum_{k, \ell \in \pi_+^*}  & D_{p,q,k,\ell} ( \lambda) F_{k,\ell} \notag \\
 =& c_p  s_q^2 \sum_{k \in \pi_+^*}  c_k   \bigg[ \big( e^{\lambda \overline{O}_{-k,-p}} - \delta_{k,p}) -   \big( e^{-\lambda O_{k,p}} - \delta_{k,p} \big) \bigg] F_{k,q} \notag \\
&+ c_p s_q^2 \sum_{k \in \pi_+^*}  c_k \bigg[ \big(e^{-\lambda O_{k,-p}} - \delta_{k,-p}\big) - \big(e^{-\lambda O_{k,-p}} - \delta_{k,-p}\big)   \bigg] F_{k,-q}  \notag \\
&+ c_p^2 c_q  \sum_{\ell \in \pi_+^*} s_\ell \tau_\ell \bigg[ \big( e^{\lambda \overline{O}_{\ell,q}} - \delta_{\ell,q} \big) - \big( e^{\lambda \overline{O}_{\ell, q}} - \delta_{\ell,q} \big) \bigg) \bigg] F_{p,\ell}\notag \\
&+ c_p^2 c_q  \sum_{\ell \in \pi_+^*} s_\ell \tau_\ell  \bigg[  \big( e^{-\lambda O_{-\ell,q}} - \delta_{-\ell,q} \big) - \big(e^{\lambda \overline{O}_{\ell, -q}}  - \delta_{\ell,q} \big)\bigg] F_{-p,\ell} \notag \\
&+ c_pc_q\sum_{k, \ell \in \pi_+^*} c_ks_\ell \tau_\ell \bigg[  \big( e^{\lambda \overline{O}_{-k,-p}}  - \delta_{k,p} \big) \big( e^{\lambda \overline{O}_{\ell,q}} - \delta_{\ell,q} \big) \notag \\
& \hspace{5cm} - \big( e^{-\lambda O_{k,p}} - \delta_{k,p} \big) \big( e^{\lambda \overline{O}_{\ell, q}} - \delta_{\ell,q} \big) \bigg] F_{k, \ell} \notag \\
&+ c_pc_q\sum_{k, \ell \in \pi_+^*} c_ks_\ell \tau_\ell \bigg[  \big(e^{-\lambda O_{k,-p}} - \delta_{k,-p}\big) \big( e^{-\lambda O_{-\ell,q}}  - \delta_{-\ell,q} \big)\notag \\
& \hspace{5cm} -  \big(e^{-\lambda O_{k,-p}} - \delta_{k,-p}\big)  \big( e^{\lambda \overline{O}_{\ell, -q}} - \delta_{\ell,-q} \big) \bigg] F_{k, \ell} \label{eq:D-decomp}
\end{align}
whose terms we now estimate separately. For the first term of the r.h.s. of \eqref{eq:D-decomp} we find with Cauchy Schwarz 
\begin{align}
\sum_{p,q \in \pi_+^*} & s_q^4 \bigg\vert \sum_{k \in \pi_+^*}  \bigg[ \big( e^{\lambda \overline{O}_{-k,-p}} - \delta_{k,p}) -   \big( e^{-\lambda O_{k,p}} - \delta_{k,p} \big) \bigg] F_{k,q} \bigg\vert^2 \notag \\
& \leq \sum_{p,q \in \pi_+^*} s_q^4  \sum_{k_1 \in \pi_+^*}  \bigg[ \big( e^{\lambda \overline{O}_{-k_1,-p}} - \delta_{k_1,p}) -   \big( e^{-\lambda O_{k_1,p}} - \delta_{k_1,p} \big) \bigg]^2 \; \sum_{k_2 \in \pi_+^*} F_{k_2,q}^2 
\end{align}
By Taylor's theorem, we have $e^{\lambda \overline{O}_{-k,-p}} - \delta_{k,p} = \lambda \overline{O}_{-k,-p} e^{\theta_1 \overline{O}_{-k,-p}}$ resp. $e^{-\lambda O_{k,p}} - \delta_{k,p} = \lambda  O_{k,p} e^{\theta_2 O_{k,p}}$ for some $\theta_1, \theta_2 \in \mathbb{R}$, and since $ \| O \|_{\ell^2 ( \pi_+^*)\times \ell^2( \pi_+^*)} < \infty $ by assumption yielding $  \| O \|_{\ell^\infty \times \ell^\infty} < \infty$, we find 
\begin{align}
\sum_{p,q \in \pi_+^*} & s_q^4 \bigg\vert \sum_{k \in \pi_+^*}  \bigg[ \big( e^{\lambda \overline{O}_{-k,-p}} - \delta_{k,p}) -   \big( e^{-\lambda O_{k,p}} - \delta_{k,p} \big) \bigg] F_{k,q} \bigg\vert^2  \leq C \lambda^2 \| F \|_{\ell^2 ( \pi_+^*)\times \ell^2( \pi_+^*)}^2 \label{eq:D-decomp-bound1} 
\end{align}
where we used that $\sup_{q \in \pi_+^*} s_q^2 < \infty$ from \eqref{def:sigma}. The second, third and forth term of the r.h.s. of \eqref{eq:D-decomp} can be estimated similarly arriving at a similar bound as \eqref{eq:D-decomp-bound1}. For the fifth term of the r.h.s. of \eqref{eq:D-decomp} we recall that with the same arguments as before 
\begin{align}
\sum_{p,q \in \pi_+^*}  & \bigg\vert  \sum_{k, \ell \in \pi_+^*} \bigg[  c_ks_\ell \tau_\ell \bigg[  \big( e^{\lambda \overline{O}_{-k,-p}}  - \delta_{k,p} \big) \big( e^{\lambda \overline{O}_{\ell,q}} - \delta_{\ell,q} \big) - \big( e^{-\lambda O_{k,p}} - \delta_{k,p} \big) \big( e^{\lambda \overline{O}_{\ell, q}} - \delta_{\ell,q} \big) \bigg] F_{k, \ell}  \bigg\vert^2  \notag \\
 =&  \lambda^4  \sum_{p,q \in \pi_+^*} \bigg\vert  \sum_{k, \ell \in \pi_+^*} c_ks_\ell \tau_\ell  \bigg[ \overline{O}_{-k,-p} \overline{O}_{\ell,q} e^{\theta_1 \lambda \overline{O}_{-k,-p}}  e^{\theta_2 \lambda \overline{O}_{\ell,q}}   + O_{k,p} \overline{O}_{\ell,q} e^{\theta_3 \lambda O_{k,p}}  e^{\theta_4 \lambda \overline{O}_{\ell,q}} \bigg] F_{k, \ell}  \bigg\vert^2 
\end{align}
for some constants $\theta_i \in \mathbb{R}$ and we conclude using that $\sup_{q \in \pi_+^*} c_q \geq 1$ by Cauchy Schwarz and similar arguments as before 
\begin{align}
\sum_{p,q \in \pi_+^*}  & \bigg\vert  \sum_{k, \ell \in \pi_+^*} \bigg[  c_ks_\ell \tau_\ell \bigg[  \big( e^{\lambda \overline{O}_{-k,-p}}  - \delta_{k,p} \big) \big( e^{\lambda \overline{O}_{\ell,q}} - \delta_{\ell,q} \big) - \big( e^{-\lambda O_{k,p}} - \delta_{k,p} \big) \big( e^{\lambda \overline{O}_{\ell, q}} - \delta_{\ell,q} \big) \bigg] F_{k, \ell}  \bigg\vert^2 \notag \\
 \leq&  C  \lambda^4 \| F \|_{\ell^2 ( \pi_+^*) \times \ell^2 ( \pi_+^*)}^2 \; . 
\end{align}
The estimate for the remaining term of the r.h.s. of \eqref{eq:D-decomp} follows in the same way. We thus arrive for sufficiently small $\vert \lambda\vert $ at 
\begin{align}
\sum_{p,q \in \pi_+^*} \bigg\vert  \sum_{k, \ell \in \pi_+^*}  & D_{p,q,k,\ell} ( \lambda) F_{k,\ell}  \bigg\vert^2 \leq C \lambda^2 \| F \|_{\ell^2( \pi_+^*) \times \ell^2( \pi_+^*)}^2  \label{eq:estimate-D}
\end{align}
leading to the desired conclusion that $\sum_{k, \ell \in \pi_+^*}  D_{p,q,k,\ell} ( \lambda) F_{k,\ell}  \in \ell^2 ( \pi_+^*)\times \ell^2( \pi_+^*) $ for $F_{k,\ell} \in \ell^2 ( \pi_+^*) \times \ell^2 ( \pi_+^*)$, and in particular that the operator $T$ defined in \eqref{def:T} is well defined. 

Next, we aim to show that the operator $T$ defined in \eqref{def:T} has a unique fix point, and, thus, consequently that \eqref{eq:Fp2} has a unique solution in $\ell^2 ( \pi_+^*)\times \ell^2( \pi_+^*)$. For this, by Banach's fixed point theorem, it suffices to show that the operator $T$ is a contraction. In fact, as an immediate consequence of \eqref{eq:estimate-D} we have for $F^{(1)}, F^{(2)} \in \ell^2( \pi_+^*) \times \ell^2( \pi_+^*)$
\begin{align}
\| \big( T F^{(1)}  \big) ( \lambda) - \big( T  F^{(2)} \big) ( \lambda) \|_{\ell^2 ( \pi_+^*)\times \ell^2( \pi_+^*)}^2  =& \sum_{p,q \in \pi_+^*} \big\vert D_{p,q,k, \ell} ( \lambda) \big[ F^{(1)}_{k, \ell} - F^{(2)}_{k, \ell} \big] \bigg\vert^2 \notag \\
\leq& C \vert \lambda \vert \| F^{(1)} - F^{(2)} \|_{\ell^2( \pi_+^*) \times \ell^2( \pi_+^*)}^2 
\end{align}
and we finally conclude that for sufficiently small $\vert \lambda \vert $, the operator $T$ is a contraction and, therefore \eqref{eq:Fp2} has a unique solution in $\ell^2( \pi_+^*) \times \ell^2( \pi_+^*)$. 

\textit{Explicit solution:} We claim that 
\begin{align}
\widetilde{F}_{p,q} ( \lambda) = \sum_{k,\ell \in \pi_+^*}\sum_{j=0}^\infty \big( D_{p,q,k,\ell} ( \lambda) \big)^j A_{k, \ell} ( \lambda)  G( \lambda) \label{eq:F-solution}
\end{align}
solves \eqref{eq:Fp2} where we introduced the notation $\big( D_{p,q,k,\ell} ( \lambda)\big)^j$ for the operator that is recursively defined through 
\begin{align}
\big( D_{p,q,k,\ell} ( \lambda) \big)^0 = \delta_{p,k} \delta_{q,\ell} 
\end{align}
and 
\begin{align}
\label{eq:power-D}
\big( D_{p,q,k,\ell} ( \lambda) \big)^{j+1} = \sum_{m,n \in \pi_+^*}  D_{p,q,m,n} ( \lambda) \big( D_{m,n,k,\ell} ( \lambda) \big)^{j} \; . 
\end{align}
The claim follows by standard arguments for solution to integral equations, that we present here for consistency: 

First, note that the series on the r.h.s. of \eqref{eq:F-solution} converges: By definition \eqref{eq:power-D}, we have for $j\geq 1 $
\begin{align}
\sum_{p,q \in \pi_+^*}  & \bigg\vert \sum_{k, \ell \in \pi_+^*}\big( D_{p,q,k,\ell} ( \lambda) \big)^j A_{k, \ell} ( \lambda )\bigg\vert^2 \notag \\
=&   \sum_{p,q \in \pi_+^*}  \bigg\vert \sum_{m,n \in \pi_+^*}D_{p,q,m,n} ( \lambda) \sum_{k, \ell \in \pi_+^*}\big( D_{m,n,k,\ell} ( \lambda) \big)^{j-1}  A_{k, \ell} ( \lambda) \bigg\vert^2 
\end{align}
and thus from \eqref{eq:estimate-D} 
\begin{align}
\sum_{p,q \in \pi_+^*}  & \bigg\vert \sum_{k, \ell \in \pi_+^*}\big( D_{p,q,k,\ell} ( \lambda) \big)^j A_{k, \ell} ( \lambda) \bigg\vert^2 \notag \\
\leq& C \vert \lambda\vert    \sum_{m,n \in \pi_+^*} \bigg\vert \sum_{k, \ell \in \pi_+^*}\big( D_{m,n,k,\ell} ( \lambda) \big)^{j-1}  A_{k, \ell}\bigg\vert^2  \; . 
\end{align}
Iterating this step we thus arrive at 
\begin{align}
\sum_{p,q \in \pi_+^*}  & \bigg\vert \sum_{k, \ell \in \pi_+^*}\big( D_{p,q,k,\ell} ( \lambda) \big)^j A_{k, \ell}( \lambda) \bigg\vert^2 
\leq C^j \vert\lambda\vert^j  \| A \|_{\ell^2 ( \pi_+^*)\times \ell^2( \pi_+^*)}^2 \leq C^j \vert\lambda\vert^{j+1} 
\end{align}
where we concluded by \eqref{eq:A-estimate}. Thus, from \eqref{eq:F-solution} we find 
\begin{align}
\| \widetilde{F} ( \lambda) \|_{\ell^2( \pi_+^*) \times \ell^2( \pi_+^*)} \leq C_0 \vert\lambda\vert \sum_{j=0}^\infty ( C\vert\lambda\vert)^j  \label{eq:series-conv}
\end{align}
for some constants $C_0,C>0$ that is a finite series for sufficiently small $\vert\lambda\vert$. 

It remains to show, that $\widetilde{F}_{p,q} ( \lambda)$ given by \eqref{eq:F-solution} indeed solves \eqref{eq:Fp2}. For this we write using \eqref{eq:power-D} and \eqref{eq:series-conv}
\begin{align}
\widetilde{F}_{p,q} ( \lambda) =& \sum_{k,\ell \in \pi_+^*}\sum_{j=0}^\infty \big( D_{p,q,k,\ell} ( \lambda) \big)^j A_{k, \ell} ( \lambda) G( \lambda)  \notag \\
=& \sum_{k,\ell \in \pi_+^*}\lim_{n \rightarrow \infty} \sum_{j=0}^n \big( D_{p,q,k,\ell} ( \lambda) \big)^j A_{k, \ell} ( \lambda) G( \lambda)\notag \\
=& \lim_{n \rightarrow \infty} \bigg[ \sum_{k,\ell \in \pi_+^*}  \sum_{k_1,\ell_1 \in \pi_+^*}D_{p,q,k_1,\ell_2} ( \lambda) \sum_{j=1}^n \big( D_{k_1,\ell_1,k,\ell} ( \lambda) \big)^{j-1} A_{k, \ell} ( \lambda) G( \lambda) + A_{p,q} ( \lambda) G( \lambda) \bigg] \notag \\\
=& \lim_{n \rightarrow \infty} \bigg[ \sum_{k,\ell \in \pi_+^*}\sum_{k_1,\ell_1 \in \pi_+^*}D_{p,q,k_1,\ell_2} ( \lambda) \sum_{j=0}^{n-1} \big( D_{k_1,\ell_1,k,\ell} ( \lambda) \big)^{j} A_{k, \ell} ( \lambda) G( \lambda) + A_{p,q} ( \lambda) G( \lambda) \bigg] \notag \\\
=& \sum_{k,\ell \in \pi_+^*} \sum_{k_1,\ell_1 \in \pi_+^*}D_{p,q,k_1,\ell_2} ( \lambda) \lim_{n \rightarrow \infty}  \sum_{j=0}^{n-1} \big( D_{k_1,\ell_1,k,\ell} ( \lambda) \big)^{j} A_{k, \ell} ( \lambda) G( \lambda) + A_{p,q} ( \lambda) G( \lambda) \notag \\
=& \sum_{k,\ell \in \pi_+^*} \sum_{k_1,\ell_1 \in \pi_+^*}D_{p,q,k_1,\ell_2} ( \lambda) \widetilde{F}_{k_1,\ell_1 \in \pi_+^*}  + A_{p,q}( \lambda)  G( \lambda)
\end{align} 
and it follows that $\widetilde{F}_{p,q} ( \lambda)$ solves \eqref{eq:Fp2}. 

Summarizing, we have proven that $\widetilde{F}_{p,q} ( \lambda) \in \ell^2( \pi_+^*) \times \ell^2 ( \pi_+^*)$ given by \eqref{eq:F-solution} denotes the unique solution to \eqref{eq:Fp2}. Then, it follows from \eqref{eq:G1} together with \eqref{eq:derivG2} that 
\begin{align}
G'(\lambda) =& \sum_{p,q \in \pi_+^*} s_p c_q O_{p,q}\widetilde{F}_{p,q} ( \lambda)  + \sum_{p \in \Lambda_+^*} s_p^2 O_{p,p} \notag \\
=& \sum_{p,q,k, \ell \in \pi_+^*} s_p c_q O_{p,q} \big( D_{p,q,k,\ell} ( \lambda) \big)^j A_{k, \ell} ( \lambda) G( \lambda)  + \sum_{p \in \Lambda_+^*} s_p^2 O_{p,p}  
\end{align}
Therefore noticing that $G( 0) =1$ we have 
\begin{align}
\label{eq:final-G-O}
G( \lambda ) = \exp \bigg( \int_0^\lambda  \sum_{p,q, k \ell \in \pi_+^*} s_p c_q O_{p,q} \sum_{j=0}^\infty \big( D_{p,q,k,\ell} ( \lambda) \big)^j A_{k, \ell} ( \lambda ) d \lambda  + \lambda \sum_{p \in \Lambda_+^*} s_p^2 O_{p,p}  \bigg) 
\end{align}
and the integral is finite for sufficiently small $\vert \lambda \vert $  following from \eqref{eq:series-conv}.

\end{proof}

\section*{Acknowledgements} SR thanks Christian Brennecke and Phan Th\`anh Nam for helpful discussions. 

\section*{Declarations}

\subsection*{Funding}

SR is supported by the European Research Council via the ERC CoG RAMBAS - Project - Nr. 10104424.

\subsection*{Data availability}

Data sharing not applicable to this article as no datasets were generated or analyzed.

\subsection*{Conflict of interest}

The author has no competing interests to declare that are relevant to the content of this article.


\begin{thebibliography}{29}


\bibitem{WC-95} M. H. Anderson, J. R. Ensher, M. R. Matthews, C. E. Wieman, and E. A. Cornell. 
Observation of Bose-Einstein condensation in a dilute atomic vapor. {\em Science} 269, pp. 198--201, (1995). 




\bibitem{BCCS_cond} 
C. Boccato, C. Brennecke, S. Cenatiempo and B. Schlein. Complete {B}ose-{E}instein condensation in the {G}ross-{P}itaevskii regime. {\em Commun. Math. Phys.} 359, pp. 975--1026, (2018). 



\bibitem{BCCS} C. Boccato, C. Brennecke, S. Cenatiempo and B. Schlein. Bogoliubov Theory in the {G}ross-{P}itaevskii Limit. {\em Acta Mathematica} 222, pp.  219--335, (2019). 

\bibitem{BCCS_optimal} 
C. Boccato, C. Brennecke, S. Cenatiempo and B. Schlein. Optimal Rate for Bose-Einstein Condensation in the Gross-Pitaevskii Regime. {\em Commun. Math. Phys.} 376, pp. 1311--1395,  (2020).  

\bibitem{Bogoliubov-47} N. N. Bogoliubov. On the theory of superfluidity. {\em J. Phys. (USSR)} 11, pp. 23--32, (1947). 


\bibitem{Bose-24} S. Bose. Plancks Gesetz und Lichtquantenhypothese. {\em Z. Phys.} 26, pp. 178--181, (1924). 

\bibitem{BKS} G. Ben Arous, K. Kirkpatrick, B. Schlein, Central Limit Theorem in Many-Body Quantum Dynamics. {\em Commun. Math. Phys}, 321, pp. 371--417, (2013). 

\bibitem{BS} C. Brennecke and B. Schlein. Gross-Pitaevskii Dynamics for Bose-Einstein Condensates. {\em Analysis \& PDE} 12, pp. 1513-1596, (2019). 

\bibitem{BSS_optimal} 
C. Brennecke, B. Schlein and S. Schraven. Bose-{E}instein Condensation with Optimal Rate for Trapped Bosons in the {G}ross-{P}itaevskii Regime. {\em Math. Phys. Anal. Geom.} 25:12, (2022). 

\bibitem{BSS_bogo}
C. Brennecke, B. Schlein, and S. Schraven. Bogoliubov Theory for Trapped Bosons in the {G}ross-{P}itaevskii Regime. {\em Ann. Henri Poincar\'{e}} 23, pp. 1583--1658, (2022).  

\bibitem{COS} C. Caraci, J. Oldenburg, B. Schlein, Quantum Fluctuations of Many-Body Dynamics around the Gross-Pitaevskii Equation, \textit{Ann. Inst. H. Poincar\'{e} C Anal. Non Lin\'{e}aire}, published online first, (2024) . 

\bibitem{Einstein-24} A. Einstein. Quantentheorie des einatomigen idealen Gases.  {\em Sitzungsberichte der
Preu\ss ischen Akademie der Wissenschaften} XXII (1924), pp. 261--267.


\bibitem{K-95}  K. B. Davis, M. O. Mewes, M. R. Andrews, N. J. van Druten, D. S. Durfee, D. M.
Kurn, and W. Ketterle. Bose-Einstein Condensation in a Gas of Sodium Atoms. {\em Phys.
Rev. Lett.} 75, pp. 3969--3973, (1995). 

\bibitem{H} C. Hainzl. Another Proof of BEC in the {GP}-limit. \textit{J. Math. Phys}. 62 (5), pp. 459-485, (2021).

\bibitem{HST} C. Hainzl, B. Schlein and A. Triay. Bogoliubov theory in the {G}ross-{P}itaevskii limit: a simplified approach. {\em Forum Math. Sigma} 10 (e10), (2022).

\bibitem{KRS} K.~Kirkpatrick, S.~Rademacher, and B.~Schlein. A large deviation principle for many--body quantum dynamics. {\em Ann. Henri Poincar\'e}  22, pp. 2595--2618, (2021). 


\bibitem{LS-02} 
E.H. Lieb and R. Seiringer. Proof of Bose-Einstein Condensation for Dilute Trapped Gases. {\em Phys. Rev. Lett.} 88, p. 170409, (2002). 


\bibitem{LS} 
E.H. Lieb, and R. Seiringer, Derivation of the {G}ross-{P}itaevskii equation for rotating Bose gases. {\em Commun. Math. Phys.} 264, pp. 505--537, (2006). 


\bibitem{LNSS} 
M. Lewin, P.T. Nam, S. Serfaty and J.P. Solovej. Bogoliubov spectrum of
interacting Bose gases. {\em Comm. Pure Appl. Math.} 68, pp. 413--471, (2014). 

\bibitem{N21} P.T. Nam, Bogoliubov excitation spectrum of Bose gases, In: A.  Hujdurovi\'{c}, K. Kutnar, D. Marusic, S. Miklavic, T. Pisanski, P. Sparl (eds): European Congress of Mathematics, \textit{EMS Press}, (2021). 


\bibitem{NNRT} 
P.T. Nam, M. Napi\'{o}rkowski, J. Ricaud, and A. Triay. Optimal rate of condensation for trapped bosons in the Gross-Pitaevskii regime. {\em Analysis \& PDE} 15, pp.  1585--1616, (2022). 

\bibitem{NR}
P.T. Nam, S. Rademacher, Exponential bounds of the condensation for dilute Bose gases, \textit{Trans. Amer. Math. Soc.}, 378, pp. 3229-3278, (2025). 

\bibitem{NRS} 
P.T. Nam, N. Rougerie and  R. Seiringer.  Ground states of large bosonic systems: The {G}ross-{P}itaevskii limit revisited. {\em Analysis \& PDE}  9, pp. 459--485, (2016). 


\bibitem{NT} P.T. Nam and  A. Triay. {B}ogoliubov excitation spectrum of trapped Bose gases in the {G}ross-{P}itaevskii regime. {\em J. Math. Pures Appl.} (to appear), Preprint: arXiv:2106.11949 (2021).  


\bibitem{Nap18} M. Napi\'{o}rkowski, Recent Advances in the Theory of Bogoliubov Hamiltonians. In: D. Cadamuro, M. Duell, W. Dybalski, S. Simonella, S. (eds): Macroscopic Limits of Quantum Systems. MaLiQS 2017, \textit{Springer Proceedings in Mathematics \& Statistics}, vol. 270, (2018)


\bibitem{PRV} L. Portinale, S. Rademacher, D. Virosztek, Limit theorems for empirical measures of interacting quantum systems in Wasserstein space, Preprint: arXiv:2312.0054, (2023). 

\bibitem{Rsing} S. Rademacher, Central limit theorem for Bose gases interacting through singular potentials. \textit{Lett. Math. Phys.} 110, 2143-2174 (2020).

\bibitem{R} S.~Rademacher. Large deviations for the ground state of weakly interacting {B}ose gases. {\em Ann. Henri Poincar\'e}, (2024). 


\bibitem{RS} S. Rademacher, B. Schlein. Central limit theorem for Bose-Einstein condensates. \textit{J. of Math. Phys.}, 60(7):071902,  (2019). 

\bibitem{RSe}
S.~Rademacher and R.~Seiringer. Large deviation estimates for weakly interacting bosons. {\em J. Stat. Phys.} 188  (9), (2022). 

\bibitem{S22} B. Schlein, Bose gases in the Gross-Pitaevskii limit: A survey of some rigorous results, In: R. Frank, A. Laptev, M. Lewin, R. Seiringer (eds): The Physics and Mathematics of Elliott Lieb, \textit{EMS Press}, vol. II, (2022). 

\bibitem{Simon} B. Simon, The $P( \Phi)_2$ Euclidean Quantum Field Theory, \textit{Princeton Series in Physics}, 1974. 

\bibitem{SolovejLN} J. P. Solovej, Lecture notes on: Many Body Quantum Mechanics, available online \href{https://web.math.ku.dk/solovej/MANYBODY/mbnotes-ptn-5-3-14.pdf}{https://web.math.ku.dk/solovej/MANYBODY/mbnotes-ptn-5-3-14.pdf}.  

\end{thebibliography}
\end{document}